\newtheorem{theorem}{Theorem}
\newtheorem{remark}{Remark}
\newtheorem{proposition}{Proposition}
\newtheorem{example}{Example}
\newtheorem{corollary}{Corollary}
\newtheorem{lemma}{Lemma}
\newtheorem{definition}{Definition}
\begin{document}
\title[Arbitrage and Hedging in a Non Probabilistic Framework ]{Arbitrage and Hedging in a non probabilistic framework}
\author[Alvarez, Ferrando and Olivares]{A. Alvarez, S. Ferrando and P. Olivares\\
Department of Mathematics, Ryerson University}

\maketitle

\begin{abstract}
The paper studies the concepts of hedging and arbitrage in a non
probabilistic framework. It provides conditions for non
probabilistic arbitrage based on the topological structure of the
trajectory space and makes connections with the usual notion of
arbitrage. Several examples illustrate the non probabilistic arbitrage as well perfect replication of
options under continuous and discontinuous trajectories, the results
can then be applied in probabilistic models path by path. The
approach is related to recent financial models that go beyond
semimartingales, we remark on some of these connections and provide
applications of our results to some of these models.
\end{abstract}

\section{Introduction} \label{sec:introduction}

\noindent
Modern mathematical finance relies on the notions of arbitrage and hedging replication; generally,
these ideas are exclusively cast
in probabilistic frameworks.
The possibility of dispensing with probabilities  in mathematical finance, wherever possible, may have already occurred to several researchers.
A reason for this is that hedging results do not depend on the actual probability distributions but on the support of the probability measure.
Actually, hedging is clearly a pathwise notion and a simple view of arbitrage is that there is a portfolio that will no produce any
loss for all possible paths and there exists at least one path that will provide a profit. This
informal reasoning suggests that there is no need to use probabilities
to define the concepts even though probability has been traditionally used to do so.
The paper makes an attempt to study these two notions without probabilities in a direct and simple way.

From a technical point of view, we rely on a simple calculus for non differentiable
functions introduced in \cite{follmer} (see also \cite{sonderman}). This calculus is available
for a fairly large class of functions. Hedging results that only depend
on this pathwise calculus can be considered independently of probabilistic
assumptions. In \cite{bick} the authors take this point of view
and develop a discrete framework, and its associated limit, to
hedge continuous paths with a prescribed $2$-variation.
Reference \cite{bick} is mostly devoted to payoff replication for continuous trajectories and does not address the issue of non probabilistic arbitrage.
The present paper formally defines this last notion in a context that allows for trajectories with jumps,
develops some of the basic consequences and presents some simple applications including novel results to probabilistic frameworks.

Results on the existence of arbitrage in a
non standard framework (i.e. a non semimartingale price process) leads to interesting and challenging problems.
In order to gain a perspective on this issue, recall that a consequence of
the fundamental theorem of asset pricing of Delbaen and Schachermayer
in \cite{delbaen} is that under the NFLVR condition and considering simple
predictable portfolio strategies, the price process of the risky asset
necessarily must be a semi-martingale. Recent literature (\cite{cheridito}, \cite{jarrow}, \cite{russo}, \cite{valkeila-2}) describes pricing results in non
semi-martingale settings; the restriction
of the possible portfolio strategies has been a central issue in these works.  In \cite{cheridito} the permissible portfolio
strategies are restricted to those for which the time between consecutive
transactions is bounded below by some number $h$.  In \cite{jarrow}
these results are extended, also considering portfolios having a
minimal fixed time between successive trades. In \cite{russo} the notion
of $\mathcal{A}$-martingale is introduced in order to have the no-arbitrage
property for a given class $\mathcal{A}$ of admissible strategies.

We also treat this problem but with a different perspective, our
main object of study is a class of trajectories $\mathcal{J} = \mathcal{J}(x_0)$
starting at $x_0$. To this set of deterministic trajectories we associate a
class of admissible portfolios $\mathcal{A} = \mathcal{A}_{\mathcal{J}(x_0)}$ which, under some
conditions, is free of arbitrage and allow for perfect replication
to take place. These two notions, arbitrage and hedging, are defined
without probabilities. Once no arbitrage and hedging have been
established for the non probabilistic market model
($\mathcal{J}(x_0), \mathcal{A}_{\mathcal{J}(x_0)})$, these results could be used to
provide a fair price for the option being hedged. These pricing
results will not be stated explicitly in the paper and will be left
implicit.

Some technical aspects from our approach relate to the presentation in \cite{valkeila-2},
similarities with \cite{valkeila-2} are expressed mainly in the
use of a {\it continuity} argument which is also related to a {\it small ball}
property. Our approach eliminates the probability
structure altogether and replaces it with appropriate classes of trajectories;
the new framework also allows to accommodate continuous and discontinuous trajectories.

In summary, our work intends to develop a probability-free framework
that allows us to price by hedging and no-arbitrage. The results,
obtained under no probabilistic assumptions, will depend however, on
the topological structure of the possible trajectory space and a
restriction on the admissible portfolios by requiring a certain type
of continuity property. We connect our non probabilistic models with
stochastic models in a way that arbitrage results can be translated
from the non probabilistic models to stochastic models, even if
these models are not semi-martingales. The framework handles
naturally general subsets of the given trajectory space, this is not
the case in probabilistic frameworks that rely in incorporating
subsets of measure zero in the formalism.

The paper is organized as follows. Section
\ref{nonProbabilisticFramework} briefly introduces some of the
technical tools we need to perform integration with respect to
functions of finite quadratic variation and defines the basic
notions of the non probabilistic framework. Section
\ref{nonProbabilisticArbitrage} proves two theorems, they are key
technical results used throughout the rest of the paper. The
theorems provide a tool  connecting the usual notion of arbitrage
and non probabilistic arbitrage. Section
\ref{hedgingAndNoArbitrageInClasses}  introduces classes of modeling
trajectories, we prove a variety of non probabilistic hedging and no
arbitrage results for these classes. Section \ref{stoch-examples}
presents several examples in which we apply the non probabilistic
results to obtain new pricing results in several non standard
stochastic models. Appendix \ref{pathwiseIto} provides some
information on the analytical version of Ito formula that we rely
upon. Appendix \ref{technicalResults} presents some technical results needed in our developments.

\section{Non Probabilistic Framework}  \label{nonProbabilisticFramework}

We make use of the definition of integral with respect to functions with
unbounded variation but with finite quadratic variation  given in
\cite{follmer}.

Let $T>0$ be a fixed real number and let
$\mathcal{T} \equiv \left\{\tau_n\right\}_{n=0}^{\infty}$ where
\begin{equation} \nonumber
\tau_n=\left\{0=t_{n,0}< \cdots < t_{n,K(n)}=T\right\}
\end{equation}
are partitions of $[0,T]$ such that:
\begin{equation} \nonumber
\text{mesh}(\tau_n)=\max_{t_{n,k} \in \tau_n}|t_{n,k}-t_{n,k-1}| \to 0.
\end{equation}

Let $x$ be a real function on $[0, T]$ which is right
continuous and has left limits (RCLL for short), the space of such functions will be denoted by $\mathcal{D}[0,T]$. The following notation will be used, $\Delta
x_t=x_t-x_{t-}$ and $\Delta x_t^2=(\Delta
x_t)^2$.

Financial transactions will take place at times belonging to the above discrete grid but, otherwise, time will be treated continuously, in particular, the
values $x(t)$ could be observed in a continuous way.

A real valued RCLL function $x$ is of quadratic variation along $\mathcal{T}$
if the discrete measures
\begin{equation*}
\xi_n=\sum_{t_i \in \tau_n}(x_{t_{i+1}}-x_{t_i})^2\epsilon_{t_i}
\end{equation*}
converge weakly to a Radon measure $\xi$ on $[0, T]$ whose atomic part is
given by the quadratic jumps of $x$:
\begin{equation}  \label{generalDecomposition}
[x]_{t}^{\mathcal{T}}=\langle x \rangle_t^{\mathcal{T}}+\sum_{s \leq t}\Delta x_s^2,
\end{equation}
where $[x]^{\mathcal{T}}$ denotes the distribution function of
$\xi$ and $\langle x \rangle^{\mathcal{T}}$ its continuous part.

Considering $x$ as above and $y$ to be a function on $[0,T] \times
\mathcal{D}$, we will formally define the F\"ollmer's integral of $y$
respect to $x$ along $\tau$ over the interval $[0,t]$ for every $0
<t \leq T$. We should note that while the integral over $[0,t]$ for
$t<T$ will be defined in a proper sense, the integral over $[0,T]$
will be defined as an improper F\"ollmer's integral.

\begin{definition}\label{forward-integral}
Let $0<t<T$ and $x$ and $y$ as above, the F\"ollmer's integral of $y$
with respect to $x$ along $\mathcal{T}$ is given by
\begin{equation}\label{forward-int}
\int_0^t y(s,x) ~dx_s =\lim_{n \to \infty} \sum_{\tau_n \ni t_{n,i} \leq t}~ y({t_{n,i}}, x)~
(x(t_{n,i+1}) - x({t_{n,i}})),
\end{equation}
provided the limit in the right-hand side of (\ref{forward-int})
exists. The F\"ollmer's  integral over the whole interval $[0,T]$ is
defined in an improper sense:
\begin{equation} \nonumber
\int_0^T y(s, x)~ dx_s = \lim_{t \to T}  \int_0^t y(s, x)~ dx_s,
\end{equation}
provided the limit exists.
\end{definition}
Consider $\phi \in \mathcal{C}^1(\mathbb{R})$ (i.e. a function with
domain $\mathbb{R}$ and first derivative continuous), take $y(s, x)
\equiv \phi(x(s^-))$ then (\ref{forward-int}) exists. More
generally, if $y(s, x)= \phi(s, x(s^-), g_1(t,x^-), \ldots,
g_m(t,x^-))$ where the $g_i(\cdot,x)$ are functions of bounded
variation (that may depend on the past values of the trajectory $x$
up to time $t$) then (\ref{forward-int}) exists. Moreover, in these
two instances an Ito formula also holds, we refer to Appendix
\ref{pathwiseIto} for some details. Several of our non probabilistic
arbitrage arguments will depend only on assuming the {\it existence}
of integrals of the form $\int_0^t \phi(s, x) dx_s$ for a given
generic integrand $\phi(s, x)$ that (potentially) depends on all the
path values $x(t)$, $0 \leq t < s$, in these instances, and for the
sake of generality, we will work under this general assumption.

Next, we introduce the concepts of predictability, admissibility and
self-financing in a non probabilistic setting. The NP prefix will be
used throughout the paper indicating some non probabilistic concept.
For a given real number $x_0$, the central modeling object is a set of
trajectories $x$ starting at $x_0$, so $x$$:$$[0,T] \rightarrow \mathbb{R}$
with $x(0) = x_0$. We will assume that these functions are RCLL and
belong to a given {\it set of trajectories} $\mathcal{J}(x_0)$. In order
to easy the notation, this last class may be written as $\mathcal{J}$
when the point $x_0$ is clear from the context.

Some of our results apply to rather general trajectory classes, particular trajectory classes
will be needed to deal with hedging results and applications to classical models
and will be introduced at due time.

We assume the existence of a non risky asset with interest
rates $r \geq 0$ which, for simplicity, we will assume constant,
and a risky asset whose price trajectory belongs to a function class $\mathcal{J}(x_0)$.
For convenience, in several occasions, we will restrict our arguments to the case $r=0$.

\vspace{.1in}
A NP-portfolio $\Phi$ is a function $\Phi$$:$ $[0, T] \times
\mathcal{J}(x_0) \rightarrow \mathbb{R}^2$, $\Phi =(\psi, \phi)$,
satisfying $\Phi(0,x)=\Phi(0,x')$ for all $x, x' \in \mathcal{J}(x_0)$.
We will also consider the associated projection functions $\Phi_x$$:$$ [0,T]
\rightarrow \mathbb{R}^2$ and $\Phi_t$$:$$ \mathcal{J}(x_0) \rightarrow
\mathbb{R}^2$, for fixed $x$ and $t$ respectively.

The value of a NP-portfolio $\Phi$ is the function $V_{\Phi}$$:$$
[0,T]\times \mathcal{J}(x_0) \rightarrow \mathbb{R}$ given by:
\begin{equation}  \nonumber 
V_{\Phi}(t,x) \equiv \psi(t, x)+ \phi(t,x) ~x(t).
\end{equation}

\begin{definition} \label{def:NP-conditions}
Consider a class $\mathcal{J}(x_0)$ of trajectories starting at $x_0$:

\begin{itemize}
\item [i)] A portfolio $\Phi$ is said to be NP-predictable
if $\Phi_t(x) = \Phi_t(x')$ for all $x, x' \in \mathcal{J}(x_0)$
such that $x(s) = x'(s)$ for all $0 \leq s < t$ and $\Phi_x(\cdot)$
is a left continuous function with right limits (LCRL functions for
short) for all $x \in \mathcal{J}(x_0)$.
\item [ii)] A portfolio $\Phi$ is said to be NP-self-financing if the integrals $\int_0^t \psi(s,x)~ ds$
and $\int_0^t \phi(s,x)~ d x_s$ exist for all $x \in
\mathcal{J}(x_0)$ as a Stieljes and F\"ollmer integrals
respectively, and
\begin{equation} \nonumber
V_{\Phi}(t,x)=V_0 + \int_0^t  \psi(s, x) ~r~ds + \int_0^t \phi(s, x) d x_s, \; \forall x \in \mathcal{J}(x_0),
\end{equation}
where $V_0=V(0,x)=\psi(0,x) + \phi(0,x) ~~x(0)$ for any $x \in \mathcal{J}(x_0)$.
\item [iii)] A portfolio $\Phi$ is said to be NP-admissible if $\Phi$ is NP-predictable,
NP-self-financing and $V_{\Phi}(t,x) \geq - A $, for a constant $A = A(\Phi) \geq 0$, for all $t \in [0, T]$ and all $x \in \mathcal{J}(x_0)$.
\end{itemize}
\end{definition}
\begin{remark}
\noindent
\begin{enumerate}
\item Two identical trajectories up to time $t$ will lead to identical portfolio strategies
up to time $t$.
\item Notice that the notion of NP-admissible portfolio is relative to a given class of trajectories $\mathcal{J}$,
classes of NP portfolios will be denoted  $\mathcal{A}_{\mathcal{J}}$ or $\mathcal{A}$ for simplicity .
\end{enumerate}
\end{remark}

The following definition is central to our approach.

\begin{definition}\label{def:NP-market}
A NP-market is a pair $(\mathcal{J},\mathcal{A})$ where $\mathcal{J}$ represents a class of
possible trajectories for a risky asset and
 $\mathcal{A}$ is an admissible class of portfolios .
\end{definition}

The following definition provides the notion of arbitrage in a non probabilistic framework.
\begin{definition} \label{nPArbitrage}
We will say that there exists NP-arbitrage in the NP-market $(\mathcal{J},\mathcal{A})$ if there
exists a portfolio $\Phi \in \mathcal{A}$
such that $V_{\Phi}(0,x)=0$ and $V_{\Phi}(T,x) \geq 0$  for all $x \in \mathcal{J}$,
and there exists at least one
trajectory $x^* \in \mathcal{J}$ such that $V_{\Phi}(T,x^*) > 0$. If no NP-arbitrage exists
then we will say that the NP-market $(\mathcal{J},\mathcal{A})$ is NP-arbitrage-free.
\end{definition}

The notion of probabilistic market that we use through the paper
is similar to the one in \cite{valkeila-2}.
Assume a filtered probability space $(\Omega, \mathcal{F},
(\mathcal{F}_t)_{0 \leq t \leq T}, \mathbb{P})$ is given. Let $Z$ be an
adapted stochastic process modeling asset prices defined on this
space.

A portfolio strategy $\Phi^z$ is a
pair of stochastic processes $\Phi^z=(\psi^z, \phi^z)$. The value of
a portfolio $\Phi^z$ at time $t$ is a random variable given by:
\[
V_{\Phi^z}(t)=\psi^z_t+ \phi^z_t Z_t.
\]

A portfolio $\Phi^z$ is self-financing if the integrals $\int_0^t
\psi^z_s(\omega) ds$ and $\int_0^t \phi^z_s(\omega) d Z_s(\omega)$
exist $\mathbb{P}$~-a.s. as a Stieltjes integral and a F\"ollmer stochastic integral
respectively and
\begin{equation} \nonumber
V_{\Phi^z}(t)=V_{\Phi^z}(0)+r \int_0^t  \psi^z_s ds + \int_0^t
\phi^z_s d Z_s, \; \mathbb{P}~-a.s.
\end{equation}
 A portfolio $\Phi^z$ is admissible if $\Phi^z$ is self-financing, predictable (i.e. measurable with respect to $\mathcal{F}_{t-}$)
 and there exists
 $A^z = A^z(\Phi^z)  \geq 0$ such that $V_{\Phi^z}(t) \geq -~A^z$ $\mathbb{P}$~-a.s. $\forall ~t \in [0,T]$.

\begin{definition}\label{stoch-market}
A stochastic market defined on a filtered probability space
$(\Omega, \mathcal{F}, (\mathcal{F}_t)_{t \geq 0}, \mathbb{P})$ is a
pair $(Z,\mathcal{A}^Z)$ where $Z$ is an adapted stochastic process
modeling asset prices and $\mathcal{A}^Z$ is a class of admissible
portfolio strategies.
\end{definition}
\begin{remark}
We assume $\mathcal{F}_0$ is the trivial sigma algebra, furthermore, without loss of generality, we will assume
that the constant $z_0 = Z(0, w)$ is fixed, i.e. we assume the same initial value for all paths.
The constant $V_{\Phi^z}(0, w)$ will also be denoted $V_{\Phi^z}(0, z_0)$.
\end{remark}

\noindent The notion of arbitrage in a probabilistic market is the classical notion
of arbitrage (which in this paper will be referred simply as {\it arbitrage}).
The market $(Z,\mathcal{A}^Z)$ defined over $(\Omega, \mathcal{F},
(\mathcal{F}_t)_{t \geq 0}, \mathbb{P})$ has arbitrage opportunities
if there exists $\Phi^z \in \mathcal{A}^Z$ such that
$V_{\Phi^z}(0)=0$ and $V_{\Phi^z}(T) \geq 0$  $\mathbb{P}$~-a.s.,
and $\mathbb{P}(V_{\Phi^z}(T)>0)>0$.

\section{Non Probabilistic Arbitrage Results} \label{nonProbabilisticArbitrage}

Our technical approach to establish NP arbitrage results is to link them to
classical arbitrage results. Somehow surprisingly, this connection will allow
us to apply the so obtained NP results to prove non existence of arbitrage results in new probabilistic markets.
This section provides two basic theorems, Theorem \ref{main-no-arbitrage}
allows to construct NP markets free of arbitrage from a given
arbitrage free probabilistic market. Applications of this theorem are given
in Section \ref{hedgingAndNoArbitrageInClasses}. Theorem \ref{main-no-arbitrageDual}
presents a dual result allowing to construct probabilistic, arbitrage free,
markets from a given NP market which is arbitrage free. Applications of this
theorem are given in Section \ref{stoch-examples}.

In order to avoid repetition we will make the following standing
assumption for the rest of the section: for all the set of
trajectories $\mathcal{J}$ and price processes $Z$ to be considered,
there exists a metric space $(\mathcal{S}, d)$ satisfying
$\mathcal{J} \subseteq \mathcal{S}$ and $Z(\Omega) \subseteq
\mathcal{S}$ up to  a set of measure zero. All topological notions considered in the paper are relative to this metric space. Examples in later
sections will use the uniform distance $d(x,y)=
||x-y||_{\infty}$ where $||x||_{\infty} \equiv \sup_{s \in [0,T]}
|x(s)|$ and $\mathcal{S}$ the set of continuous functions $x$ with
$x(0) = x_0 = z_0$. For trajectories with jumps, later sections will
use the Skorohod distance, denoted by $d_s$,  and $\mathcal{S}$ the set of RCLL
functions $x$  with $x(0) = x_0 = z_0$.

While the main results in this section can be formulated in terms
of isomorphic and V-continuous portfolios (see Definitions \ref{isomorphic} and \ref{vContinuity}),
the presentation makes use the
following weaker notion of connected portfolios; this approach provides
stronger results.

\begin{definition} \label{npConnectedToProbabilistic}
Let $(\mathcal{J},\mathcal{A})$ and $(Z, \mathcal{A}^Z)$ be
respectively NP and stochastic markets. $\Phi \in \mathcal{A}$ is
said to be connected to $\Phi^z \in \mathcal{A}^Z$ if the following
holds in a set of full measure:
\begin{equation} \nonumber
V_{\Phi^z}(0,z_0)=V_{\Phi}(0, x_0)
\end{equation}
and for any fixed $x \in \mathcal{J}$ and arbitrary $\rho
>0$ there exists $\delta = \delta(x, \rho) >0$ such that
\begin{equation}  \label{weakLowerContinuity}
\mbox{if}~~d(Z(w), x) < \delta~\mbox{then}~~V_{\Phi^z}(T,\omega)
\geq V_{\Phi}(T, x)- \rho.
\end{equation}
\end{definition}

Given a class of stochastic portfolios $\mathcal{A}^Z$, Section
\ref{hedgingAndNoArbitrageInClasses} constructs NP-admissible
portfolios $\Phi \in \mathcal{A}$ with the goal of obtaining NP arbitrage
free markets $(\mathcal{J}, \mathcal{A})$. Each such collection of
portfolios  $\mathcal{A}$ is defined as the largest class of NP
admissible portfolios connected to an element from $\mathcal{A}^Z$.
Here is the required definition.

\begin{definition}\label{def:connection}
Let $(Z,\mathcal{A}^Z)$ be a stochastic market on $(\Omega,
\mathcal{F}, \mathcal{F}_t, P)$, and $\mathcal{J}$  a set of
trajectories. Define:
\begin{equation} \nonumber
[\mathcal{A}^Z] \equiv \{\Phi: \Phi ~\mbox{is NP-admissible},~~
~~\exists~~  \Phi^z \in \mathcal{A}^Z ~\mbox{s.t.}~~\Phi~~~~\mbox{is
connected to} ~~\Phi^z\}.
\end{equation}
\end{definition}

\begin{theorem}   \label{main-no-arbitrage}
Let $(Z, \mathcal{A}^Z)$ be a stochastic market and $\mathcal{J}$  a
set of trajectories. Furthermore, assume the following conditions are satisfied:

\vspace{.05in} \noindent $C_0:$~~~~ $Z(\omega) \subseteq
\mathcal{J}$~a.s.

\noindent
$C_1:$~~~~$Z$ satisfies a small ball property with respect to
the metric $d$ and the space $\mathcal{J}$, namely $\mbox{for all}~~~ \epsilon>0$:
\[
P\left(d(Z,x)<\epsilon \right)>0, \; \forall x \in \mathcal{J}.
\]

\vspace{.05in} \noindent
Then, the following statement holds.

\vspace{.05in} \noindent If $(Z, \mathcal{A}^Z)$ is arbitrage free
then $(\mathcal{J}, [\mathcal{A}^Z])$ is NP-arbitrage free.
\end{theorem}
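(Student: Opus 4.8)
The plan is to argue by contraposition: assuming NP-arbitrage exists in $(\mathcal{J}, [\mathcal{A}^Z])$, I will produce a classical arbitrage in $(Z, \mathcal{A}^Z)$. So suppose $\Phi \in [\mathcal{A}^Z]$ is an NP-arbitrage portfolio, i.e.\ $V_{\Phi}(0, x_0) = 0$, $V_{\Phi}(T, x) \geq 0$ for all $x \in \mathcal{J}$, and $V_{\Phi}(T, x^*) > 0$ for some $x^* \in \mathcal{J}$. By definition of $[\mathcal{A}^Z]$, there is a stochastic portfolio $\Phi^z \in \mathcal{A}^Z$ connected to $\Phi$ in the sense of Definition \ref{npConnectedToProbabilistic}. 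The claim is that $\Phi^z$ is a classical arbitrage.

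First I would check $V_{\Phi^z}(0, z_0) = 0$: this is immediate since connectedness gives $V_{\Phi^z}(0,z_0) = V_{\Phi}(0,x_0) = 0$ on a set of full measure. Next, the a.s.\ nonnegativity of the terminal value: fix $\omega$ in the full-measure set where $Z(\omega) \in \mathcal{J}$ (condition $C_0$) and where the connection relation holds. For arbitrary $\rho > 0$, the continuity estimate \eqref{weakLowerContinuity} applied with $x = Z(\omega) \in \mathcal{J}$ and $\delta = \delta(Z(\omega),\rho)$ — trivially $d(Z(\omega), Z(\omega)) = 0 < \delta$ — yields $V_{\Phi^z}(T,\omega) \geq V_{\Phi}(T, Z(\omega)) - \rho \geq -\rho$, using $V_{\Phi}(T, \cdot) \geq 0$ on $\mathcal{J}$. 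Since $\rho > 0$ is arbitrary, $V_{\Phi^z}(T,\omega) \geq 0$, hence $V_{\Phi^z}(T) \geq 0$ $\mathbb{P}$-a.s.

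The substantive step is showing $\mathbb{P}(V_{\Phi^z}(T) > 0) > 0$; this is where the small ball property $C_1$ enters and is the main obstacle. Since $V_{\Phi}(T, x^*) > 0$, set $\rho = \tfrac{1}{2} V_{\Phi}(T, x^*) > 0$ and let $\delta = \delta(x^*, \rho) > 0$ be the radius from the connection relation. By $C_1$, $\mathbb{P}(d(Z, x^*) < \delta) > 0$. On the event $\{d(Z,x^*) < \delta\}$ intersected with the full-measure set where the connection holds (still of positive probability), \eqref{weakLowerContinuity} gives $V_{\Phi^z}(T,\omega) \geq V_{\Phi}(T,x^*) - \rho = \rho > 0$. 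Therefore $\mathbb{P}(V_{\Phi^z}(T) > 0) \geq \mathbb{P}(d(Z,x^*)<\delta) - \mathbb{P}(N) > 0$ where $N$ is the relevant null set, so $\mathbb{P}(V_{\Phi^z}(T) > 0) > 0$. This exhibits $\Phi^z$ as a classical arbitrage in $(Z, \mathcal{A}^Z)$, contradicting the hypothesis; hence $(\mathcal{J}, [\mathcal{A}^Z])$ is NP-arbitrage-free.

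I expect the only real care needed is bookkeeping of the several full-measure sets (the one from the connectedness of $\Phi$ to $\Phi^z$, the one from $C_0$, and the admissibility exceptional sets), checking that their intersection still meets $\{d(Z,x^*) < \delta\}$ with positive probability, and confirming that $\Phi^z \in \mathcal{A}^Z$ is genuinely admissible so that it is an eligible competitor in the classical arbitrage definition — but this last point is guaranteed by $\Phi^z \in \mathcal{A}^Z$ from the outset.
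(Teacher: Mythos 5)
Your proposal is correct and follows essentially the same route as the paper's proof: both use the connection relation at $x = Z(\omega)$ (where $d(Z(\omega),x)=0<\delta$) together with $C_0$ to get $V_{\Phi^z}(T)\geq 0$ a.s., and then the small ball property $C_1$ around $x^*$ to obtain a positive-probability event on which $V_{\Phi^z}(T)\geq V_{\Phi}(T,x^*)-\rho>0$ for $\rho$ small. The only cosmetic difference is that you phrase it as a direct contraposition with an explicit choice $\rho=\tfrac12 V_\Phi(T,x^*)$, while the paper wraps the nonnegativity step in a nested contradiction; the content is identical.
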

\begin{proof}
We proceed to prove the statement by contradiction. Suppose then,
that there exists a NP-arbitrage portfolio $\Phi \in
[\mathcal{A}^z]$; therefore $V_{\Phi}(0,x) =0$ and $V_{\Phi}(T,x)
\geq 0$ for all $ x \in \mathcal{J}$ and there is also $x^{\ast} \in
\mathcal{J}$ satisfying $V_{\Phi}(T,x^{\ast}) > 0$. From the
definition of $[\mathcal{A}^z]$, it follows that there exists
$\Phi^z \in \mathcal{A}^z$ connected to $\Phi$ satisfying
$V_{\Phi^z}(0, z_0) = V_{\Phi}(0, x_0) =0$. Using $C_0$, consider the case
when there exist $\hat{\Omega}$, a measurable set of full measure,
such that $Z(\omega) \subseteq \mathcal{J}$ hods for all $w \in
\hat{\Omega}$. Assume further, there exists a measurable set
$\hat{\Omega}_1 \subseteq \hat{\Omega}$ with $P(\hat{\Omega}_1) >0$
such that $V_{\Phi^z}(T, \omega) < 0$ holds for all $\omega \in
\hat{\Omega}_1$. The relation ``$\Phi$ is connected to $\Phi^z$''
holds in a set of full measure which is independent on any given
$x$, then, we may assume without loss of generality that
(\ref{weakLowerContinuity}) holds for all $w \in \hat{\Omega}_1$.
Consider an arbitrary $\hat{\omega} \in \hat{\Omega}_1$ and use the
notation $x \equiv Z(\hat{\omega}) \in \mathcal{J}$; for an
arbitrary $\rho >0$ we then have: $V_{\Phi^z}(T, \hat{\omega}) \geq
V_{\Phi}(T, x) - \rho$; $\rho$ being arbitrary, this gives a
contradiction. Therefore, $V_{\Phi^z}(T, \omega) \geq 0$ a.s. holds.
Consider now $x^{\ast}$ fixed as above, an arbitrary  $\rho >0 $ and
$\delta > 0$ given by the fact that $\Phi$ is connected to $\Phi^z$.
Condition $C_1$ implies that the set $B_{\rho} = \{w: d(Z(w),
x^{\ast}) < \delta \}$ satisfies $P(B_{\rho}) >0$ for any $\rho >0$,
then using (\ref{weakLowerContinuity}) we obtain $V_{\Phi^z}(T,
\omega) \geq V_{\Phi}(T, x^{\ast})- \rho$, which we may assume
without loss of generality holds for all $w \in B_{\rho}$. Clearly,
$V_{\Phi}(T, x^{\ast}) >0$ being fixed, there exist a small
$\rho^{\ast} >0$ such that $V_{\Phi^z}(T, \omega) > 0$ for all $w
\in B_{\rho^{\ast}}$ and $P(B_{\rho^{\ast}})
>0$. This concludes the proof.
\end{proof}

In Section \ref{hedgingAndNoArbitrageInClasses} we are faced with
the following problem: given $\mathcal{A}^z$, we need to prove that
a given NP admissible portfolio $\Phi$ belongs to $[\mathcal{A}^z]$.
The following proposition provides a sufficient condition to check that
a given $\Phi$ is connected to a certain $\Phi^z$. The stronger setting
of the proposition also allows to see the condition (\ref{weakLowerContinuity}) as a weak
form of lower semi continuity of the value of the NP portfolio.

\begin{proposition}  \label{sufficientConditionForNPPortfolio}
Let $(Z, \mathcal{A}^Z)$ be a stochastic market and $\mathcal{J}$  a
set of trajectories and assume that $C_0$ from Theorem \ref{main-no-arbitrage} holds.
Then, if a NP-admissible portfolio $\Phi$ is such
that $V_{\Phi}(T, \cdot)$$:$$ \mathcal{J} \rightarrow \mathbb{R}$ is
lower semi-continuous with respect to metric $d$ and there exist
$\Phi^z \in \mathcal{A}^Z$ such that $~V_{\Phi^z}(0,z_0)=V_{\Phi}(0, x_0)$ and $V_{\Phi^z}(T,w)
= V_{\Phi}(T,Z(w))$   then $\Phi$ is connected to $\Phi^z$ (so
$\Phi \in [\mathcal{A}^Z]$.)
\end{proposition}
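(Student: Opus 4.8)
The plan is to verify directly that the hypotheses of Proposition \ref{sufficientConditionForNPPortfolio} imply the two requirements in Definition \ref{npConnectedToProbabilistic} for $\Phi$ to be connected to $\Phi^z$. The first requirement, $V_{\Phi^z}(0,z_0) = V_{\Phi}(0,x_0)$, is assumed verbatim in the proposition, so nothing is needed there. The substance is the implication \eqref{weakLowerContinuity}: given a fixed $x \in \mathcal{J}$ and an arbitrary $\rho > 0$, I must produce $\delta = \delta(x,\rho) > 0$ so that $d(Z(w),x) < \delta$ forces $V_{\Phi^z}(T,\omega) \geq V_{\Phi}(T,x) - \rho$.

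First I would unwind the lower semicontinuity assumption on $V_{\Phi}(T,\cdot)\colon \mathcal{J}\to\mathbb{R}$ at the point $x$: by definition of lower semicontinuity with respect to the metric $d$, for the given $\rho > 0$ there is a $\delta > 0$ such that for every $y \in \mathcal{J}$ with $d(y,x) < \delta$ we have $V_{\Phi}(T,y) > V_{\Phi}(T,x) - \rho$. Next I would bring in condition $C_0$: on a set $\hat\Omega$ of full measure, $Z(\omega) \in \mathcal{J}$, so we may restrict attention to such $\omega$. For $\omega \in \hat\Omega$ with $d(Z(\omega),x) < \delta$, set $y = Z(\omega) \in \mathcal{J}$; the lower semicontinuity inequality gives $V_{\Phi}(T, Z(\omega)) > V_{\Phi}(T,x) - \rho$. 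Finally I would invoke the identity $V_{\Phi^z}(T,\omega) = V_{\Phi}(T, Z(\omega))$, which holds on a set of full measure by hypothesis, to conclude $V_{\Phi^z}(T,\omega) \geq V_{\Phi}(T,x) - \rho$ on that set. Intersecting the two full-measure sets keeps full measure, which matches the phrasing "the following holds in a set of full measure" in Definition \ref{npConnectedToProbabilistic}.

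There is essentially no hard obstacle here; the only point requiring a little care is bookkeeping of null sets, namely that the set on which $Z(\omega)\in\mathcal{J}$ and the set on which $V_{\Phi^z}(T,\omega)=V_{\Phi}(T,Z(\omega))$ are each of full measure, so their intersection is too, and \eqref{weakLowerContinuity} is asserted only on that intersection, exactly as the definition of "connected" permits. One should also note that $\delta$ depends on $x$ and $\rho$ only (through the lower semicontinuity of $V_\Phi(T,\cdot)$ at $x$) and not on $\omega$, which is what \eqref{weakLowerContinuity} demands. Having checked both bullet points of Definition \ref{npConnectedToProbabilistic}, $\Phi$ is connected to $\Phi^z$; since $\Phi$ is NP-admissible by hypothesis and $\Phi^z \in \mathcal{A}^Z$, Definition \ref{def:connection} gives $\Phi \in [\mathcal{A}^Z]$, completing the proof.
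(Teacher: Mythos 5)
Your proof is correct and follows essentially the same route as the paper's: unwind the lower semicontinuity of $V_{\Phi}(T,\cdot)$ at $x$, restrict to the full-measure set where $Z(\omega)\in\mathcal{J}$, take $x'=Z(\omega)$, and conclude via the identity $V_{\Phi^z}(T,\omega)=V_{\Phi}(T,Z(\omega))$. Your extra care with intersecting the two full-measure sets is a minor tidying of a point the paper leaves implicit.
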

\begin{proof}
Consider $\Phi$ and $\Phi^z$ satisfying the hypothesis of the
proposition. The lower semi continuity means that for a given $x \in
\mathcal{J}$ and any  $\rho >0$ there exists $\delta >0$ satisfying:
if $d(x', x) < \delta$, with $x' \in \mathcal{J}$ then
\begin{equation} \label{upperSemiContinuous}
V_{\Phi}(T, x') \geq V_{\Phi}(T, x) - \rho.
\end{equation}
Consider now $w$ to be in the set of full measure where   $Z(\Omega)
\subseteq \mathcal{J}$ holds; fix  $x \in \mathcal{J}$ and $\rho >
0$ arbitrary. Consider now $\delta$ as given by the lower semi
continuity assumption, then, if $d(Z(w), x) < \delta$, taking $x'
\equiv Z(w)$ we obtain $V_{\Phi^z}(T,w) = V_{\Phi}(T,x') \geq
V_{\Phi}(T,x) - \rho$.
\end{proof}

In order to construct arbitrage free probabilistic markets from NP
markets free of arbitrage we will make use of the following notion.

\begin{definition} \label{probabilisticConnectedToNP}
Let $(\mathcal{J},\mathcal{A})$ and $(Z, \mathcal{A}^Z)$ be
respectively NP and stochastic markets. $\Phi^z \in \mathcal{A}^Z$
is said to be connected to $\Phi \in \mathcal{A}$ if the following
holds in a set of full measure:
\begin{equation} \nonumber
V_{\Phi^z}(0,z_0)=V_{\Phi}(0, x_0)
\end{equation}
and for any fixed $x \in \mathcal{J}$ and arbitrary $\rho
>0$ there exists $\delta = \delta(x, \rho) >0$ such that
\begin{equation}  \label{weakUpperContinuity}
\mbox{if}~~d(Z(w), x) < \delta~\mbox{then}~~V_{\Phi}(T,x) \geq
V_{\Phi^z}(T, \omega)- \rho.
\end{equation}
\end{definition}

In order to apply results obtained  for NP-markets to stochastic
markets, in particular non-semimartingale processes, Section
\ref{stoch-examples} makes use of the following construction:
starting from a class of NP admissible portfolios $\mathcal{A}$, a
class of portfolios $\mathcal{A}^Z$ is defined as the largest
collection of admissible portfolios  which are connected to elements
from $\mathcal{A}$. This construction will give an arbitrage free
stochastic market $(Z, \mathcal{A}^Z)$. Here is the required
definition.

\begin{definition} \label{def:connectionII}
Let $(\mathcal{J},\mathcal{A})$ be a NP market and  $(\Omega,
\mathcal{F}, (\mathcal{F}_t)_{t \geq 0}, P)$ a filtered probability
space. Let $Z$ be an adapted stochastic process defined on this
space. Define:
\begin{equation} \nonumber
[\mathcal{A}]^Z \equiv \{\Phi^z: \Phi^z ~\mbox{is admissible},~~
~~\exists~~  \Phi \in \mathcal{A} ~\mbox{s.t.}~~\Phi^z~~~~\mbox{is
connected to} ~~\Phi\}.
\end{equation}
\end{definition}

The following theorem is the dual version of Theorem \ref{main-no-arbitrage}.
\begin{theorem}   \label{main-no-arbitrageDual}
Let $(\mathcal{J},\mathcal{A})$ be a NP market and  $(\Omega,
\mathcal{F}, (\mathcal{F}_t)_{t \geq 0}, P)$ a filtered probability
space. Let $Z$ be an adapted stochastic process defined on this
space. Furthermore, assume $C_0$ and $C_1$ from Theorem \ref{main-no-arbitrage}
 hold.

\vspace{.05in} \noindent Then the following statement holds:

\vspace{.05in} \noindent If $(\mathcal{J},\mathcal{A})$ is
NP-arbitrage free
then $(Z, [\mathcal{A}]^z)$ is arbitrage free.
\end{theorem}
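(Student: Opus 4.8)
The plan is to mirror the argument of Theorem \ref{main-no-arbitrage} with the roles of the NP and stochastic markets interchanged, using the dual notion of connectedness (Definition \ref{probabilisticConnectedToNP}) and the dual construction $[\mathcal{A}]^Z$ (Definition \ref{def:connectionII}). I would argue by contradiction: suppose $(\mathcal{J},\mathcal{A})$ is NP-arbitrage free but $(Z,[\mathcal{A}]^Z)$ admits an arbitrage opportunity $\Phi^z \in [\mathcal{A}]^Z$, so that $V_{\Phi^z}(0,z_0)=0$, $V_{\Phi^z}(T,\omega)\geq 0$ a.s., and $P(V_{\Phi^z}(T,\omega)>0)>0$. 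By definition of $[\mathcal{A}]^Z$ there is an NP-admissible $\Phi\in\mathcal{A}$ to which $\Phi^z$ is connected, so in particular $V_{\Phi}(0,x_0)=V_{\Phi^z}(0,z_0)=0$.

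First I would show $V_{\Phi}(T,x)\geq 0$ for all $x\in\mathcal{J}$. Fix such an $x$ and an arbitrary $\rho>0$, and let $\delta=\delta(x,\rho)>0$ be given by (\ref{weakUpperContinuity}). By the small ball property $C_1$, the set $\{w:d(Z(w),x)<\delta\}$ has positive probability; intersecting it with the full-measure set where $V_{\Phi^z}(T,\omega)\geq 0$ holds and with the full-measure set where the connectedness relation (\ref{weakUpperContinuity}) holds, we obtain at least one $\hat\omega$ with $d(Z(\hat\omega),x)<\delta$ and $V_{\Phi^z}(T,\hat\omega)\geq 0$. Then (\ref{weakUpperContinuity}) gives $V_{\Phi}(T,x)\geq V_{\Phi^z}(T,\hat\omega)-\rho\geq -\rho$, and since $\rho>0$ was arbitrary, $V_{\Phi}(T,x)\geq 0$. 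Note this is where $C_1$ does the real work: it guarantees that the Skorohod/uniform ball around \emph{every} trajectory $x\in\mathcal{J}$ is charged by the law of $Z$, so no $x$ escapes the inequality.

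Next I would produce a trajectory $x^*\in\mathcal{J}$ with $V_{\Phi}(T,x^*)>0$, which contradicts NP-arbitrage-freeness and finishes the proof. Since $P(V_{\Phi^z}(T,\omega)>0)>0$, and using $C_0$ (so $Z(\omega)\subseteq\mathcal{J}$ on a full-measure set) together with the full-measure set on which connectedness holds, pick $\omega^*$ in the intersection with $V_{\Phi^z}(T,\omega^*)>0$; set $x^*\equiv Z(\omega^*)\in\mathcal{J}$. Applying (\ref{weakUpperContinuity}) with $x=x^*$ and $w=\omega^*$: for every $\rho>0$, $V_{\Phi}(T,x^*)\geq V_{\Phi^z}(T,\omega^*)-\rho$, hence $V_{\Phi}(T,x^*)\geq V_{\Phi^z}(T,\omega^*)>0$. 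Thus $\Phi$ is an NP-arbitrage portfolio in $(\mathcal{J},\mathcal{A})$, contradicting the hypothesis.

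The only subtlety — rather than a genuine obstacle — is bookkeeping of the various "full measure" qualifiers: the connectedness relation (\ref{weakUpperContinuity}) holds on a fixed full-measure set independent of $x$, $C_0$ holds on a full-measure set, and the arbitrage inequalities hold a.s.; one must intersect these finitely many full-measure sets before invoking the positive-probability conclusion from $C_1$ (or, in the last step, the positive-probability hypothesis on $V_{\Phi^z}(T,\cdot)>0$), so that the chosen $\hat\omega$, resp. $\omega^*$, simultaneously satisfies all the needed properties. This is exactly the same maneuver used in the proof of Theorem \ref{main-no-arbitrage}, just with the inequalities reversed, so no new ideas are required.
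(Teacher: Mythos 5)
Your proposal is correct and follows essentially the same route as the paper's own proof: contradiction via the connected NP portfolio $\Phi$, using $C_1$ with (\ref{weakUpperContinuity}) to force $V_{\Phi}(T,x)\geq 0$ for all $x\in\mathcal{J}$, and then $C_0$ with the positive-probability arbitrage set to produce $x^*=Z(\omega^*)$ with $V_{\Phi}(T,x^*)>0$. The only cosmetic difference is that the paper phrases the two steps as nested contradictions (first assuming some $V_{\Phi}(T,\tilde{x})<0$, then assuming $V_{\Phi}(T,\cdot)\equiv 0$), while you argue the inequalities directly; the substance, including the bookkeeping of full-measure sets, is the same.
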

\begin{proof}
We argue by contradiction, suppose there exists an arbitrage portfolio $\Phi^z \in
[\mathcal{A}]^z$; therefore, $V_{\Phi^z}(0,w) =0$ and  $V_{\Phi^z}(T,w)
\geq 0$~~a.s. Moreover,
there exists a measurable set $D \subseteq \Omega$ satisfying
 \begin{equation}  \label{arbitragePaths}
 V_{\Phi^z}(T,w) > 0~~\mbox{for all}~~w \in D ~\mbox{and}~~  P(D)> 0.
\end{equation}
Because $\Phi^z \in [\mathcal{A}]^z$, we know that $\Phi^z$ is
connected to some $\Phi \in \mathcal{A}$. Then, $0=V_{\Phi^z}(0,
z_0) = V_{\Phi}(0, x_0) = V_{\Phi}(0, x)$ for all $x \in
\mathcal{J}$. Assume now there exists $\tilde{x} \in \mathcal{J}$
and $V_{\Phi}(T, \tilde{x}) < 0$, by $C_1$ and
(\ref{weakUpperContinuity}), given $\rho
>0$, we obtain
\begin{equation}
V_{\Phi}(T, \tilde{x}) \geq V_{\Phi^z}(T, \omega) - \rho
\end{equation}
a.s. for $w \in B_{\rho} \equiv \{\omega: d(Z(\omega), \tilde{x}) <
\delta \}$ and $\delta >0$ is as in (\ref{weakUpperContinuity}).
Using the fact that $V_{\Phi^z}(T, \omega) \geq 0$ we arrive at a
contradiction and conclude $V_{\Phi}(T, x) \geq  0$ for all $x \in
\mathcal{J}$. Assume now $V_{\Phi}(T, x) =  0$ for all $x \in
\mathcal{J}$, because $C_0'$ and (\ref{arbitragePaths}), there
exists $\omega^{\ast} \in D$ and $x^{\ast} \equiv  Z(\omega^{\ast})
\in \mathcal{J}$. The relation ``$\Phi^z$ is connected to $\Phi$''
holds in a set of full measure which is independent on any given
$x$, then, we may assume without loss of generality that
(\ref{weakUpperContinuity}) holds for $\omega^{\ast}$. Then, using
$C_1$ we obtain: $V_{\Phi}(T, x^{\ast}) \geq V_{\Phi^z}(T,
\omega^{\ast}) - \rho$ for all $\rho >0$. This implies $V_{\Phi}(T,
x^{\ast})  > 0$.
\end{proof}

The following proposition provides sufficient conditions to check that a certain $\Phi^z$
is connected to a NP portfolio $\Phi$.

\begin{proposition}  \label{sufficientConditionForProbabilisticPortfolio}
Let $(\mathcal{J},\mathcal{A})$ be a NP market,
$Z$ an adapted stochastic process defined on the
filtered probability space $(\Omega,\mathcal{F}, (\mathcal{F}_t)_{t \geq 0}, P)$
and assume $C_0$ from Theorem \ref{main-no-arbitrage} holds.
Then, if $\Phi^z$ is an
admissible portfolio such that there exists $\Phi \in \mathcal{A}$ satisfying:
$V_{\Phi}(T, \cdot): \mathcal{J} \rightarrow \mathbb{R}$ is upper
semi-continuous with respect to metric $d$, $~V_{\Phi^z}(0,z_0)=V_{\Phi}(0, x_0)$ and
$V_{\Phi^z}(T,w) = V_{\Phi}(T,Z(w))$ a.s.,  then $\Phi^z$ is
connected to $\Phi$ (so $\Phi^z \in [\mathcal{A}]^Z$.)
\end{proposition}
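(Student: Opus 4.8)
The plan is to mirror, in the dual direction, the short argument already used in Proposition \ref{sufficientConditionForNPPortfolio}, now extracting the defining inequality (\ref{weakUpperContinuity}) of Definition \ref{probabilisticConnectedToNP} out of the upper semi-continuity of $V_{\Phi}(T, \cdot)$. First I would unwind the definition of upper semi-continuity of $V_{\Phi}(T, \cdot) : \mathcal{J} \to \mathbb{R}$ with respect to $d$: for every $x \in \mathcal{J}$ and every $\rho > 0$ there exists $\delta = \delta(x, \rho) > 0$ such that $d(x', x) < \delta$ with $x' \in \mathcal{J}$ implies $V_{\Phi}(T, x') \leq V_{\Phi}(T, x) + \rho$, equivalently $V_{\Phi}(T, x) \geq V_{\Phi}(T, x') - \rho$. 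This $\delta$ is exactly the $\delta$ we will feed into (\ref{weakUpperContinuity}).

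Next I would fix, using $C_0$, a set $\hat\Omega$ of full measure on which $Z(\omega) \subseteq \mathcal{J}$; intersecting with the full-measure set on which the hypothesis $V_{\Phi^z}(T, w) = V_{\Phi}(T, Z(w))$ holds, we still have a set of full measure, and on it in particular $Z(w) \in \mathcal{J}$ for each relevant $w$. Now fix $x \in \mathcal{J}$ and $\rho > 0$, take $\delta = \delta(x, \rho)$ from the upper semi-continuity, and suppose $w$ is in that full-measure set with $d(Z(w), x) < \delta$. Setting $x' \equiv Z(w) \in \mathcal{J}$, upper semi-continuity gives $V_{\Phi}(T, x) \geq V_{\Phi}(T, x') - \rho = V_{\Phi}(T, Z(w)) - \rho$, and the hypothesis $V_{\Phi^z}(T, w) = V_{\Phi}(T, Z(w))$ turns this into $V_{\Phi}(T, x) \geq V_{\Phi^z}(T, w) - \rho$, which is precisely (\ref{weakUpperContinuity}). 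Together with the assumed equality $V_{\Phi^z}(0, z_0) = V_{\Phi}(0, x_0)$, this establishes that $\Phi^z$ is connected to $\Phi$ in the sense of Definition \ref{probabilisticConnectedToNP}, and since $\Phi \in \mathcal{A}$ and $\Phi^z$ is admissible by hypothesis, Definition \ref{def:connectionII} gives $\Phi^z \in [\mathcal{A}]^Z$.

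There is no real obstacle here: the proof is essentially a bookkeeping exercise matching the direction of the semi-continuity inequality to the direction of (\ref{weakUpperContinuity}). The one point that deserves a line of care is the handling of the two ``in a set of full measure'' caveats — the one in the conclusion of Definition \ref{probabilisticConnectedToNP} and the one attached to the hypothesis $V_{\Phi^z}(T, w) = V_{\Phi}(T, Z(w))$ a.s. — so I would note explicitly that the intersection of the two full-measure sets is again of full measure and that the $\delta$ obtained depends only on $x$ and $\rho$ (not on $w$), so the required inequality holds on a single $\omega$-set of full measure uniformly over the $w$ with $d(Z(w), x) < \delta$, exactly as in the proof of Proposition \ref{sufficientConditionForNPPortfolio}.
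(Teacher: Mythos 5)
Your proof is correct and follows essentially the same route as the paper: unwind the upper semi-continuity to get $\delta(x,\rho)$, restrict to the full-measure set from $C_0$, set $x'\equiv Z(w)$ and chain $V_{\Phi}(T,x)\geq V_{\Phi}(T,x')-\rho=V_{\Phi^z}(T,w)-\rho$ to obtain (\ref{weakUpperContinuity}). Your explicit remark about intersecting the two full-measure sets and noting that $\delta$ depends only on $(x,\rho)$ is a minor refinement the paper leaves implicit, but it is the same argument.
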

\begin{proof}
Consider $\Phi$ and $\Phi^z$ satisfying the hypothesis of the
proposition. The upper semi continuity means that for a given $x \in
\mathcal{J}$ and any  $\rho >0$ there exists $\delta >0$ satisfying:
if $d(x', x) < \delta$, with $x' \in \mathcal{J}$ then
\begin{equation} \nonumber 
V_{\Phi}(T, x) \geq V_{\Phi}(T, x') - \rho.
\end{equation}
Consider now $w$ to be in the set of full measure where   $Z(\Omega)
\subseteq \mathcal{J}$ holds; fix  $x \in \mathcal{J}$ and $\rho >
0$ arbitrary. Consider now $\delta$ as given by the upper semi
continuity assumption, then, if $d(Z(w), x) < \delta$, taking $x'
\equiv Z(w)$ we obtain $ V_{\Phi}(T,x) \geq V_{\Phi}(T,x') - \rho =
V_{\Phi^z}(T,w)  - \rho$.
\end{proof}

\vspace{.2in}
For simplicity, in most of our further developments, we will make use of stronger notions
than connected and lower and upper semi-continuous portfolios. Namely, isomorphic and V-continuous
portfolios, here are the definitions.

\begin{definition} \label{isomorphic}
Let $(\mathcal{J},\mathcal{A})$ and $(Z, \mathcal{A}^Z)$ be
respectively NP and stochastic markets and assume the condition $C_0$ from
Theorem \ref{main-no-arbitrage} holds.
A NP portfolio $\Phi \in \mathcal{A}$ and $\Phi^z \in \mathcal{A}^Z$
are said to be isomorphic if $\mathbb{P}$-a.s.:
\[
\Phi^z(t,\omega)=\Phi(t,Z(\omega))
\]
for all $0 \leq t \leq T$.
\end{definition}

\begin{definition} \label{vContinuity}
Let $(\mathcal{J},\mathcal{A})$ be a NP market. A NP portfolio
$\Phi \in \mathcal{A}$ is said to be V-continuous with respect to $d$
if the functional $V_{\Phi}(T,\cdot)$$:$$ \mathcal{J} \rightarrow \mathbb{R}$
is continuous with respect to the topology induced on $\mathcal{J}$ by
distance $d$.
\end{definition}

Whenever the distance $d$ is understood from the context we will
only refer to the portfolio as {\it V-continuous}. The intuitive
notion of a V-continuous portfolio is that small changes in the asset
price trajectory will lead to small changes to the final value of
the portfolio.

\begin{remark}
Clearly, if $\Phi_T$$:$$ \mathcal{J} \rightarrow \mathbb{R}^2$ is
continuous then $\Phi$ is V-continuous.
\end{remark}

Propositions \ref{sufficientConditionForNPPortfolio}
and \ref{sufficientConditionForProbabilisticPortfolio} plus
the definition of V-continuity give the following corollary.

\begin{corollary} \label{isomophic-v-cont}
Consider the setup of Definition \ref{isomorphic}. In particular, consider
$\Phi$ and $\Phi^z$ to be isomorphic, furthermore, assume $\Phi$ to be V-continous, then:
\begin{itemize}
\item $\Phi$ is connected to  $\Phi^z$ and so $\Phi \in [\mathcal{A}^Z]$.
\item $\Phi^z$ is connected to  $\Phi$ and so $\Phi^z \in [\mathcal{A}]^Z$.
\end{itemize}
\end{corollary}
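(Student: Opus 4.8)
The plan is to derive Corollary \ref{isomophic-v-cont} directly from Proposition \ref{sufficientConditionForNPPortfolio} and Proposition \ref{sufficientConditionForProbabilisticPortfolio}, using the fact that V-continuity is precisely the hypothesis that $V_{\Phi}(T,\cdot)\colon\mathcal{J}\to\mathbb{R}$ is continuous, hence simultaneously lower and upper semi-continuous, and that isomorphism is exactly the hypothesis needed to identify the terminal values of the two portfolios along the trajectories of $Z$. So the corollary should follow by simply checking that the hypotheses of the two propositions are met.

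First I would record what isomorphism gives us: if $\Phi$ and $\Phi^z$ are isomorphic, then $\mathbb{P}$-a.s. $\Phi^z(t,\omega)=\Phi(t,Z(\omega))$ for all $t\in[0,T]$, and in particular at $t=T$ this yields $V_{\Phi^z}(T,\omega)=\psi^z_T(\omega)+\phi^z_T(\omega)Z_T(\omega)=\psi(T,Z(\omega))+\phi(T,Z(\omega))Z(T,\omega)=V_{\Phi}(T,Z(\omega))$ a.s. Likewise, evaluating at $t=0$ and using that $\Phi(0,x)$ is the same constant for all $x\in\mathcal{J}(x_0)$ and that $z_0=x_0$, one gets $V_{\Phi^z}(0,z_0)=V_{\Phi}(0,x_0)$. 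These are exactly the two algebraic conditions appearing in both propositions. Second, V-continuity of $\Phi$ says $V_{\Phi}(T,\cdot)$ is continuous on $(\mathcal{J},d)$, which in particular makes it lower semi-continuous; combining this with the two value identities just derived and with the standing hypothesis $C_0$ (which is assumed in the setup of Definition \ref{isomorphic}), Proposition \ref{sufficientConditionForNPPortfolio} applies verbatim and concludes that $\Phi$ is connected to $\Phi^z$, hence $\Phi\in[\mathcal{A}^Z]$.

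Third, for the dual assertion I would invoke Proposition \ref{sufficientConditionForProbabilisticPortfolio}: continuity of $V_{\Phi}(T,\cdot)$ also gives upper semi-continuity, and the same two value identities $V_{\Phi^z}(0,z_0)=V_{\Phi}(0,x_0)$ and $V_{\Phi^z}(T,\omega)=V_{\Phi}(T,Z(\omega))$ a.s.\ that were extracted from isomorphism are precisely the remaining hypotheses of that proposition (again with $C_0$ in force). Therefore $\Phi^z$ is connected to $\Phi$ and $\Phi^z\in[\mathcal{A}]^Z$. This completes both bullet points.

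I do not expect any serious obstacle here; the corollary is essentially a bookkeeping statement that unpacks the two defined notions (isomorphic, V-continuous) into the literal hypotheses of the two earlier propositions. The only point requiring a modicum of care is the passage from the process-level identity $\Phi^z(t,\omega)=\Phi(t,Z(\omega))$ to the terminal-value identity $V_{\Phi^z}(T,\omega)=V_{\Phi}(T,Z(\omega))$ — one must spell out that $V_{\Phi^z}(T,\omega)=\psi^z_T(\omega)+\phi^z_T(\omega)Z_T(\omega)$ while $V_{\Phi}(T,x)=\psi(T,x)+\phi(T,x)x(T)$, and that substituting $x=Z(\omega)$ makes these agree on the a.s.\ set where the isomorphism relation holds (and where, by $C_0$, $Z(\omega)\in\mathcal{J}$ so the right-hand side is even defined). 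Once that identification is made, both propositions fire immediately and there is nothing further to prove.
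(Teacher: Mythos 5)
Your proposal is correct and matches the paper's own (implicit) argument: the corollary is stated as a direct consequence of Propositions \ref{sufficientConditionForNPPortfolio} and \ref{sufficientConditionForProbabilisticPortfolio}, obtained exactly as you do by extracting the value identities $V_{\Phi^z}(0,z_0)=V_{\Phi}(0,x_0)$ and $V_{\Phi^z}(T,\omega)=V_{\Phi}(T,Z(\omega))$ from isomorphism and noting that V-continuity yields both lower and upper semi-continuity of $V_{\Phi}(T,\cdot)$. Nothing further is needed.
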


In many of our examples, we will rely on  Corollary  \ref{isomophic-v-cont}
to check if given portfolios belong to  $[\mathcal{A}^Z]$ or $[\mathcal{A}]^Z$.
In each of our examples, introduced in later sections, it will arise
the question on how large are the classes of portfolios
$[\mathcal{A}^Z]$ and $[\mathcal{A}]^Z$ as the Definitions
\ref{def:connection} and \ref{def:connectionII} do not provide a
direct characterization of its elements. For each of our examples we
will prove that specific classes of portfolios do belong to
$[\mathcal{A}^Z]$ and $[\mathcal{A}]^Z$, answering the question in
general is left to future research.

\vspace{.2in}
\noindent \textbf{Arbitrage in subsets}\\
\\
Consider $\mathcal{J}^{\ast} \subseteq \mathcal{J}$, it is natural to look for conditions
that provide a relationship between the arbitrage opportunities of these two sets.
The NP framework allows a simple result, Proposition \ref{subclasses} below, which provides
a clear contrast with the probabilistic framework
which, in particular, is not able to provide an answer when $\mathcal{J}^{\ast}$ is a subset of measure zero (see Example \ref{rational-end-points}). There exist  cases, for
example if $\mathcal{J}^{\ast}=\{x \in  \mathcal{J}: x_T> x_0e^{rT}\}$,
for which there exists an obvious NP-arbitrage portfolio by borrowing money from the bank
and investing on the asset. Proposition
\ref{subclasses} shows that the no-arbitrage property for a NP-market $(\mathcal{J},\mathcal{A})$ is inherited by
NP-markets whose trajectories $\mathcal{J}^{\ast}$ are dense on $\mathcal{J}$.

\begin{proposition} \label{subclasses}
Consider the NP-market $(\mathcal{J},\mathcal{A})$ where $\mathcal{A}$ is some class
of NP-admissible, V-continuous portfolio strategies (with respect to metric $d$). Let $\mathcal{J}^{\ast} \subset \mathcal{J}$
be a subclass of trajectories such that $\mathcal{J}^{\ast}$ is dense in $\mathcal{J}$ with respect to the metric
$d$ and consider $\mathcal{A}_{\mathcal{J}^{\ast}}$ to be the restriction of portfolio strategies in $\mathcal{A}$
to the subclass $\mathcal{J}^{\ast}$. Then the NP-Market $(\mathcal{J}^{\ast},\mathcal{A}_{\mathcal{J}^{\ast}})$ is
NP-arbitrage-free if $(\mathcal{J},\mathcal{A})$ is NP-arbitrage-free.
\end{proposition}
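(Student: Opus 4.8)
The plan is to argue by contraposition: assume $(\mathcal{J}^{\ast}, \mathcal{A}_{\mathcal{J}^{\ast}})$ admits NP-arbitrage and produce an NP-arbitrage in $(\mathcal{J}, \mathcal{A})$. So suppose $\Phi \in \mathcal{A}$ is such that its restriction to $\mathcal{J}^{\ast}$ is an NP-arbitrage portfolio; that is, $V_{\Phi}(0, x) = 0$ for all $x \in \mathcal{J}$ (recall $V_{\Phi}(0,\cdot)$ is constant on $\mathcal{J}$ by the definition of NP-portfolio), $V_{\Phi}(T, x) \geq 0$ for all $x \in \mathcal{J}^{\ast}$, and $V_{\Phi}(T, x^{\ast}) > 0$ for some $x^{\ast} \in \mathcal{J}^{\ast}$. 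The goal is to upgrade the inequality $V_{\Phi}(T, x) \geq 0$ from $\mathcal{J}^{\ast}$ to all of $\mathcal{J}$, which is precisely where V-continuity enters.

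First I would fix an arbitrary $x \in \mathcal{J}$. Since $\mathcal{J}^{\ast}$ is dense in $\mathcal{J}$ with respect to $d$, choose a sequence $(x_n) \subseteq \mathcal{J}^{\ast}$ with $d(x_n, x) \to 0$. Because $\Phi$ is V-continuous, $V_{\Phi}(T, \cdot): \mathcal{J} \to \mathbb{R}$ is continuous with respect to $d$, so $V_{\Phi}(T, x_n) \to V_{\Phi}(T, x)$. Each term $V_{\Phi}(T, x_n) \geq 0$ since $x_n \in \mathcal{J}^{\ast}$, hence the limit satisfies $V_{\Phi}(T, x) \geq 0$. As $x \in \mathcal{J}$ was arbitrary, $V_{\Phi}(T, x) \geq 0$ for all $x \in \mathcal{J}$.

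It remains to note that the strictly profitable trajectory $x^{\ast}$ already lies in $\mathcal{J}^{\ast} \subseteq \mathcal{J}$, so $V_{\Phi}(T, x^{\ast}) > 0$ with $x^{\ast} \in \mathcal{J}$; and $V_{\Phi}(0, x) = 0$ for all $x \in \mathcal{J}$ was given. Therefore $\Phi$, viewed as an element of $\mathcal{A}$ acting on the full class $\mathcal{J}$, is an NP-arbitrage portfolio for $(\mathcal{J}, \mathcal{A})$, contradicting the hypothesis that $(\mathcal{J}, \mathcal{A})$ is NP-arbitrage-free. This establishes the claim.

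\textbf{Main obstacle.} There is no serious analytic obstacle here; the argument is essentially a one-line limit passage. The only subtlety worth pinning down carefully is the bookkeeping around the restriction $\mathcal{A}_{\mathcal{J}^{\ast}}$: one must check that a portfolio which is NP-admissible (predictable, self-financing, value bounded below) on $\mathcal{J}^{\ast}$ and is the restriction of some $\Phi \in \mathcal{A}$ genuinely corresponds back to that $\Phi$ on all of $\mathcal{J}$ with the same value function, and that V-continuity of $\Phi$ on $\mathcal{J}$ (assumed, since $\mathcal{A}$ consists of V-continuous portfolios) is what is available rather than mere continuity of the restricted functional on $\mathcal{J}^{\ast}$. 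Once that identification is made explicit, the density-plus-continuity step closes the proof immediately.
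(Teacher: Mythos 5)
Your proposal is correct and follows essentially the same route as the paper: pass the inequality $V_{\Phi}(T,\cdot)\geq 0$ from the dense subclass $\mathcal{J}^{\ast}$ to all of $\mathcal{J}$ using V-continuity, and note that the strictly profitable trajectory already lies in $\mathcal{J}^{\ast}$. The only cosmetic difference is that you handle the initial condition via the defining property $\Phi(0,x)=\Phi(0,x')$, whereas the paper invokes density plus continuity of $V_{\Phi}(0,\cdot)$; both are fine.
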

\begin{proof}
Assuming that there exists $\Phi \in \mathcal{A}_{\mathcal{J}^{\ast}}$, an arbitrage opportunity on $(\mathcal{J}^{\ast},\mathcal{A}_{\mathcal{J}^{\ast}})$,  we will derive
an arbitrage strategy in $(\mathcal{J},\mathcal{A})$. To achieve this end, it is enough to prove that
$V_{\Phi}(0,x)= 0$ and $V_{\Phi}(T,x) \geq  0$, both relations valid for all $x \in \mathcal{J}$. The first
relation should hold because of the density assumption and the fact that $V_{\Phi}(0, \cdot)$ is a
continuous function on $\mathcal{J}$. The second relationship follows similarly using
continuity of $V_{\Phi}(T, \cdot)$ on $\mathcal{J}$.
\end{proof}

\section{Examples: Arbitrage and Hedging in Trajectory Classes}  \label{hedgingAndNoArbitrageInClasses}

This section provides examples of NP markets $(\mathcal{J},
\mathcal{A})$ which are free of arbitrage and in which general
classes of payoffs can be hedged. Several of the results from
Section \ref{nonProbabilisticArbitrage} are applied in order to gain
a more complete understanding of these examples, in particular, we
provide several details about the characterizations of the
portfolios $\Phi \in \mathcal{A}$.

A main example deals with continuous trajectories (this set is
denoted by $\mathcal{J}_{\tau}^{\sigma}$), other examples  deal with
trajectories containing jumps. Several aspects of these different
examples are treated in a uniform way illustrating the flexibility
of using different topologies in the trajectory space. The classes
of trajectories to be introduced could be considerably enlarged by
allowing the parameter $\sigma$ to be a function of $t$ (obeying
some regularity conditions). Our results apply to such (extended)
classes as well, in the present paper we will restrict $\sigma$ to
be a constant for simplicity. We also restrict to hedging results to
path independent derivatives but expect the results can be extended
to path independent derivatives as well.

The replicating portfolio strategies that we will obtain in a NP-market are essentially the same that in the corresponding stochastic frameworks,
for example, to replicate a payoff when prices lie in our example
$\mathcal{J}_{\tau}^{\sigma}$, comprised of continuous trajectories,
we use the well known delta-hedging as in the Black-Scholes model.
In the available literature there exist several results related to
the robustness of delta hedging, see for example: \cite{bick},
\cite{Kloeden}, \cite{valkeila-2} and \cite{russo}. A point to
emphasize is the fact that the replication results, being valid in a
different sense (probability-free), are valid also when considering
subclasses of trajectories $\mathcal{J}^{\ast} \subset \mathcal{J}$.
Formally, this fact is not available in probabilistic frameworks
due to the technical reliance on sets of measure zero or non-measurable
sets.

\subsection{Non Probabilistic Black-Scholes Model}\label{ssect:cont}

Denote by $\mathcal{Z}_{\mathcal{T}}([0, T])$ the collection of
all continuous functions $z(t)$ such that
$\left[ z \right]_{t}^{\mathcal{T}} = t$ for $0\leq t \leq T$ and
$z(0) =0$. Notice that $\mathcal{Z}_{\mathcal{T}}([0, T])$ includes a.s. paths of
Brownian motion if $\mathcal{T}$ is a refining sequence of
partitions (\cite{klein}.)

\noindent For a given  sequence of subdivisions $\mathcal{T}$
define,
\begin{itemize}
\item Given constants $\sigma>0$ and $x_0 > 0$,  let $ \mathcal{J}_{\tau}^{\sigma}(x_0)$ to be the class of
all real valued functions $x$ for which there exists $z \in \mathcal{Z}_{\mathcal{T}}([0, T])$
such that:
\begin{equation} \nonumber 
x(t) = x_{0}~e^{\sigma z(t)}.
\end{equation}
\end{itemize}
According to (\ref{quadVarForComposition}), the class $ \mathcal{J}^{\sigma}_{\tau}(x_0)$ is the class of
continuous functions $x$ with $x(0) = x_0$ and quadratic variation
satisfying $d\langle x \rangle^{\tau}_t = \sigma^2 x(t)^2 dt$. The trajectory class $ \mathcal{J}^{\sigma}_{\tau}(x_0)$ will be considered as a subset of the continuous functions with the uniform topology induced by the uniform distance.

\begin{remark} \label{generalizedClasses}
The class $ \mathcal{J}_{\tau}^{\sigma}(x_0)$ includes
trajectories of processes different than the geometric Brownian motion,
as an illustration we indicate
that if $z= B + y$, with $B$ a Brownian motion and
$y$ a process with zero quadratic variation, then the trajectories of
$z$ belongs to
$\mathcal{Z}_{\tau}([0, T])$, hence the trajectories of the process
$x_0~e^{\sigma z}$ belong to $ \mathcal{J}_{\tau}^{\sigma}(x_0)$.
\end{remark}

As previously suggested, the hedging results in this class have
been already obtained, more or less explicitly, in several papers (see \cite{bick} and \cite{Kloeden}
for example).

\begin{theorem}\label{Non-stoch-BS}
Let $\mathcal{J}^{\ast}$ be a class of possible trajectories,
$\mathcal{J}^{\ast} \subset \mathcal{J}_{\tau}^{\sigma}$ and let
$v(\cdot,\cdot): [0,T] \times \mathbb{R^+} \rightarrow \mathbb{R}$
be the solution of the PDE
\begin{equation}\label{BS-PDE}
\frac{\partial v}{\partial t}(t,x) + r ~x \frac{\partial v}{\partial x}(t,x) +
\frac{\sigma^2 x^2}{2}\frac{\partial^2 v}{\partial x^2}(t,x)-r~v(t,x) = 0
\end{equation}
with terminal condition $v(T,x)=h(x)$ where $h(\cdot)$ is
Lipschitz. Then, the delta hedging NP-portfolio $\Phi_t  \equiv
(v(t,x(t))-\varrho(t,x(t))x(t), \varrho(t,x(t)))$ where
$\varrho(t,x)  \equiv \frac{\partial v}{\partial x}(t,x)$
replicates the payoff $h$ at maturity time $T$ for all $x \in
\mathcal{J}^{\ast}$.
\end{theorem}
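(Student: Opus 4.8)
The plan is to verify directly that the proposed delta-hedging portfolio $\Phi_t = (\psi(t,x(t)), \varrho(t,x(t)))$, with $\psi(t,x) = v(t,x) - \varrho(t,x)\,x$ and $\varrho(t,x) = \partial_x v(t,x)$, is NP-self-financing along every trajectory $x \in \mathcal{J}^\ast$ and that its terminal value equals $h(x(T))$. Since $x \in \mathcal{J}^\ast \subset \mathcal{J}_\tau^\sigma$ is continuous with quadratic variation $d\langle x\rangle_t^\tau = \sigma^2 x(t)^2\,dt$, the key tool is the pathwise (Föllmer) Itô formula recalled in Appendix \ref{pathwiseIto}. First I would check the regularity needed to apply it: $v$ solves the Black--Scholes PDE (\ref{BS-PDE}) with a Lipschitz terminal condition, so by standard parabolic theory $v$ is $C^{1,2}$ on $[0,T)\times\mathbb{R}^+$ with derivatives controlled up to $t=T$; in particular $\varrho(t,x) = \partial_x v(t,x)$ is $C^1$ in $x$, which is exactly what is required for the Föllmer integral $\int_0^t \varrho(s,x(s))\,dx_s$ to exist in the sense of Definition \ref{forward-integral}.

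Next I would apply the pathwise Itô formula to the process $t \mapsto v(t, x(t))$ along the partition sequence $\mathcal{T}$, using that $x$ has no jumps:
\begin{equation} \nonumber
v(t, x(t)) = v(0,x_0) + \int_0^t \partial_t v(s,x(s))\,ds + \int_0^t \partial_x v(s,x(s))\,dx_s + \frac{1}{2}\int_0^t \partial_{xx} v(s,x(s))\,d\langle x\rangle_s^\tau.
\end{equation}
Substituting $d\langle x\rangle_s^\tau = \sigma^2 x(s)^2\,ds$ and then using the PDE (\ref{BS-PDE}) to replace $\partial_t v + \tfrac{\sigma^2 x^2}{2}\partial_{xx} v$ by $r\,v - r\,x\,\partial_x v = r(v - \varrho\,x) = r\,\psi$, the two Lebesgue integrals collapse to $\int_0^t r\,\psi(s,x(s))\,ds$, giving
\begin{equation} \nonumber
v(t,x(t)) = v(0,x_0) + \int_0^t r\,\psi(s,x(s))\,ds + \int_0^t \varrho(s,x(s))\,dx_s.
\end{equation}
Since $V_\Phi(t,x) = \psi(t,x(t)) + \varrho(t,x(t))\,x(t) = v(t,x(t))$ by construction, and $V_0 = v(0,x_0)$, this is precisely the NP-self-financing identity of Definition \ref{def:NP-conditions}(ii) with $r$ restored (if $r=0$ the $ds$-term vanishes). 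It remains to observe that $V_\Phi(T,x) = v(T,x(T)) = h(x(T))$ by the terminal condition, which is the replication claim; NP-predictability holds because $\varrho(t,x(t))$ depends only on $x(t)$ and is left-continuous in $t$ by continuity of $x$ and of $\varrho$.

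The main obstacle is the legitimacy of invoking the pathwise Itô formula and the existence of the Föllmer integral up to and including the maturity time $T$: the PDE solution $v$ is classically smooth only on $[0,T)\times\mathbb{R}^+$, so one must argue that the representation extends to $t=T$ — either by the improper-integral convention in Definition \ref{forward-integral} together with continuity of $v$ and of $x$ at $T$, or by noting that Lipschitz-ness of $h$ yields uniform bounds on $\partial_x v$ (and suitable control on $\partial_{xx} v$) that make all integrands well-behaved as $t\to T$. I would handle this by passing to the limit $t\to T$ in the displayed identity, using continuity of each side, and citing the version of the pathwise Itô formula in Appendix \ref{pathwiseIto} for the $C^{1,2}$ regularity actually available; everything else is routine substitution.
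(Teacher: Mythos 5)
Your argument is correct and follows essentially the same route as the paper: apply the pathwise It\^{o}--F\"ollmer formula to $v(t,x(t))$ on $[0,u]$ for $u<T$ using $d\langle x\rangle_s^{\tau}=\sigma^2 x(s)^2\,ds$, substitute the PDE (\ref{BS-PDE}) to collapse the drift terms into the $r\,\psi$ term, and then let $u\to T$ using continuity of $v$ and the improper F\"ollmer integral of Definition \ref{forward-integral} to conclude $V_{\Phi}(T,x)=v(T,x(T))=h(x(T))$. Your extra remarks on self-financing and NP-predictability correspond to what the paper defers to Corollary \ref{deltaHedgingAdmissible}, so there is no substantive difference.
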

\begin{proof}
The existence and uniqueness of the solution of (\ref{BS-PDE})
is guaranteed because $h$ is Lipschitz.
If $x \in \mathcal{J}^{\ast} \subset \mathcal{J}_{\tau}^{\sigma}(x_0)$ then we know
that $x$ is of quadratic variations and
$d \langle x \rangle_t^{\tau}= \sigma^2 x(t)^2  dt$, so applying It\^{o}-F\"ollmer
formula, taking $\varrho(t,x)=\frac{\partial v}{\partial x}(t,x)$, using (\ref{BS-PDE})
and noticing that the integral $\int_0^T r(v(s,x(s))-\varrho(s,(s,x(s))x(s))ds$
exists, we obtain:
\begin{equation} \label{eq:delta-hedging}
h(x(T))= v(T, x(T))= \lim_{u \to T} v(u,x(u))=
\end{equation}
\begin{equation*}
\lim_{u \to T} \left[ v(0,x(0))+ \int_0^u \frac{\partial v}{\partial t}(s,x(s))ds
 + \frac{1}{2}\int_0^u \frac{\partial^2 v}{\partial x^2}(s,x(s)) d \langle x \rangle_s^{\tau}
 + \int_0^u \varrho(s,x(s))d x(s)\right]=
\end{equation*}
\begin{equation} \nonumber
  \lim_{u \to T} \left[ v(0, x(0))+ \int_0^u r[v(s,x(s))-\varrho(s,x(s))x(s)]ds +\int_0^u \varrho(s,x(s))d x(s) \right] =
\end{equation}
\begin{equation} \nonumber
 v(0, x(0))+ \int_0^T r \left[ v(s,x(s))-\varrho(s,x(s))x(s) \right] ds +\lim_{u \to T} \int_0^u \varrho(s,x(s))~dx(s) =
\end{equation}
\begin{equation} \nonumber
v(0, x(0)) + \int_0^T r \left[ v(s,x(s))-\varrho(s,x(s))x(s) \right] ds + \int_0^T \varrho(s,x(s))d x(s).
\end{equation}
The analysis in (\ref{eq:delta-hedging}) implies that the NP-portfolio
\begin{equation} \label{deltaHedgingPortfolio}
\Phi_t= \left(v(t,x(t))-\varrho(t,x(t))x(t), \varrho(t,x(t)) \right)
\end{equation}
replicates the payoff $h$
at maturity time $T$.
\end{proof}

\begin{corollary}  \label{deltaHedgingAdmissible}
The delta hedging portfolio  given by (\ref{deltaHedgingPortfolio}) is NP-admissible and V-continuous relative to the uniform topology.
\end{corollary}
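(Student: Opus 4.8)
The plan is to verify the three defining properties of NP-admissibility (predictability, self-financing, lower bound on the value) for the delta hedging portfolio $\Phi_t = (\psi(t,x),\phi(t,x))$ with $\psi(t,x) = v(t,x(t)) - \varrho(t,x(t))\,x(t)$ and $\phi(t,x) = \varrho(t,x(t))$, and then separately to establish V-continuity. First I would observe that predictability is essentially immediate: since $v$ is a classical solution of \eqref{BS-PDE}, it is $C^{1,2}$ on $[0,T)\times\mathbb{R}^+$, and $\varrho = \partial v/\partial x$ is continuous there; because $x$ is continuous, the sample path $t \mapsto \Phi_t = \Phi_x(t)$ is continuous, hence in particular LCRL, and the value $\Phi_t(x)$ depends only on $x(t)$, so it certainly agrees for trajectories that coincide on $[0,t)$. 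Next, self-financing is exactly what the computation \eqref{eq:delta-hedging} in the proof of Theorem \ref{Non-stoch-BS} delivers: that display shows $V_\Phi(t,x) = V_\Phi(0,x_0) + \int_0^t r[v(s,x(s)) - \varrho(s,x(s))x(s)]\,ds + \int_0^t \varrho(s,x(s))\,dx(s)$, which, noting $\psi(s,x)\,r\,ds = r[v(s,x(s))-\varrho(s,x(s))x(s)]\,ds$, is precisely the NP-self-financing identity of Definition \ref{def:NP-conditions}(ii); the two integrals exist as Stieltjes and F\"ollmer integrals respectively, again as remarked in that proof.

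The remaining admissibility requirement is the uniform lower bound $V_\Phi(t,x) \geq -A$. Here I would use that $V_\Phi(t,x) = v(t,x(t))$ by construction of the delta hedging portfolio, so it suffices to bound $v$ from below. Since $h$ is Lipschitz, say $|h(x)| \leq c_1 + c_2|x|$, the Feynman–Kac / maximum-principle representation of the solution of \eqref{BS-PDE} gives $v(t,x) = e^{-r(T-t)}\,\mathbb{E}[h(X_T^{t,x})]$ where $X^{t,x}$ is geometric Brownian motion started at $x$ at time $t$; hence $|v(t,x)| \leq e^{-r(T-t)}(c_1 + c_2 x e^{r(T-t)}) \leq c_1 + c_2 x$ for $x>0$, which is bounded below (by $c_1$, say, if one only wants a lower bound, since $x>0$). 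Thus $V_\Phi(t,x) = v(t,x(t)) \geq -c_1 =: -A$ uniformly in $t$ and $x \in \mathcal{J}_\tau^\sigma$, giving NP-admissibility.

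Finally, for V-continuity I must show $x \mapsto V_\Phi(T,x) = h(x(T))$ is continuous on $\mathcal{J}_\tau^\sigma$ with respect to the uniform distance. This is the easy part: if $d(x,\tilde x) = \|x - \tilde x\|_\infty < \delta$ then in particular $|x(T) - \tilde x(T)| < \delta$, so $|h(x(T)) - h(\tilde x(T))| \leq L\delta$ by the Lipschitz property of $h$; hence $V_\Phi(T,\cdot)$ is even Lipschitz on $\mathcal{J}_\tau^\sigma$ for the uniform metric, and in particular V-continuous in the sense of Definition \ref{vContinuity}. I expect the only mildly technical point to be the lower-bound step, where one wants to invoke the probabilistic (or PDE-comparison) representation of $v$ to get the linear growth estimate cleanly; everything else follows directly from the $C^{1,2}$ regularity of $v$, the computation already performed in Theorem \ref{Non-stoch-BS}, and the Lipschitz continuity of $h$.
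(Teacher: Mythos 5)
Your treatment of predictability, self-financing and V-continuity coincides with the paper's: predictability and self-financing are read off from the construction in Theorem \ref{Non-stoch-BS} together with continuity of the trajectories ($x(t)=x(t-)$), and V-continuity follows from $V_{\Phi}(T,x)=h(x(T))$ with $h$ Lipschitz. Those parts are fine.

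The gap is in the uniform lower bound. From $|v(t,x)|\leq c_1+c_2 x$ you conclude that $v$ is ``bounded below by $-c_1$,'' but that inequality only yields $v(t,x)\geq -(c_1+c_2 x)$; what is bounded below by $c_1$ is the majorant $c_1+c_2x$, not $v$ itself. Since trajectories in $\mathcal{J}_{\tau}^{\sigma}(x_0)$ have $\sup_{t}x(t)$ unbounded over the class, $-(c_1+c_2x(t))$ is not a uniform bound $-A$, so NP-admissibility in the sense of Definition \ref{def:NP-conditions}(iii) does not follow from your estimate. Indeed, for a general Lipschitz $h$ the claimed bound is false: take $h(y)=K-y$, for which the solution of \eqref{BS-PDE} is $v(t,x)=Ke^{-r(T-t)}-x$, so $V_{\Phi}(t,x)=v(t,x(t))$ is unbounded below on $\mathcal{J}_{\tau}^{\sigma}$. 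The corollary implicitly assumes a nonnegative (or at least bounded below) payoff: the paper's proof simply invokes $v(t,x(t))\geq 0$ (maximum principle / Feynman--Kac with $h\geq 0$) and takes $A=0$. To repair your argument, replace the linear-growth estimate by the observation that $h\geq -c$ implies $v(t,x)=e^{-r(T-t)}\mathbb{E}\left[h(X_T^{t,x})\right]\geq -c\,e^{-r(T-t)}\geq -c$, which gives the required constant $A=c$ (and $A=0$ when $h\geq 0$).
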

\begin{proof}
The self-financing and predictable properties follow from the
definition and constructions in Theorem \ref{Non-stoch-BS} by
noticing that $x(t) = x(t-)$. The portfolio is admissible, with
$A=0$, by the known property $v(t,x(t)) \geq 0$.
V-continuity follows from
$V_\Phi(T, x) = h(x(T))$ and the fact that $h$ is continuous.
\end{proof}

\subsubsection{Arbitrage in $\mathcal{J}_{\tau}^{\sigma}$}

We analyze next the problem of arbitrage in a market where possible
trajectories are in $\mathcal{J}_{\tau}^{\sigma}$. We will make use
Theorem \ref{main-no-arbitrage} applied to the Black and Scholes model
$(Z, A^Z)$; in particular, $A^Z= \mathcal{A}^Z_{BS}$, where $\mathcal{A}^Z_{BS}$ denotes
the admissible portfolios in the Black-Scholes stochastic market. Corollary \ref{isomophic-v-cont} will be used to show that a
large class of portfolios belong to $[\mathcal{A}^Z_{BS}]$; towards this end, we incorporate portfolio
strategies that depend on past values of the trajectory and not just
on the spot value (\cite{valkeila-2}).

\begin{definition}
A hindsight factor $g$ over some class of trajectories $\mathcal{J}$ is a mapping
$g:[0,T] \times \mathcal{J} \to \mathbb{R}$ satisfying:
\begin{itemize}
\item [i)] $g(t, \eta)=g(t, \tilde{\eta})$ whenever $\eta(s)=\tilde{\eta}(s)$ for all $0 \leq s \leq t$.
\item [ii)] $g( \cdot, \eta)$ is of bounded variation and continuous for every $\eta \in \mathcal{J}$.
\item [iii)] There is a constant $K$ such that for every continuous function $f$.
\begin{equation} \nonumber
\left| \int_0^t f(s)dg(s, \eta)-\int_0^t f(s)dg(s, \tilde{\eta})\right| \leq
K \max_{0 \leq r \leq t} f(r) \left\| \eta- \tilde{\eta}\right\|_{\infty}
\end{equation}

\end{itemize}
\end{definition}

Another definition of \cite{valkeila-2} are the smooth strategies introduced next.

\begin{definition}  \label{smoothStrategies}
A portfolio strategy $\Phi=(\psi_t, \phi_t)_{0 \leq t \leq T}$ over the
class of trajectories $\mathcal{J}$ is called smooth if:
\begin{itemize}
\item [i)] The number of assets held at time $t$, $\phi_t$, has the form
\begin{equation} \label{eq:smooth-cond}
\phi_t(x) = \phi(t, x) =G(t,x_t,g_1(t,x), \ldots, g_m(t,x))
\end{equation}
for all  $t \in [0,T]$ and for all $x \in \mathcal{J}$
where $G \in C^1([0,T] \times \mathbb{R} \times \mathbb{R}^m)$ and the $g_i$'s
are hindsight factors
\item [ii)] There exists $A>0$ such that $V_{\Phi}(t,x) \geq -A$
$\forall t \in [0,T]$ and $\forall x \in \mathcal{J}$.
\end{itemize}
\end{definition}
Given the notation and assumptions from Definition \ref{smoothStrategies},
an application of It\^{o}-F\"ollmer
formula (\ref{Ito-Follmer}) proves that the integrals
$\int_0^t \phi(s, x) dx(s)$ exist for all $t \in [0,T]$ if
$\Phi=(\psi_t, \phi_t)$ is smooth. Propositions \ref{bond-admiss}
and \ref{NP-bond-admiss} (stated and proven in Appendix \ref{technicalResults})
relate the smoothness condition in (\ref{eq:smooth-cond})
with the admissibility conditions of both, stochastic and NP portfolios
respectively.

\begin{proposition}\label{smooth-continuity}
If $\Phi$ is a smooth portfolio strategy over $\mathcal{J}_{\tau}^{\sigma}(x_0)$
and $d$ is the uniform distance then $\Phi$ is V-continuous.
\end{proposition}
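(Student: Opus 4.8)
The plan is to show that the functional $x \mapsto V_{\Phi}(T,x)$ is continuous on $\mathcal{J}_{\tau}^{\sigma}(x_0)$ with respect to the uniform distance, by exhibiting an explicit representation of $V_{\Phi}(T,x)$ through the It\^o--F\"ollmer formula and controlling each term under uniform perturbations of the trajectory. First I would apply the pathwise It\^o formula to the $C^1$ function $G$ defining $\phi_t(x)$ (recall (\ref{eq:smooth-cond}) and Appendix \ref{pathwiseIto}): since $\Phi$ is self-financing we have
\begin{equation} \nonumber
V_{\Phi}(T,x) = V_0 + \int_0^T \psi(s,x)\, r\, ds + \int_0^T \phi(s,x)\, dx(s),
\end{equation}
and the F\"ollmer integral $\int_0^T \phi(s,x)\,dx(s)$ admits, via (\ref{Ito-Follmer}) applied to a primitive of $\phi$ in the spatial variable, a representation as a sum of a boundary term (a $C^1$ function of $x(T)$ and of $g_i(T,x)$), an integral $\int_0^T (\partial_t \text{-term})\, ds$, and an integral $\frac12 \int_0^T (\partial_{xx}\text{-term})\, d\langle x\rangle_s^{\tau}$, where $d\langle x\rangle_s^{\tau} = \sigma^2 x(s)^2\, ds$ on this class. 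Thus $V_{\Phi}(T,x)$ is expressed entirely in terms of $x$ itself and the hindsight factors $g_i(\cdot,x)$, with no genuine stochastic integral remaining.

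Next I would prove continuity term by term. If $x_n \to x$ uniformly in $\mathcal{J}_{\tau}^{\sigma}(x_0)$, then in particular $x_n(T) \to x(T)$, so the boundary term converges by continuity of $G$ (and of the $g_i(T,\cdot)$, which follows from hindsight-factor property (ii), together with property (iii) giving $\bigl|\int_0^T f\, dg(\cdot,x_n) - \int_0^T f\, dg(\cdot,x)\bigr| \le K \max f \,\|x_n - x\|_\infty$). For the Lebesgue-type integrals $\int_0^T (\cdots)\, ds$, uniform convergence of $x_n$ plus continuity of the integrands (which are continuous functions of $x(s)$, $s$, and the $g_i(s,x)$, the latter converging uniformly in $s$ on $[0,T]$ by property (iii) applied with $f\equiv 1$ on subintervals, or more directly by uniform convergence plus the bounded-variation bound) gives convergence by dominated convergence — the integrands are bounded on the compact set swept out by $\{x_n\}\cup\{x\}$. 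The term $\frac12\int_0^T \partial_{xx}(\cdots)\, \sigma^2 x(s)^2\, ds$ is handled the same way, since $x(s)^2$ converges uniformly as well.

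The main obstacle is controlling the hindsight-factor terms: the integrands $g_i(s,x)$ appear inside the outer integrals, so I need not just that $g_i(T,x_n)\to g_i(T,x)$ but that $s\mapsto g_i(s,x_n)$ converges to $s\mapsto g_i(s,x)$ uniformly on $[0,T]$, and similarly for the Stieltjes integrals $\int_0^t f(s)\, dg_i(s,x)$ that arise from the It\^o expansion of the drift-in-$g$ part of $G$. This is exactly what property (iii) of a hindsight factor is designed to supply: applying it on $[0,t]$ for each $t$ (with $f$ a suitable continuous test function, e.g. the relevant partial derivative of $G$ along the trajectory) yields a modulus of continuity that is $O(\|x_n - x\|_\infty)$ uniformly in $t$. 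Once all the integrand families are shown to converge uniformly and to be uniformly bounded, the passage to the limit in each integral is routine, and assembling the pieces gives $V_{\Phi}(T,x_n)\to V_{\Phi}(T,x)$, i.e. $\Phi$ is V-continuous with respect to $d$.
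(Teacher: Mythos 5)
Your proposal is correct in substance, but it takes a genuinely different route from the paper: the paper disposes of Proposition \ref{smooth-continuity} in one line by citing Lemma 4.5 of \cite{valkeila-2}, whereas you supply the argument directly, and in doing so you essentially reproduce the technique the paper itself uses later for the jump case (Proposition \ref{sContinuousStrategiesII}): integrate $G$ in the spatial variable to get a primitive in $C^{1,2,1}$, apply the It\^o--F\"ollmer formula (\ref{Ito-Follmer}) with the hindsight factors playing the role of the bounded-variation arguments $y^i$, and thereby rewrite $\int_0^T\phi(s,x)\,dx(s)$ as a boundary term plus Lebesgue integrals plus Stieltjes integrals against $dg_i(\cdot,x)$, each of which is stable under uniform perturbation of $x$ (using $d\langle x\rangle^{\tau}_s=\sigma^2x(s)^2ds$ and property (iii) of hindsight factors, which with $f\equiv 1$ indeed gives $\sup_t|g_i(t,x_n)-g_i(t,x)|\le K\|x_n-x\|_\infty$ since $g_i(0,\cdot)$ is constant on $\mathcal{J}(x_0)$). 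What your route buys is a self-contained pathwise proof inside the paper's own toolkit; what the citation buys is brevity. Two loose ends you should tighten: (a) for the Stieltjes terms both the integrand and the integrator depend on $n$, and uniform convergence of the integrands alone does not suffice without a variation bound --- split as $\int_0^T(F_n-F)\,dg_i(\cdot,x)$ plus the difference of integrals of the fixed continuous function $F_n-F$ against $dg_i(\cdot,x_n)$ and $dg_i(\cdot,x)$, bounding the first by $\|F_n-F\|_\infty$ times the (finite) variation of $g_i(\cdot,x)$ and the second by property (iii); (b) you carry the term $\int_0^T\psi(s,x)\,r\,ds$ in the self-financing identity but never address its continuity in $x$; for $r>0$ this requires an additional (Gronwall-type) argument, while the paper's surrounding development effectively works with $r=0$, so you should either restrict to that case or add the extra step.
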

\noindent
Proposition \ref{smooth-continuity} follows immediately from Lemma
4.5 in \cite{valkeila-2}.

\vspace{.1in}
\noindent
The following result is well known (\cite{freedman}), we present a proof for
completeness.
\begin{lemma}\label{trajectories-tube}
Let $y$ be a continuous function $y:[0,T]\rightarrow \mathbb{R}$
with $y(0)=0$. If $W$ is a Brownian motion defined on a probability
space $(\Omega, \mathcal{F}, P)$ then for all $\epsilon>0$,
\begin{equation}\label{posit}
P\left(\omega: \sup_{s \in
[0,T]}\left|W_s(\omega)-y(s)\right|<\epsilon \right)>0.
\end{equation}
\end{lemma}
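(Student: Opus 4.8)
The plan is to reduce the statement to the well-known support theorem / Cameron–Martin positivity for Brownian motion, exploiting the fact that a continuous $y$ with $y(0)=0$ can be uniformly approximated by smooth functions, and that a smooth drift can be removed by an equivalent change of measure. First I would fix $\epsilon>0$ and, using uniform continuity of $y$ on the compact interval $[0,T]$, choose a continuously differentiable function $g:[0,T]\to\mathbb{R}$ with $g(0)=0$ and $\sup_{s\in[0,T]}|y(s)-g(s)|<\epsilon/2$; for instance a piecewise-linear interpolation suitably mollified, or a polynomial via Weierstrass. It then suffices to prove
\[
P\Big(\sup_{s\in[0,T]}|W_s-g(s)|<\epsilon/2\Big)>0,
\]
since on that event $\sup_s|W_s-y(s)|\le \sup_s|W_s-g(s)|+\sup_s|g(s)-y(s)|<\epsilon$.

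Next I would introduce the process $\tilde W_s \equiv W_s - g(s)$. Because $g$ is $C^1$ with $g(0)=0$, its derivative $\dot g$ is bounded on $[0,T]$, so Novikov's condition holds and Girsanov's theorem applies: under the equivalent probability measure $Q$ defined by $dQ/dP = \exp\big(\int_0^T \dot g(s)\,dW_s - \tfrac12\int_0^T \dot g(s)^2\,ds\big)$, the process $\tilde W$ is a standard Brownian motion. Hence
\[
P\Big(\sup_{s\in[0,T]}|W_s - g(s)|<\epsilon/2\Big)
= P\Big(\sup_{s\in[0,T]}|\tilde W_s|<\epsilon/2\Big),
\]
and since $P$ and $Q$ are mutually absolutely continuous, the latter probability is strictly positive if and only if $Q\big(\sup_s|\tilde W_s|<\epsilon/2\big)>0$, i.e.\ if and only if a standard Brownian motion stays in the tube $(-\epsilon/2,\epsilon/2)$ with positive probability. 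This last fact is classical: the event $\{\sup_{s\in[0,T]}|B_s|<\epsilon/2\}$ is an open neighbourhood of the constant path $0$ in the support of Wiener measure, and it has positive probability — one can see this directly from the reflection principle, which gives an explicit series expression for $P(\sup_{s\le T}|B_s|<a)$ that is manifestly positive, or from the small-ball estimate $P(\sup_{s\le T}|B_s|<a)\ge c\exp(-\pi^2 T/(8a^2))>0$.

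The only mildly delicate point is the measure-change step: one must ensure the Radon–Nikodym density is a genuine $P$-martingale (so that $Q$ is a probability measure and Girsanov applies), which is immediate here since $\dot g$ is bounded and Novikov's criterion is trivially satisfied. Everything else is routine: the approximation of $y$ by a $C^1$ function uses only compactness of $[0,T]$, and the positivity of the Brownian tube probability is standard. I would therefore organize the write-up as: (1) approximate $y$ by $g\in C^1$; (2) change measure to remove the drift $\dot g$; (3) invoke positivity of the small-ball / tube probability for standard Brownian motion; (4) pull back through the equivalence of measures and the triangle inequality to conclude \eqref{posit}.
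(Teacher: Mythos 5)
Your proof is correct, but it takes a genuinely different route from the one in the paper. The paper argues by hand: it partitions $[0,T]$ finely enough that the oscillation of $y$ on each subinterval is below $\epsilon/3$, introduces on each subinterval the events $A_i$ (the Brownian path stays within $\epsilon/2$ of its starting value) and $B_i$ (the Brownian increment matches the increment of $y$ to within $\epsilon/6M$), notes each intersection $A_iB_i$ has positive probability, multiplies these probabilities using independence of increments over disjoint intervals, and shows by a telescoping/triangle-inequality estimate that the intersection of all these events is contained in the tube event. Your argument instead approximates $y$ uniformly by a function $g$ in the Cameron--Martin space with $g(0)=0$, removes the drift $\dot g$ by a Girsanov change of measure (Novikov being trivial since $\dot g$ is deterministic and bounded), invokes positivity of the Brownian small-ball probability, and pulls the conclusion back through equivalence of the measures; this is essentially the classical support-theorem proof. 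The paper's version is more elementary and self-contained --- it needs nothing beyond independence of increments and positivity of elementary Gaussian events --- whereas yours is shorter modulo quoting two heavier standard facts, and it generalizes more readily to other Gaussian drivers for which a Cameron--Martin-type theorem and small-ball estimates are available (a direction the paper itself touches when it invokes small-ball results for fractional Brownian motion in Section \ref{stoch-examples}). Two cosmetic points: you do not actually need the mollification, since Cameron--Martin/Girsanov only requires $g$ absolutely continuous with square-integrable derivative and $g(0)=0$, so the piecewise-linear interpolant already suffices; and the identity $P\left(\sup_{s}|W_s-g(s)|<\epsilon/2\right)=P\left(\sup_{s}|\tilde W_s|<\epsilon/2\right)$ is just a rewriting of the same event, the real step being the equivalence $P\sim Q$ that converts positivity under $Q$ into positivity under $P$.
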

\begin{proof}
Function $y$ is continuous on $[0,T]$, therefore is uniformly continuous so for
all $\epsilon>0$, there exists $\delta>0$ such that
\[
|t_2-t_1|< \delta \Longrightarrow |y(t_2)-y(t_1)|< \epsilon/3
\]
Let $M$ be an integer, $M>T/\delta$, and define points $s_i=iT/M$, for $i=0,\ldots,M$.
By definition $|s_{i+1}-s_i|< \delta$, so $|y(s_{i+1})-y(s_{i})|< \epsilon/3$.\\
Define for all $ 1\leq i \leq M$
\[
A_i=\left\{\omega:\sup_{s_{i-1}\leq t \leq s_i} \left|W_t-W_{s_{i-1}} \right| <\epsilon/2 \right\}
\]
\[
B_i=\left\{\omega: \left|(W_{s_i}-W_{s_{i-1}})-(y(s_i)-y(s_{i-1})) \right| < \epsilon/ 6 M\right\}
\]
and $\Omega_i=A_i B_i$. It is immediate, from results in \cite{He},
that $P(\Omega_i)>0$. On the other hand, it is obvious that events
$\Omega_i$ and $\Omega_j$ are independent for $i \neq j$ since
increments of Brownian motions on disjoint intervals are
independent. So
\begin{equation}\label{positive}
P\left(\prod_{i=1}^M \Omega_i\right)=\prod_{i=1}^M P(\Omega_i)>0
\end{equation}
We will prove now that
\begin{equation}\label{inclusion}
\prod_{i=1}^M \Omega_i \subset \left\{ \omega: \sup_{s \in [0,T]}\left|W_s(\omega)-y(s)\right|<\epsilon  \right\}
\end{equation}
 Let $\omega \in \prod_{i=1}^M \Omega_i$, then $\omega \in \prod_{i=1}^M B_i$ so for all $k=1,\ldots, M-1$,
applying triangular inequality:
\begin{equation}
\left|W_{s_k}(\omega)-y(s_k)\right| \leq \sum_{i=1}^k \left|(W_{s_i}(\omega)-W_{s_{i-1}}(\omega))-(y(s_i)-y(s_{i-1})) \right|<
k \epsilon/6M \leq \epsilon/6
\end{equation}
Also $\omega \in A_{k+1}$, so $\left|W_t(\omega)-W_{s_{k}}(\omega) \right| <\epsilon/2$ for all $t \in [s_k, s_{k+1}]$.
We also know that $\left|y(s_k)-y(t)\right| \leq \epsilon/3$ for all $t \in [s_k, s_{k+1}]$. Using again triangular
inequality:
\begin{equation}\label{inequality}
\left|W_t(\omega)-y(t)\right| \leq \left|W_t(\omega)-W_{s_{k}}(\omega) \right| + \left|W_{s_k}(\omega)-y(s_k)\right| +
    \left|y(s_k)-y(t)\right| < \epsilon/2 + \epsilon/6 + \epsilon/3=\epsilon
\end{equation}
is valid for all $t \in [s_k, s_{k+1}]$ for all $k$ so (\ref{inequality}) is valid for all $t \in [0,T]$ and (\ref{inclusion})
is true. From (\ref{inclusion}) and (\ref{positive}) we obtain (\ref{posit}) and the Lemma is proved
\end{proof}

\vspace{.1in}
A main consequence of Theorem \ref{main-no-arbitrage} and the previous definitions
and results is the following Theorem.

\begin{theorem}\label{no-arbitrage-BS}
Let $(Z, \mathcal{A}_{BS}^Z)$ be the Black-Scholes stochastic market defined by
\[
Z_t=x_0 e^{\left(\mu-\sigma^2/2 \right)t + \sigma W_t},
\]
where $\mu$ and $\sigma>0$ are constant real numbers, $W$ is a Brownian Motion, and
$\mathcal{A}_{BS}^Z$ is the class of all admissible strategies for $Z$. Consider the
class of trajectories $\mathcal{J}_{\tau}^{\sigma}$ with the uniform topology. We have:
\begin{itemize}
\item [i)] The NP market  $(\mathcal{J}_{\tau}^{\sigma}, [\mathcal{A}_{BS}^Z])$ is NP arbitrage-free.
\item [ii)] $[\mathcal{A}_{BS}^Z]$ contains: \\
a) the smooth strategies such that the hindsight factors $g_i$ satisfy that $g_i(t,X)$ are
$(\mathcal{F}_{t-})$-measurable,\\
b) delta hedging strategies.
 \end{itemize}
\end{theorem}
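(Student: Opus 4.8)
The strategy is to apply the machinery already assembled in the excerpt, with Theorem \ref{main-no-arbitrage} as the engine for part (i) and Corollary \ref{isomophic-v-cont} as the tool for part (ii). First I would verify the two hypotheses $C_0$ and $C_1$ of Theorem \ref{main-no-arbitrage} for the pair $(Z, \mathcal{A}_{BS}^Z)$ and $\mathcal{J} = \mathcal{J}_{\tau}^{\sigma}$. For $C_0$: almost every Brownian path $W(\omega)$ lies in $\mathcal{Z}_{\mathcal{T}}([0,T])$ (this is the Lévy-type result cited via \cite{klein}, valid when $\mathcal{T}$ is refining), hence $Z_t(\omega) = x_0 e^{(\mu - \sigma^2/2)t + \sigma W_t(\omega)}$ has, writing $z(t) = (\mu - \sigma^2/2)t/\sigma + W_t(\omega)$, the form $x_0 e^{\sigma z(t)}$ with $z \in \mathcal{Z}_{\mathcal{T}}$ because adding the finite-variation drift $(\mu-\sigma^2/2)t/\sigma$ does not change the quadratic variation $[z]_t^{\mathcal{T}} = t$; so $Z(\omega) \in \mathcal{J}_{\tau}^{\sigma}$ a.s., exactly as in Remark \ref{generalizedClasses}. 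For $C_1$, the small ball property: given $x \in \mathcal{J}_{\tau}^{\sigma}$, write $x(t) = x_0 e^{\sigma z(t)}$ with $z$ continuous, $z(0)=0$; since $u \mapsto x_0 e^{\sigma u}$ is uniformly continuous on compacts, $\|Z(\omega) - x\|_\infty < \epsilon$ will follow from $\sup_{s}|W_s(\omega) + (\mu-\sigma^2/2)s/\sigma - z(s)| < \epsilon'$ for suitable $\epsilon'$, and the latter event has positive probability by Lemma \ref{trajectories-tube} applied to the continuous target function $y(s) \equiv z(s) - (\mu-\sigma^2/2)s/\sigma$ with $y(0)=0$. This establishes $C_1$, and then Theorem \ref{main-no-arbitrage} gives part (i) immediately: $(\mathcal{J}_{\tau}^{\sigma}, [\mathcal{A}_{BS}^Z])$ is NP-arbitrage-free because the Black--Scholes market is arbitrage-free.

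For part (ii), I would show membership in $[\mathcal{A}_{BS}^Z]$ by exhibiting, for each portfolio $\Phi$ in question, an isomorphic $\Phi^z \in \mathcal{A}_{BS}^Z$ and invoking Corollary \ref{isomophic-v-cont}; this requires (a) $\Phi$ NP-admissible, (b) $\Phi$ V-continuous, and (c) the isomorphic counterpart $\Phi^z(t,\omega) = \Phi(t, Z(\omega))$ admissible in the stochastic sense. For the delta-hedging strategy of Theorem \ref{Non-stoch-BS}, Corollary \ref{deltaHedgingAdmissible} already gives NP-admissibility and V-continuity; the isomorphic stochastic portfolio is the classical Black--Scholes delta hedge $(v(t,Z_t) - \varrho(t,Z_t)Z_t, \varrho(t,Z_t))$, which is a standard admissible self-financing strategy (value process $v(t,Z_t) \geq 0$, predictable since $Z$ is continuous, self-financing by the stochastic Itô formula matching the pathwise Föllmer computation). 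For the smooth strategies with $(\mathcal{F}_{t-})$-measurable hindsight factors: V-continuity with respect to the uniform distance is exactly Proposition \ref{smooth-continuity}; NP-admissibility of $\Phi$ and admissibility of the isomorphic $\Phi^z$ are handled by Propositions \ref{NP-bond-admiss} and \ref{bond-admiss} from the appendix, with the measurability hypothesis on the $g_i$ ensuring $\Phi^z$ is predictable. Feeding each case into Corollary \ref{isomophic-v-cont} yields $\Phi \in [\mathcal{A}_{BS}^Z]$.

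The main obstacle, and the step deserving the most care, is the small-ball verification $C_1$: one must pass from a tube around $W$ (Lemma \ref{trajectories-tube}) to a tube around $Z = x_0 e^{\sigma z}$ in the \emph{uniform} metric on the price paths, controlling the exponential map and the drift $(\mu-\sigma^2/2)t$ uniformly in $t \in [0,T]$. The exponential is Lipschitz on the relevant compact range of values, so this is a genuine but routine estimate; the subtlety is making sure the target function fed to Lemma \ref{trajectories-tube} is continuous with value $0$ at $t=0$, which holds since $z$ is continuous with $z(0)=0$ and the drift term vanishes at $0$. A secondary point requiring attention is checking that the drift does not disturb membership $Z(\omega)\in\mathcal{J}_\tau^\sigma$, i.e.\ that $[\,\cdot + \text{finite variation}\,]^{\mathcal{T}} $ equals $[\,\cdot\,]^{\mathcal{T}}$ along $\mathcal{T}$; this follows from (\ref{generalDecomposition}) and the fact that cross terms with a finite-variation function vanish in the mesh limit, and is essentially the content of Remark \ref{generalizedClasses}. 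Everything else is assembly of results already proved.
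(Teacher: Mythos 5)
Your proposal is correct and follows essentially the same route as the paper: part (i) by verifying $C_0$ and $C_1$ (the latter via Lemma \ref{trajectories-tube}) and invoking Theorem \ref{main-no-arbitrage} together with the classical arbitrage-freeness of Black--Scholes, and part (ii) via Propositions \ref{smooth-continuity}, \ref{bond-admiss}, \ref{NP-bond-admiss}, Corollary \ref{deltaHedgingAdmissible} and Corollary \ref{isomophic-v-cont}. In fact you spell out the drift and exponential-map estimates behind $C_0$ and $C_1$ that the paper leaves implicit, which is a welcome amplification rather than a deviation.
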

\begin{proof}
i) By the definition of $Z$ and $\mathcal{J}_{\tau}^{\sigma}$
clearly condition $C_0$ in Theorem \ref{main-no-arbitrage} is
satisfied. Also, condition $C_1$ from Theorem \ref{main-no-arbitrage} follows from Lemma
\ref{trajectories-tube}. As $(Z, \mathcal{A}_{BS}^Z)$ is
arbitrage-free (see for example \cite{delbaen}) then the NP market
$(\mathcal{J}_{\tau}^{\sigma}, [\mathcal{A}_{BS}^Z])$ is NP
arbitrage-free according to Theorem \ref{main-no-arbitrage}.

\noindent ii) Let $\Phi$ be a smooth strategy over
$\mathcal{J}_{\tau}^{\sigma}$ As the trajectories in
$\mathcal{J}_{\tau}^{\sigma}$ are continuous, condition (\ref{NP-pred-g})
in Proposition \ref{NP-bond-admiss} holds, therefore
$\Phi$ is NP-admissible; $\Phi$ is also V-continuous as consequence of
Proposition \ref{smooth-continuity}.
Define a.s. $\Phi^z$ as
$\Phi^z(t,\omega)=\Phi(t,Z(\omega))$; Proposition \ref{bond-admiss}
shows that the stochastic portfolio $\Phi^z$ is predictable, LCRL
and self-financing. The admissibility of $\Phi^z$ results from ii)
in Definition \ref{smoothStrategies}, hence $\Phi^z \in
\mathcal{A}_{BS}^Z$.  As $\Phi$ and $\Phi^z$ are isomorphic
and $\Phi$ is V-continuous, Corollary \ref{isomophic-v-cont} applies
so $\Phi$ is connected to $\Phi^z$ and $\Phi \in [\mathcal{A}_{BS}^Z]$.
For the hedging
strategies the same arguments apply and the V-continuity and
admissibility follow by an application of Corollary
\ref{deltaHedgingAdmissible}.
\end{proof}

\begin{remark} In the framework of Theorem \ref{no-arbitrage-BS},
where trajectories in $\mathcal{J}$ are continuous it is not
difficult to see that $\tilde{g}(t,x)=\min_{0 \leq s \leq t}x(s)$, as
well as the maximum and the average, are
hindsight factors over $\mathcal{J}$ (see \cite{valkeila-2}),
moreover $\tilde{g}(t,X)$ is
a $(\mathcal{F}_{t-})$-measurable random variable.
\end{remark}

\begin{remark}\label{simple-piece-wise}
It can be proved that $[\mathcal{A}_{BS}^Z]$ also contains simple
(piece-wise constant) portfolio strategies
satisfying
\[
\phi_t=\sum_{l=1}^L 1_{(s_{l-1},s_l]}(t)G(t,x(s_{l-1}))
\]
where $0=s_0 < s_1 < \cdots < s_L=T$, the $s_i$ are deterministic
and $G$ is $C^{1}$. This is consequence of Remark 4.6 of \cite{valkeila-2}.
\end{remark}

Theorem \ref{no-arbitrage-BS} is the analogous in our framework of
the known absence of arbitrage in the Black-Scholes model, a
property that in fact we use in the above proof.

In a classical stochastic framework, the absence of arbitrage is
equivalent to the existence of at least one risk neutral probability
measure, the next example shows a possible trajectory class which
has no obvious probabilistic counterpart.

\begin{example} \label{rational-end-points}
Define the class
\[
\mathcal{J}_{\tau, \mathbb{Q}}^{\sigma}=\left\{ x \in
\mathcal{J}_{\tau}^{\sigma}: x(T) \in \mathbb{Q}\right\}
\]
where $\mathbb{Q}$ is the set of rational numbers.\\
Consider $[\mathcal{A}_{BS}^Z]$ as defined in Theorem \ref{no-arbitrage-BS}.
Let  $\mathcal{A}^{V} \subset [\mathcal{A}_{BS}^Z]$ be the class of all
$V$-continuous portfolios in $[\mathcal{A}_{BS}^Z]$. Item $ii)$ in Theorem
\ref{no-arbitrage-BS}, $\mathcal{A}^{V}$ is a large class of portfolios
which also satisfies that the market $(\mathcal{J}_{\tau}^{\sigma}, \mathcal{A}^{V})$
is NP-arbitrage free. Let $\mathcal{A}^{V}_{\mathcal{J}_{\tau, \mathbb{Q}}^{\sigma}}$
be the restriction of portfolio strategies in $\mathcal{A}^{V}$ to
the subclass of trajectories $\mathcal{J}_{\tau, \mathbb{Q}}^{\sigma}$.
As $\mathcal{J}_{\tau, \mathbb{Q}}^{\sigma}$ is dense on $\mathcal{J}_{\tau}^{\sigma}$, applying
Proposition \ref{subclasses} we conclude that the market $(\mathcal{J}_{\tau,
\mathbb{Q}}^{\sigma},\mathcal{A}^{V}_{\mathcal{J}_{\tau, \mathbb{Q}}^{\sigma}})$ is NP-arbitrage-free.
\end{example}

The absence of arbitrage for model in Example
\ref{rational-end-points} and replicating portfolio in Theorem
\ref{Non-stoch-BS} imply that it is possible to price derivatives
using the Black-Scholes formula also for this model, even if there
is no obvious intuitive measure over the possible set of
trajectories. In fact, the set $\mathcal{J}_{\tau,
\mathbb{Q}}^{\sigma}$ has null probability under the Black-Scholes
model, therefore if a measure is defined over this set, it will not
be absolutely continuous with respect to the Wiener measure. Hence,
it is not clear  how to price derivatives under a stochastic model
following a risk neutral approach,
if the trajectories of the asset price process belong to
$\mathcal{J}_{\tau, \mathbb{Q}}^{\sigma}$.

\subsection{Non Probabilistic Geometric Poisson Model}  \label{hedgingAndNoArbitrageInClassesWithJumps}
This section studies hedging and arbitrage in specific examples of trajectory classes with jumps.
Denote by $\mathcal{N}([0, T])$ the collection of all
functions $n(t)$ such that there exists a non negative integer $m$
and positive numbers $0 < s_1 < \ldots < s_m < T$ such that
$n(t)=\sum_{s_i \leq t} 1_{[0,t]}(s_i)$. The function $n(t)$ is
considered as identically zero on $[0, T]$ whenever $m=0$.

The following class of real valued functions will be another example
of possible trajectories for the asset price.
\begin{itemize}

\item Given constants $\mu, a \in \mathbb{R}$ and $x_0 > 0$, let $ \mathcal{J}^{a,\mu}(x_0)$ to be the class of all functions
$x$ for which exists $n(t) \in \mathcal{N}([0, T])$ such that:
\begin{equation} \label{eq:poisson}
x(t)=x_{0}e^{\mu t}(1+a)^{n(t)}.
\end{equation}
\end{itemize}
The function $n(t)$ counts the number of jumps present in the path
$x$ until, and including, time $t$. Note also that the definition of
$ \mathcal{J}^{a, \mu}(x_0)$ does not depend on the particular
subdivision $\mathcal{T}$ used elsewhere in the paper.

The natural probabilistic counterpart for this model is the Geometric Poisson model
\begin{equation}  \nonumber 
Z_t = x_0 e^{\mu t}~(1 + a)^{N^p_t},
\end{equation}
where $ N^P = (N^P_t)$ is a Poisson process on a filtered
probability space $(\Omega, \mathcal{F}, \mathcal{F}_t, P)$. Notice
that $P(Z(w) \in \mathcal{J}^{a, \mu}(x_0)) =1$. Even if this
stochastic model has limited practical use in finance, it has
theoretical importance because, together with the Black-Scholes
model, they are the only exponential L\'evy models leading to
complete markets, see \cite{cont}.

\begin{remark}
The class $ \mathcal{J}^{a,\mu}(x_0)$ includes trajectories of processes different than
the Geometric Poisson model, in fact if $N$ is a renewal process, trajectories of the process
Z defined as $Z_t= x_0 e^{\mu t}~(1 + a)^{N_t}$ are also in $\mathcal{J}^{a, \mu}(x_0)$.
\end{remark}
A replicating portfolio for trajectories in $\mathcal{J}^{a,\mu}$
corresponds to the probabilistic-free version of the hedging
strategy associated to the Geometric Poisson model, see \cite{carr}.

 Suppose we have an European type derivative with payoff $h(x(T))$. For simplicity
we consider interest rate $r=0$. We are looking for a NP-admissible
portfolio strategy that perfectly replicates the payoff $h(x(T))$. The next Theorem provides the answer to this NP hedging question.
\begin{theorem} \label{only-jumps}
Let $\mathcal{J}^{\ast}$ be a class of possible trajectories for the asset price,
$\mathcal{J}^{\ast} \subset \mathcal{J}^{a,\mu}(x_0)$. Consider that $a\mu<0$ and let
$\lambda=-\mu/a$. Define $\tilde{F}(s,t)$ by:
\begin{equation}\nonumber 
\tilde{F}(t, s)=e^{-\lambda(T-t)}\sum_{k=0}^\infty \frac{h\left(s e^{\mu (T-t)}(1+a)^k \right)(T-t)^k}{k!}.
\end{equation}
Then, the portfolio $\Phi_t=(\psi_t,\phi_t)$ where
\begin{equation}\nonumber 
\phi_t=\frac{\tilde{F}(t, (a+1)x(t^-))-\tilde{F}(t, x(t^-))}{a~x(t^-)},
\end{equation}
and $\psi_t=\tilde{F}(t, x(t^-))- \phi_t x(t^-)$,
whose initial value is $\tilde{F}(0, x_0)$, replicates the Lipschitz payoff $h(x(T))$ at time $T$ for every
$x \in \mathcal{J}^{\ast}$.
\end{theorem}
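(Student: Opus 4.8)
The plan is to verify directly that the candidate portfolio $\Phi_t = (\psi_t, \phi_t)$ is self-financing and that its terminal value equals $h(x(T))$ along every trajectory $x \in \mathcal{J}^{\ast}$; since each such $x$ has the explicit form $x(t) = x_0 e^{\mu t}(1+a)^{n(t)}$ with $n(t) \in \mathcal{N}([0,T])$, everything reduces to a finite, path-by-path computation. First I would record the key structural observation: $\tilde F$ is built so that $\tilde F(t, s e^{\mu \Delta}) $ evolves by the heat-type recursion associated with the geometric Poisson dynamics, and in fact $\tilde F(t, x(t))$ should solve, between jumps, the integro-differential relation $\partial_t \tilde F(t,x) + \mu x\, \partial_x \tilde F(t,x) + \lambda[\tilde F(t,(1+a)x) - \tilde F(t,x)] = 0$ with terminal condition $\tilde F(T, x) = h(x)$. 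I would check this PDE/ODE identity directly from the series defining $\tilde F$: differentiating the sum in $t$ produces exactly the two correction terms, using $\lambda = -\mu/a$ so that the drift term $\mu x \partial_x$ and the jump term combine correctly. This identity is the analytic engine of the proof and is where I would spend most care.

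Next I would set $G(t,x) \equiv \tilde F(t, x)$ and show that the value process $V_\Phi(t, x) = \psi_t + \phi_t x(t) = \tilde F(t, x(t^-)) + \phi_t(x(t) - x(t^-))$ equals $\tilde F(t, x(t))$ for all $t$: at continuity points $x(t) = x(t^-)$ and this is immediate, while at a jump time $s_i$ the jump size is $x(s_i) - x(s_i^-) = a\, x(s_i^-)$, and plugging this into the definition of $\phi_{s_i}$ gives $\psi_{s_i} + \phi_{s_i} x(s_i) = \tilde F(s_i, x(s_i^-)) + \frac{\tilde F(s_i,(1+a)x(s_i^-)) - \tilde F(s_i, x(s_i^-))}{a\, x(s_i^-)}\cdot a\, x(s_i^-) = \tilde F(s_i, (1+a)x(s_i^-)) = \tilde F(s_i, x(s_i))$. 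Then $V_\Phi(T,x) = \tilde F(T, x(T)) = h(x(T))$, which is the replication claim, and $V_\Phi(0,x) = \tilde F(0, x_0)$ as asserted.

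It remains to verify the self-financing identity $V_\Phi(t,x) = V_0 + \int_0^t \phi(s,x)\, dx_s$ (recall $r = 0$). Here I would use that $x$ is piecewise of the form $x_0 e^{\mu s}(1+a)^k$ on each interval between consecutive jumps, so both sides can be computed explicitly: on a jump-free subinterval $(s_i, s_{i+1})$ the F\"ollmer integral $\int \phi\, dx$ is an ordinary Stieltjes integral against the $C^1$ function $s \mapsto x_0 e^{\mu s}(1+a)^k$, and using the between-jumps ODE for $\tilde F$ one shows $d[\tilde F(s, x(s))] = \phi_s\, dx_s$ there; across a jump $s_{i+1}$ the increment of $\int \phi\, dx$ picks up exactly $\phi_{s_{i+1}}(x(s_{i+1}) - x(s_{i+1}^-))$, which by the computation above equals the jump $\tilde F(s_{i+1}, x(s_{i+1})) - \tilde F(s_{i+1}, x(s_{i+1}^-))$ in $V_\Phi$. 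Summing the contributions over the finitely many intervals and jumps (a telescoping argument) yields $V_\Phi(t,x) - V_0 = \int_0^t \phi_s\, dx_s$. Finally I would note NP-predictability holds because $\phi_t$ and $\psi_t$ depend on $x$ only through $x(t^-)$ and the path up to $t^-$, and LCRL regularity is clear; existence of the F\"ollmer integral is automatic since on each trajectory it is a finite sum of Stieltjes integrals plus jump terms. The main obstacle is the clean verification of the between-jumps identity for $\tilde F$ and making sure the constant $\lambda = -\mu/a$ (with the sign hypothesis $a\mu < 0$ guaranteeing $\lambda > 0$) is used consistently; once that identity is in hand, the rest is bookkeeping over the finite jump structure of $\mathcal{N}([0,T])$.
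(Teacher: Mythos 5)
The paper itself does not prove this theorem: it only remarks that the proof can be extracted from \cite{carr} and translated pathwise using ordinary calculus. Your plan is exactly that translation, and its structure is sound: the between-jumps identity for $\tilde F$, the observation that $V_\Phi(t,x)=\tilde F(t,x(t))$ both at continuity points and across jumps (since $\Delta x_{s_i}=a\,x(s_i^-)$ makes $\phi_{s_i}\Delta x_{s_i}$ exactly the jump of $\tilde F$), and the telescoping of Stieltjes pieces plus jump terms to get the self-financing identity for the F\"ollmer integral, which on these piecewise-smooth, finitely-jumping paths is just a Lebesgue--Stieltjes integral of a left-continuous integrand. Predictability, LCRL regularity and the terminal identification $V_\Phi(T,x)=\tilde F(T,x(T))=h(x(T))$ are handled correctly.

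One caveat you will hit when you ``check the identity directly from the series'': the relation you state, $\partial_t\tilde F+\mu x\,\partial_x\tilde F+\lambda\bigl[\tilde F(t,(1+a)x)-\tilde F(t,x)\bigr]=0$, is the one the replication argument needs (since $\mu/a=-\lambda$), and it holds for the Carr value function whose $k$-th weight is $(\lambda(T-t))^k/k!$. The series as printed in the statement has weight $(T-t)^k/k!$ and instead satisfies $\partial_t\tilde F+\mu x\,\partial_x\tilde F+\tilde F(t,(1+a)x)-\lambda\tilde F(t,x)=0$, so with the literal formula the portfolio is self-financing and replicating only when $\lambda=1$ (test $h\equiv 1$: then $\phi\equiv 0$ but $V_\Phi(t,x)=e^{(1-\lambda)(T-t)}$ is not constant). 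So you should restore the factor $\lambda^k$ (i.e.\ treat the displayed $\tilde F$ as carrying a typo, consistent with \cite{carr}); with that correction your argument goes through, and you should also add a line justifying absolute convergence and term-by-term differentiation of the series (Lipschitz growth of $h$ makes this routine).
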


We will not provide a proof of Theorem \ref{only-jumps} as it can be easily extracted from \cite{carr} even though that reference obtains a probabilistic result considering
$n(t)$ (as appears in equation  (\ref{eq:poisson})) to be a Poisson process $N^P_t$. The proof in  \cite{carr} can be translated
to our non probabilistic model in a straightforward way. It is important to remark that such a proof would use only ordinary calculus.

Theorem \ref{only-jumps}  can be generalized to the case where  $a$ and $\mu$ are considered no longer
as constants but known deterministic functions $a(t)$ and $\mu(t)$ such that $\mu(t) a(t) < 0$
for all $t$. A proof of this result is contained in \cite{alvarez}.

\subsubsection{Arbitrage in $\mathcal{J}_{}^{a, \mu}$}

Next we concentrate on establishing the absence of NP arbitrage for
the class of trajectories $\mathcal{J}^{a, \mu}$ to this end we will
apply Theorem \ref{main-no-arbitrage}. It remains to select an
appropriate metric $d$. Instead of using the uniform distance, for
models with jumps we will use the Skorohod's distance $d_s$; for a
definition of this distance and its associated topology we refer the
reader to \cite{billingsley}.

The next proposition gives sufficient conditions for the V-continuity
of a portfolio over $\mathcal{J}^{a, \mu}$ with respect to the Skorohod's
metric.

\begin{proposition}  \label{sContinuousStrategies}
A portfolio $\Phi_t=(\psi_t, \phi_t)$ on $\mathcal{J}^{a, \mu}$ for which
the amount invested in the stock $\phi_t=\phi(t,x_{t-})$ is such that
$\phi \in C([0,T]\times\mathbb{R})$ is V-continuous relative to the Skorohod's
topology.
\end{proposition}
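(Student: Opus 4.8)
The plan is to verify directly that $V_\Phi(T,\cdot)\colon\mathcal J^{a,\mu}(x_0)\to\mathbb R$ is sequentially continuous for the Skorohod metric $d_s$. First I would write down a workable formula for this functional. Being a portfolio of an NP-market, $\Phi$ is in particular NP-self-financing, so (taking $r=0$ for simplicity, as in this subsection; the case $r>0$ is analogous) $V_\Phi(T,x)=V_0+\int_0^T\phi(s,x_{s-})\,dx_s$, with $V_0$ a path-independent constant. Every $x\in\mathcal J^{a,\mu}(x_0)$ has the form $x(t)=x_0e^{\mu t}(1+a)^{n(t)}$, hence is of finite variation with finitely many jumps $0<s_1<\cdots<s_{m(x)}<T$, each of the prescribed size $\Delta x_{s_i}=a\,x(s_i-)=a\,x_0e^{\mu s_i}(1+a)^{i-1}$; thus the F\"ollmer integral reduces to an ordinary Lebesgue--Stieltjes integral and
\[
V_\Phi(T,x)=V_0+\mu\int_0^T\phi(s,x(s-))\,x(s)\,ds+a\sum_{i=1}^{m(x)}\phi\bigl(s_i,x(s_i-)\bigr)\,x(s_i-).
\]
It therefore suffices to prove that each of the two terms on the right depends $d_s$-continuously on $x$.

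The key step is a structural lemma on Skorohod convergence inside $\mathcal J^{a,\mu}$ (for the substantive case $a\ne0$, $1+a>0$; when $a=0$ the class is a singleton and there is nothing to prove): if $x_n\to x$ in $d_s$ with $x_n,x\in\mathcal J^{a,\mu}$, then for all large $n$ the jump count stabilizes, $m(x_n)=m(x)=m$, the ordered jump times satisfy $s^{(n)}_i\to s_i$ for $i=1,\dots,m$, and $x_n(t)\to x(t)$ and $x_n(t-)\to x(t-)$ for every $t\notin\{s_1,\dots,s_m\}$. I would derive this from the time-change characterization of $J_1$-convergence in \cite{billingsley}: there are increasing homeomorphisms $\lambda_n$ of $[0,T]$ with $\|\lambda_n-\mathrm{id}\|_\infty\to0$ and $\|x_n\circ\lambda_n-x\|_\infty\to0$. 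A jump of $x$ at $s_i$, of fixed nonzero size, forces $x_n\circ\lambda_n$ to carry a jump of almost that size near $s_i$; since every jump of a trajectory in $\mathcal J^{a,\mu}$ is one of the quantized nonzero values $a\,x_0e^{\mu s}(1+a)^{k-1}$, this pins down exactly one jump of $x_n$ near each $s_i$, forbids spurious jumps away from $\{s_1,\dots,s_m\}$, and a compactness argument (an escaping or unboundedly growing family of jumps would destroy $\|x_n\circ\lambda_n-x\|_\infty\to0$) keeps $m(x_n)$ bounded. Reading $s^{(n)}_i$ off as the $\lambda_n$-preimage of the $i$-th jump point of $x_n\circ\lambda_n$ gives the stated convergences. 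This lemma is the main obstacle; the rest is routine.

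Granting the lemma, I would fix a sequence $x_n\to x$ in $d_s$. For large $n$ all these trajectories take values in the compact set $\{x_0e^{\mu t}(1+a)^k:0\le t\le T,\ 0\le k\le m\}$, on which $\phi$ is bounded, say by $M$, and uniformly continuous. For every $s$ outside the finite set $\{s_1,\dots,s_m\}$ one has $x_n(s)\to x(s)$ and $x_n(s-)\to x(s-)$, hence $\phi(s,x_n(s-))\,x_n(s)\to\phi(s,x(s-))\,x(s)$, while these integrands are dominated by the constant $M\,x_0e^{|\mu|T}(1+a)^{m}$; dominated convergence then gives convergence of the integral term. For the finite jump sum, $s^{(n)}_i\to s_i$ and $x_n(s^{(n)}_i-)=x_0e^{\mu s^{(n)}_i}(1+a)^{i-1}\to x(s_i-)$, so continuity of $\phi$ makes each of the $m$ summands converge, and hence the sum converges. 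This yields $V_\Phi(T,x_n)\to V_\Phi(T,x)$, i.e.\ V-continuity of $\Phi$ with respect to $d_s$.
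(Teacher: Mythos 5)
Your proof is correct and follows essentially the same route as the paper: write $V_\Phi(T,x)$ explicitly as a drift integral plus the finite jump sum, use the fact that Skorohod convergence within $\mathcal{J}^{a,\mu}$ forces the jump count to stabilize and the jump times to converge, and conclude from continuity of $\phi$. The only cosmetic differences are that you justify the jump-stability fact (which the paper asserts in this proof and establishes in a related form in its appendix Lemma on the jump-diffusion class) and use dominated convergence in place of the paper's interval-by-interval computation of the integral term.
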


\begin{proof}
Let $x \in \mathcal{J}^{a,\mu}$ and let $\left \{ x^{(n)} \right \}_{n=0,1...}$ with
$x^{(n)} \in \mathcal{J}^{a,\mu}$ be a sequence that converges to $x$ in the
Skorohod's distance.
Suppose that $x$ has $m$ jumps located at $0< \tau_1 < \ldots <\tau_m <T$. Then for every
$\epsilon>0$ there exists an integer $K>0$ such that for $n>K$, $x^{(n)}$ has exactly
$m$ jumps located at $0< \tau_1^{(n)} < \ldots <\tau_m^{(n)} <T$ and satisfying
that $|\tau_i^{(n)}-\tau_i|<\epsilon$ for $i=1,\ldots,m$. For convenience denote
$\tau_0=\tau_0^{(n)}=0$ and $\tau_{m+1}=\tau_{m+1}^{(n)}=T$.

Next we evaluate $V_{\Phi}(T,\cdot)$, the value of portfolio $\Phi$ at maturity time $T$
for both trajectories $x$ and $x^{(n)}$. Because we are restricting to the case of  interest rate $r=0$, we have
$\forall x \in \mathcal{J}^{a,\mu}$:

\begin{equation}\label{final-value-x}
V_{\Phi}(T,x) = V_{\Phi}(0,x) + \int_0^T \phi^s d x_s = V_0+ \int_0^T \phi(s,x_{s-})dx_s
\end{equation}

Using the particular form of $x$, the integral on (\ref{final-value-x}) can be computed as:

\begin{eqnarray}\label{int-x}
&& \int_0^T \phi(s,x_{s-})dx_s =
\sum_{i=0}^{m} \int_{\tau_i}^{\tau_{i+1}} \phi\left(s,x_0 e^{\mu s}(1+a)^i\right) \mu x_0 e^{\mu s}(1+a)^i ds \\
\nonumber
 && + \sum_{i=1}^{m} \left[ \phi(\tau_i,x_0 e^{\mu \tau_i}(1+a)^i)-\phi(\tau_i,x_0 e^{\mu \tau_i}(1+a)^{i-1}) \right]
 a x_0 e^{\mu \tau_i}(1+a)^{i-1}
\end{eqnarray}

A similar expression applies for $x^{(n)}$ for all $n$:

\begin{equation}\label{int-xn}
  \int_0^T \phi(s,x_{s-}^{(n)})dx_s^{(n)} =
\sum_{i=0}^{m} \int_{\tau_i^{(n)}}^{\tau_{i+1}^{(n)}} \phi\left(s,x_0 e^{\mu s}(1+a)^i\right) \mu x_0 e^{\mu s}(1+a)^i ds \\
+
\end{equation}
\begin{equation} \nonumber
\sum_{i=1}^{m} \left[ \phi(\tau_i^{(n)},x_0 e^{\mu \tau_i^{(n)}}(1+a)^i)-
                            \phi(\tau_i^{(n)},x_0 e^{\mu \tau_i^{(n)}}(1+a)^{i-1}) \right]
 a x_0 e^{\mu \tau_i^{(n)}}(1+a)^{i-1}
\end{equation}

As $\tau_i^{(n)} \to \tau_i$ as $n \to \infty$ integrals and summands in (\ref{int-xn})  converge
to analogous elements in (\ref{int-x}), thus $V_{\Phi}(T,x^{(n)}) \to V_{\Phi}(T,x)$, which
proves that $\Phi$ is a V-continuous portfolio.
\end{proof}

The next Theorem shows that our general Theorem
\ref{main-no-arbitrage} is also useful to establish the
absence of NP-arbitrage in models with jumps.

\begin{theorem}   \label{no-arbitrage-jumps}
Let  $(Z, \mathcal{A}^Z_P)$ be the stochastic market defined by the
geometric Poisson stochastic process introduced before:
\begin{equation}  \label{poissonProcess}
Z_t = x_0 e^{\mu~t}~(1 + a)^{N^p_t},
\end{equation}
and $\mathcal{A}^Z_P$ is the class of admissible strategies for $Z$.
Consider the class of trajectories $\mathcal{J}^{a,\mu}$ endowed with the
Skorohod's topology. We have:
\begin{itemize}
\item [i)] The NP market  $(\mathcal{J}_{\tau}^{a, \mu}, [\mathcal{A}_{P}^Z])$ is NP arbitrage-free.
\item [ii)] $[\mathcal{A}_{P}^Z]$ contains the portfolio strategies from Proposition \ref{sContinuousStrategies} which furthermore satisfy that there exist $A>0$ such that $V_{\Phi}(t,x)>-A$
$\forall t \in [0,T]$, $\forall x \in \mathcal{J}_{\tau}^{a, \mu}$. $[\mathcal{A}_{P}^Z]$ also contains the portfolio strategies defined in Theorem \ref{only-jumps}.
\end{itemize}
\end{theorem}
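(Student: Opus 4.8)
The plan is to transcribe the proof of Theorem~\ref{no-arbitrage-BS} into the jump setting, replacing the Black--Scholes market by the geometric Poisson market $(Z,\mathcal{A}^Z_P)$, the uniform distance by the Skorohod distance $d_s$, and the tube estimate of Lemma~\ref{trajectories-tube} by a Skorohod small--ball estimate for $Z$. For part~(i) I would first check the two hypotheses of Theorem~\ref{main-no-arbitrage}. Condition $C_0$ ($Z(\omega)\subseteq\mathcal{J}^{a,\mu}$ a.s.) is immediate: a.e.\ path of $N^p$ is, on $[0,T]$, a nonnegative integer--valued step function with finitely many jumps, hence lies in $\mathcal{N}([0,T])$, so $Z_t=x_0e^{\mu t}(1+a)^{N^p_t}$ has the form~(\ref{eq:poisson}) --- this was already recorded after the definition of $\mathcal{J}^{a,\mu}$. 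Condition $C_1$ is the substantive point: fix $x\in\mathcal{J}^{a,\mu}$ with jump locations $0<\tau_1<\dots<\tau_m<T$ and fix $\epsilon>0$; for $\eta>0$ let $E_\eta$ be the event that $N^p$ has exactly $m$ jumps on $[0,T]$, at times $\sigma_1<\dots<\sigma_m$ with $|\sigma_i-\tau_i|<\eta$ and none in $(\tau_m+\eta,T]$. Because the joint density of the first $m$ jump times of a Poisson process is strictly positive on $\{0<t_1<\dots<t_m<T\}$ and ``no jump in $(\tau_m+\eta,T]$'' has positive probability, $P(E_\eta)>0$; and on $E_\eta$ the path of $Z$ shares with $x$ both the continuous factor $x_0e^{\mu t}$ and the jump ratio $1+a$, so the piecewise--linear time change $\lambda_\eta$ with $\lambda_\eta(\sigma_i)=\tau_i$ satisfies $\|\lambda_\eta-\mathrm{id}\|_\infty\le\max_i|\sigma_i-\tau_i|<\eta$ and $\|Z\circ\lambda_\eta-x\|_\infty\le x_0(1+a)^m e^{|\mu|T}|\mu|\,\eta$, whence $d_s(Z,x)\le C_x\,\eta$ on $E_\eta$; choosing $\eta$ small gives $P(d_s(Z,x)<\epsilon)\ge P(E_\eta)>0$. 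Finally $(Z,\mathcal{A}^Z_P)$ is arbitrage free (it is in fact a complete market, cf.~\cite{cont}; equivalently, when $a\mu<0$ the measure change turning $N^p$ into a Poisson process of intensity $-\mu/a>0$ is an equivalent martingale measure for $Z$), so Theorem~\ref{main-no-arbitrage} yields that $(\mathcal{J}^{a,\mu},[\mathcal{A}^Z_P])$ is NP--arbitrage free.

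For part~(ii), let $\Phi=(\psi,\phi)$ be as in Proposition~\ref{sContinuousStrategies}, i.e.\ $\phi_t=\phi(t,x_{t-})$ with $\phi\in C([0,T]\times\mathbb{R})$, and assume in addition $V_\Phi(t,x)\ge-A$ on $[0,T]\times\mathcal{J}^{a,\mu}$. I would verify in order: $\Phi$ is NP--admissible --- NP--predictability holds since $\phi_t$ is a functional of $\{x_s:s<t\}$ and $t\mapsto\phi(t,x_{t-})$ is LCRL; the F\"ollmer integral $\int_0^t\phi(s,x_{s-})\,dx_s$ exists for each $x\in\mathcal{J}^{a,\mu}$ because such paths are piecewise smooth with finitely many jumps (the approximating sums converge, as computed in the proof of Proposition~\ref{sContinuousStrategies}); with $r=0$ the self--financing condition then fixes $\psi$; and the lower bound is the added hypothesis. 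Next, define the stochastic portfolio $\Phi^z(t,\omega)\equiv\Phi(t,Z(\omega))$: by $C_0$ and the pathwise calculus it inherits predictability, the self--financing identity (the pathwise F\"ollmer integral agreeing a.s.\ with the It\^o integral $\int_0^t\phi(s,Z_{s-})\,dZ_s$) and the bound $V_{\Phi^z}(t)\ge-A$, so $\Phi^z\in\mathcal{A}^Z_P$ --- here I would invoke Propositions~\ref{bond-admiss} and~\ref{NP-bond-admiss}. Since $\Phi$ and $\Phi^z$ are isomorphic by construction (Definition~\ref{isomorphic}) and $\Phi$ is V--continuous by Proposition~\ref{sContinuousStrategies}, Corollary~\ref{isomophic-v-cont} gives that $\Phi$ is connected to $\Phi^z$, hence $\Phi\in[\mathcal{A}^Z_P]$. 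For the replicating portfolio of Theorem~\ref{only-jumps}, the stock holding is again of the form $\phi_t=\phi(t,x_{t-})$ with $\phi(t,s)=\bigl(\tilde F(t,(1+a)s)-\tilde F(t,s)\bigr)/(a s)$, continuous on $[0,T]\times(0,\infty)$ --- the only values used, since trajectories in $\mathcal{J}^{a,\mu}$ stay positive (for $1+a>0$) --- because $\tilde F$ is a locally uniformly convergent series of continuous functions ($h$ being Lipschitz). Then Proposition~\ref{sContinuousStrategies} yields V--continuity, NP--admissibility follows from Theorem~\ref{only-jumps} since $V_\Phi(t,x)=\tilde F(t,x(t))$ is bounded below for a Lipschitz payoff bounded below, and the same isomorphism/Corollary~\ref{isomophic-v-cont} argument places it in $[\mathcal{A}^Z_P]$.

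The main obstacle is the verification of $C_1$: one must simultaneously (a) produce a positive--probability event on which $N^p$ has exactly the right number of jumps, placed near the prescribed times, which rests on the positivity of the joint density of the Poisson jump times, and (b) carry out the explicit Skorohod--distance estimate showing that matching the jump count and the jump ratio while placing the jump times within $\eta$ forces $d_s(Z,x)=O(\eta)$. The remaining ingredients --- the pathwise It\^o--F\"ollmer calculus for the jump trajectories, admissibility of the continuous--$\phi$ and the hedging strategies, and the passage between $\Phi$ and $\Phi^z$ --- are transcriptions of the already established continuous case via Theorem~\ref{only-jumps}, Proposition~\ref{sContinuousStrategies}, and the appendix propositions.
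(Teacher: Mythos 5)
Your proposal is correct and follows essentially the same route as the paper: verify $C_0$ from $P(Z(\omega)\in\mathcal{J}^{a,\mu}(x_0))=1$, verify $C_1$ by placing the Poisson jump times (via positivity of their distribution and independence of inter-jump times) in small neighborhoods of the jumps of a given $x$ and bounding the Skorohod distance, invoke the classical no-arbitrage of the geometric Poisson market, and apply Theorem \ref{main-no-arbitrage}; for (ii), the same isomorphism plus V-continuity argument through Propositions \ref{bond-admiss}, \ref{NP-bond-admiss}, \ref{sContinuousStrategies} and Corollary \ref{isomophic-v-cont}, applied also to the hedging strategy of Theorem \ref{only-jumps}. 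Your explicit time-change estimate for $C_1$ and the continuity check for $\phi(t,s)=\bigl(\tilde F(t,(1+a)s)-\tilde F(t,s)\bigr)/(as)$ merely spell out details the paper leaves implicit.
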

\begin{proof}
$i)$ The proof is analogous to the proof of Theorem
\ref{no-arbitrage-BS}. As indicated, $P(Z(\omega) \in
\mathcal{J}^{a,\mu}(x_0))=1$, so condition $C_0$ from Theorem
\ref{main-no-arbitrage} holds.
In order to verify condition $C_1$ from Theorem
\ref{main-no-arbitrage}, we argue directly (another
possibility would be to extract the result from the proof of
the more general Lemma \ref{smallballlevy}). Consider
$x \in \mathcal{J}^{a,\mu}(x_0)$ and
suppose that $x$ has $m$ jumps at times $0<s_1 \cdots <s_m<T$. Then,
for all $\epsilon>0$ there exists $\delta>0$ such that if $x' \in
\mathcal{J}^{a,\mu}(x_0)$ has exactly $m$ jumps $0<s'_1 \cdots
<s'_m<T$ with $|s_i-s'_i|< \delta$ then $d_s(x,x')<\epsilon$. We
know that the time between two consecutive jumps of a Poisson
process has exponential distribution (which is absolutely continuous
with respect to the Lebesgue measure on $\mathbb{R}^+$) therefore,
jumps occur in a given interval with positive probability. From
previous analysis and the property of independence of the time
between jumps, we conclude that the set of trajectories $x'$ of the
Geometric Poisson model (\ref{poissonProcess}) having jumps in a
$\delta$-neighborhood of the jumps of any $x \in
\mathcal{J}^{a,\mu}(x_0)$ has positive probability, thus Condition
$C_1$ is verified and so the market $(\mathcal{J}^{a,\mu}(x_0),
[\mathcal{A}_P^Z])$ is NP-arbitrage-free.

\noindent $ii)$ Consider $\Phi$ from Proposition \ref{sContinuousStrategies} satisfying the lower bound assumption  then,
the same arguments used in the proof of $ii)$ of Theorem
\ref{no-arbitrage-BS} apply in this case, namely, the use of Propositions
\ref{bond-admiss}, \ref{NP-bond-admiss} and \ref{sContinuousStrategies},
as well as Corollary  \ref{isomophic-v-cont} prove that
$\Phi \in [\mathcal{A}^Z_P]$. Similar arguments show that
$\Phi \in [\mathcal{A}^Z_P]$ whenever $\Phi$
is one of the hedging strategies introduced in Theorem
\ref{only-jumps}.
\end{proof}

\noindent One important question at this point is whether or not simple portfolio strategies
are V-continuous for the Geometric Poisson model. Next proposition addresses that question.

\begin{proposition}\label{counterex-simple}
In general, simple portfolios strategies are not V-continuous relative to the Skorohod topology in $\mathcal{J}^{a, \mu}(x_0)$.
\end{proposition}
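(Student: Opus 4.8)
The plan is to produce a single simple portfolio $\Phi$ whose terminal value functional $V_\Phi(T,\cdot)$ is not continuous on $\mathcal{J}^{a,\mu}(x_0)$ for the Skorohod metric $d_s$. Fix parameters $x_0>0$, $-1<a<0$ and $\mu>0$ with $e^{\mu T/2}\neq (1+a)^{-1}$ (for instance $x_0=1$, $a=-1/2$, $\mu=1$, $T=1$), and take the simple strategy with the single interior rebalancing time $s_1=T/2$ and $G(t,y)\equiv y$, so that $\phi_t = x_0$ on $(0,T/2]$ and $\phi_t = x(T/2)$ on $(T/2,T]$, with $\psi$ fixed by the self-financing condition and initial value $V_0$. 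Since $G$ is constant in $t$, $\phi$ is genuinely piecewise constant in time and the F\"ollmer integral telescopes; using right-continuity of $x$ at $T/2$ one gets, with $r=0$ as throughout this subsection,
\[
V_\Phi(T,x) = V_0 + x_0\bigl(x(T/2)-x_0\bigr) + x(T/2)\bigl(x(T)-x(T/2)\bigr),\qquad x\in\mathcal{J}^{a,\mu}(x_0).
\]
As $1+a\in(0,1)$, every trajectory in the class satisfies $0<x(T/2)\le x_0e^{\mu T/2}$ and likewise for $x(0),x(T)$, so $V_\Phi$ is bounded below and $\Phi$ is NP-admissible; NP-predictability and NP-self-financing are immediate from the definitions.

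Next I would exhibit the offending sequence. Let $x^{\ast}\in\mathcal{J}^{a,\mu}(x_0)$ be the trajectory with its single jump exactly at $s_1=T/2$, i.e. $x^{\ast}(t)=x_0e^{\mu t}(1+a)^{\mathbf{1}_{[T/2,T]}(t)}$, and for large $n$ let $x^{(n)}\in\mathcal{J}^{a,\mu}(x_0)$ have its single jump at $T/2+1/n$. Taking the piecewise-linear time change $\lambda_n$ that fixes $0$ and $T$ and sends $T/2$ to $T/2+1/n$ (linear on $[0,T/2]$ and on $[T/2,T]$) gives $\|\lambda_n-\mathrm{id}\|_\infty\le 1/n$ and, since $\lambda_n(t)<T/2+1/n$ for $t<T/2$ while $\lambda_n(t)\ge T/2+1/n$ for $t\ge T/2$, also $\|x^{\ast}-x^{(n)}\circ\lambda_n\|_\infty\le C/n$; hence $d_s(x^{(n)},x^{\ast})\to 0$. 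The crucial point is that the jump of $x^{(n)}$ sits strictly to the right of $T/2$, so $x^{(n)}(T/2)=x_0e^{\mu T/2}$ (pre-jump) for every $n$, whereas $x^{\ast}(T/2)=x_0e^{\mu T/2}(1+a)$; these differ because $a\neq 0$, i.e. $x\mapsto x(T/2)$ is discontinuous at $x^{\ast}$. Substituting into the displayed identity, and using $x^{(n)}(T)=x_0e^{\mu T}(1+a)=x^{\ast}(T)$, a short computation gives
\[
V_\Phi(T,x^{\ast})-V_\Phi(T,x^{(n)}) = x_0^2\,e^{\mu T/2}\,a\,\bigl[\,1+(1+a)e^{\mu T}-(2+a)e^{\mu T/2}\,\bigr].
\]
Writing $u=e^{\mu T/2}$ the bracket equals $(1+a)u^2-(2+a)u+1$, whose roots are $u=1$ and $u=(1+a)^{-1}$; by the choice of parameters this expression is a nonzero constant, independent of $n$.

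Therefore $x^{(n)}\to x^{\ast}$ in the Skorohod topology while $V_\Phi(T,x^{(n)})\not\to V_\Phi(T,x^{\ast})$, so $V_\Phi(T,\cdot)$ is not continuous on $\mathcal{J}^{a,\mu}(x_0)$ and the simple portfolio $\Phi$ is not V-continuous, establishing the proposition. The only delicate points are bookkeeping ones: keeping track of on which side of the rebalancing instant $T/2$ each jump lies — and hence which share holding the jump increment multiplies — together with the routine verification that $d_s(x^{(n)},x^{\ast})\to 0$. Conceptually the mechanism is simply that the trading grid of a simple strategy is frozen in time, so an arbitrarily small shift of a jump across a rebalancing time produces an order-one change in the replicating value, in sharp contrast with both the continuous setting of Remark \ref{simple-piece-wise} and the $t$-continuous strategies of Proposition \ref{sContinuousStrategies}.
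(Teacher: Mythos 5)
Your proposal is correct and follows essentially the same route as the paper: an explicit simple strategy whose terminal value depends on $x(T/2)$, together with a sequence of trajectories whose single jump slides from $T/2+1/n$ to $T/2$, converging in the Skorohod metric while the terminal values stay a fixed distance apart. The paper's choice of strategy (hold one share until $T/2$, then hold the cash amount $x_{1/2}$) makes $V_\Phi(T,x)=x(T/2)$ exactly, so the discontinuity is immediate and no nondegeneracy condition such as $e^{\mu T/2}\neq(1+a)^{-1}$ is needed, but your computation with $G(t,y)=y$ is verified and equally valid.
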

\begin{proof}
We provide an example of a simple strategy that is not V-continuous.
Consider $T=1$ and let $\Phi$ the NP-portfolio with initial value
$x_0$ defined as:
\begin{itemize}
\item $\phi(t,x)=1$, $\psi(t,x)=0$, for all
$x \in \mathcal{J}^{a, \mu}(x_0)$ if $0 \leq t \leq 1/2$
\item $\phi(t,x)=0$, $\psi(t,x)=x_{\frac{1}{2}}$ for all $x \in \mathcal{J}^{a, \mu}(x_0)$ if $1/2<t \leq 1$
\end{itemize}
We can easily check that $\Phi$ is NP-admissible according to Definition
\ref{def:NP-conditions}.

Let $y \in \mathcal{J}^{a, \mu}(x_0)$ be the function
$y_t=x_{0}e^{\mu t}(1+a)^{1_{[0,t]}(1/2)}$ and let $(y^{(n)})_{n=1,\ldots}$ be the
sequence of functions defined by $y_t^{(n)}=x_{0}e^{\mu t}(1+a)^{1_{[0,t]}(1/2+1/n)}$.
Clearly $y^{(n)} \to y$ in the Skorohod topology on $D[0,1]$.

From the definition of $\Phi$ we have $V_{\Phi}(1,x)=x_{\frac{1}{2}}$ for all
$x \in \mathcal{J}^{a, \mu}(x_0)$, in particular $V_{\Phi}(1,y)=x_{0}e^{\mu/2}(1+a)$
and $V_{\Phi}(1,y^{(n)})=x_{0}e^{\mu/2}$.

As $y^{(n)} \to y$ but $V_{\Phi}(1,y^{(n)}) \nrightarrow V_{\Phi}(1,y)$ we conclude that
$\Phi$ is not a V-continuous portfolio in $\mathcal{J}^{a, \mu}(x_0)$.
\end{proof}

\noindent It can also be shown that not necessarily a portfolio must be V-continuous in order to
belong to $[\mathcal{A}^Z_P]$. An example of such portfolio is given in the next proposition,
proving that $[\mathcal{A}^Z_P]$ contains more portfolios than those explicitly showed in Theorem
\ref{no-arbitrage-jumps}.

\begin{proposition}
If $a<0$ the portfolio $\Phi$ in Proposition \ref{counterex-simple} belongs to $[\mathcal{A}^Z_P]$
\end{proposition}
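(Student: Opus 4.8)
The plan is to produce a stochastic portfolio $\Phi^z\in\mathcal A^Z_P$ to which $\Phi$ is connected and then conclude $\Phi\in[\mathcal A^Z_P]$ from Definition \ref{def:connection}; since $\Phi$ is already known to be NP-admissible (see Proposition \ref{counterex-simple}), only the connection has to be established. The natural candidate is the portfolio isomorphic to $\Phi$, namely $\phi^z_t=\mathbf 1_{[0,1/2]}(t)$ and $\psi^z_t=\mathbf 1_{(1/2,1]}(t)\,Z_{1/2}$, whose value is $V_{\Phi^z}(t,\omega)=Z_{t\wedge 1/2}(\omega)$. One cannot simply invoke Corollary \ref{isomophic-v-cont}, because $\Phi$ fails to be V-continuous by Proposition \ref{counterex-simple}; the role of the hypothesis $a<0$ (so that $0<1+a<1$, i.e.\ the jumps are downward) is precisely that the only discontinuities of $x\mapsto V_\Phi(T,x)$ jump \emph{upward}, so $V_\Phi(T,\cdot)$ remains lower semicontinuous and Proposition \ref{sufficientConditionForNPPortfolio} becomes applicable.

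First I would check that $\Phi^z\in\mathcal A^Z_P$: $\phi^z$ is deterministic and $Z_{1/2}$ is $\mathcal F_{1/2}$-measurable, so $\psi^z_t$ is $\mathcal F_{t-}$-measurable for every $t$ and $\Phi^z$ is predictable; with $r=0$ one has $V_0+\int_0^t\phi^z_s\,dZ_s=x_0+(Z_{t\wedge 1/2}-x_0)=V_{\Phi^z}(t)$, so it is self-financing; and $V_{\Phi^z}(t,\omega)=Z_{t\wedge 1/2}(\omega)\ge 0$, so it is admissible with $A^z=0$. Also $V_{\Phi^z}(0,z_0)=x_0=V_\Phi(0,x_0)$, and on the full-measure set where $Z(\omega)\in\mathcal J^{a,\mu}(x_0)$ one has $V_{\Phi^z}(T,\omega)=Z_{1/2}(\omega)=V_\Phi(T,Z(\omega))$; condition $C_0$ holds exactly as in Theorem \ref{no-arbitrage-jumps}. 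Thus the hypotheses of Proposition \ref{sufficientConditionForNPPortfolio} will be met once I show that $V_\Phi(T,\cdot)\colon\mathcal J^{a,\mu}(x_0)\to\mathbb R$, $x\mapsto x_{1/2}$, is lower semicontinuous for the Skorohod metric $d_s$.

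The main step, and the one I expect to be the only real work, is this lower semicontinuity. Fix $x$ with jumps at $0<\tau_1<\dots<\tau_m<1$ and let $x^{(n)}\to x$ in $d_s$ with $x^{(n)}\in\mathcal J^{a,\mu}(x_0)$. Reasoning exactly as in the proof of Proposition \ref{sContinuousStrategies} --- the multiplicative jump factor $1+a\neq 1$ is fixed and $x$ has finitely many jumps, so absolute jump sizes on $[0,1]$ are bounded below --- for $n$ large $x^{(n)}$ has exactly $m$ jumps, at times within a prescribed $\epsilon$ of $\tau_1,\dots,\tau_m$. If no $\tau_i$ equals $1/2$, choose $\epsilon<\min_i|\tau_i-1/2|$, so the number of jumps of $x^{(n)}$ in $[0,1/2]$ equals that of $x$ and $x^{(n)}_{1/2}=x_0e^{\mu/2}(1+a)^{n(1/2)}=x_{1/2}$. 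If $\tau_j=1/2$ for some (necessarily unique) $j$, the $j$-th jump of $x^{(n)}$ lies slightly before or slightly after $1/2$, hence $x^{(n)}_{1/2}\in\{x_{1/2},\,x_{1/2}/(1+a)\}$; since $0<1+a<1$ we have $x_{1/2}/(1+a)>x_{1/2}$, so in every case $x^{(n)}_{1/2}\ge x_{1/2}$ for $n$ large. Therefore $\liminf_{n}V_\Phi(T,x^{(n)})\ge V_\Phi(T,x)$, which is the claimed lower semicontinuity. Applying Proposition \ref{sufficientConditionForNPPortfolio} with the $\Phi^z$ built above then gives that $\Phi$ is connected to $\Phi^z$, and hence $\Phi\in[\mathcal A^Z_P]$. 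The only delicate point throughout is the quantitative ``jump-matching under $d_s$'' fact, which is not new: it is the same mechanism already exploited in Proposition \ref{sContinuousStrategies}, the single genuinely new ingredient here being the sign of $a$, which fixes the direction of the surviving discontinuity.
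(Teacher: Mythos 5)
Your proposal is correct and follows essentially the same route as the paper: take the isomorphic simple stochastic portfolio $\Phi^z$ (admissible since it is a simple strategy), observe that for $a<0$ the map $x\mapsto x(1/2)=V_\Phi(T,x)$ is lower semicontinuous on $\mathcal J^{a,\mu}(x_0)$ in the Skorohod metric because a nearby trajectory either matches $x(1/2)$ or exceeds it by the factor $(1+a)^{-1}>1$, and then invoke Proposition \ref{sufficientConditionForNPPortfolio}. Your write-up merely makes explicit two points the paper leaves terse (the admissibility check for $\Phi^z$ and the jump-matching under $d_s$ via sequences), so there is nothing substantively different to compare.
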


\begin{proof}
Let $\Phi$ the NP-portfolio defined in Proposition \ref{counterex-simple}, and consider the
isomorphic portfolio $\Phi^Z$ over the price process $Z$ in (\ref{poissonProcess})
defined a.s. by $\Phi^Z(t, \omega)=\Phi(t,Z(\omega))$. The portfolio
strategy $\Phi^Z$ is a simple strategy, therefore
$\Phi^Z$ is admissible so $\Phi^Z \in \mathcal{A}^Z_P$. Let us show that even if
$\Phi$ is not V-continuous on $\mathcal{J}^{a, \mu}(x_0)$, $\Phi \in [\mathcal{A}^Z_P]$.

If an arbitrary trajectory $x \in \mathcal{J}^{a, \mu}(x_0)$ is continuous at $t=1/2$, then it
is always possible to choose $\delta>0$ small enough, such that for all $x'$ satisfying
$d_s(x',x)<\delta$, it holds $x'(1/2)=x(1/2)$ which in turns implies that
$V_{\Phi}(1,x')=V_{Phi}(1,x)$. If an arbitrary $x$ is discontinuous at $t=1/2$, then
$\Phi$ is not V-continuous at $x$, in fact that was the statement of
Proposition \ref{counterex-simple}. Nevertheless, for such trajectories $x$ we can always choose
$\delta>0$ small enough such that for all $x'$ satisfying
$d_s(x',x)<\delta$, we have one of the two following possibilities:\\
1) $x'(1/2)=x(1/2)$.\\
2) $x'(1/2)=x(1/2)(1+a)^{-1}$.\\
Case 1) corresponds to those trajectories $x'$ that jump in the interval $(1/2-\delta, 1/2]$
while case 2) corresponds to those trajectories that jump in $(1/2, 1/2+ \delta)$.
For trajectories $Z(\omega)$ in case 1) we have $V_{\Phi}(1,x')=V_{\Phi}(1,x)$.
For those in case 2) it holds  $V_{\Phi}(1,x')=x(1/2)(1+a)^{-1}>x(1/2)=V_{\Phi}(T,x)=x(1/2)$
if $a<0$.
What we have shown with previous analysis is that for an arbitrary trajectory $x \in \mathcal{J}^{a, \mu}(x_0)$,
whether or not $x$ is continuous at $t=1/2$, it is always possible to find $\delta$ small enough
such that if $d_s(x',x)<\delta$ then $V_{\Phi}(1, x')\geq V_{\Phi}(1,x)$, which
implies that application $V_{\Phi}(1, \cdot): \mathcal{J}^{a, \mu}(x_0) \rightarrow \mathbb{R}$
is lower semicontinuous with respect to the Skorohod topology. Applying
Proposition \ref{sufficientConditionForNPPortfolio}, as $\Phi^Z \in \mathcal{A}^Z_P$, then
$\Phi$ is connected to $\Phi^Z$ hence $\Phi \in [\mathcal{A}^Z_P]$.
\end{proof}

\begin{remark}
It is expected that the present class $[\mathcal{A}^Z_P]$ could be considerably enlarged, and in particular
simple strategies would belong to this enlarged class, once the notion of stopping times is incorporated in our non
probabilistic approach. This line of research represents work in progress \cite{alvarez}.
\end{remark}

\subsection{Non Probabilistic Jump Diffusions}
Fix $\sigma > 0$ and $C$ a non empty set of real numbers such
that $\inf(C)>-1$.  Define $\mathcal{J}_{\tau}^{\sigma, C}(x_0)$ as
the class of real valued functions $x$ on $[0,T]$ such that there exits
$z \in \mathcal{Z}_{\mathcal{T}}([0, T])$, $n(t) \in \mathcal{N}([0,
T])$,  and real numbers $a_i \in C$, $i=1,2,\ldots,m$, verifying:
\begin{equation} \label{eq:diffpoisson}
x(t)=x_0 e^{\sigma z(t)}\prod_{i=1}^{n(t)} (1+a_i)
\end{equation}

\begin{remark} \label{generalizedClasses}
The class $\mathcal{J}_{\tau}^{\sigma, a}(x_0)$ combines the features of
classes given in Sections \ref{ssect:cont} and
\ref{hedgingAndNoArbitrageInClassesWithJumps}. Its probabilistic
counterpart is the  class of exponential jump-diffusion processes.
\end{remark}

In the stochastic framework, the markets where prices are driven by
jump-diffusion models are not complete in general, therefore hedging is not
always possible. On the other hand, we do know that these models
admit many risk neutral measures, indicating that they are arbitrage free.
In this section we will obtain the property of absence
of arbitrage in the analogous NP  framework given by trajectories belonging to
$\mathcal{J}_{\tau}^{\sigma, C}(x_0)$.

First we give a small ball property result
for a jump-diffusion model and the class of price trajectories
$\mathcal{J}_{\tau}^{\sigma, C}(x_0)$ defined in
(\ref{eq:diffpoisson}) and then we derive the NP arbitrage-free result
using  Theorem \ref{main-no-arbitrage} by
relying on the V-continuity property of certain class of portfolios
with respect to the Skorohod topology.

\noindent The following proposition  provides the required small ball
property.
\begin{proposition}\label{smallballlevy}
For any $x_0>0$ consider in the probability space $(\Omega,
\mathcal{F}, (\mathcal{F}_t)_{t \geq 0}, P)$  the  exponential jump
diffusion processes, starting at $x_0$ given by:
\begin{equation}\label{eq:expjumpdiff}
     Z_t=x_0 e^{(\mu-\frac{1}{2}\sigma^2)t+\sigma W_t}\prod_{i=1}^{N_t} (1+X_i)
\end{equation}
where $W$ is a Brownian Motion, $N$ is a homogeneous Poisson Process  with intensity
$\lambda >0$,
and the $X_i$ are independent random variables, also
independent of $W$ and $N$, with common probability distribution
$F_X$. Assume that $F_X$ verifies the condition:\\
\\
A1)For any $a \in C$ and for all $\epsilon >0$, $F_X(a+\epsilon)-F_X(a-\epsilon)>0$.\\
\\
\noindent Then the jump-diffusion process
given by (\ref{eq:expjumpdiff}) satisfies a small ball property on
$\mathcal{J}_{\tau}^{\sigma, C}(x_0)$ with respect to the Skorohod
metric.
\end{proposition}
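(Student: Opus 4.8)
The plan is to produce, for an arbitrary target trajectory $x \in \mathcal{J}_{\tau}^{\sigma, C}(x_0)$ and an arbitrary $\epsilon > 0$, an event of positive probability on which $Z(\omega)$ lies within Skorohod distance $\epsilon$ of $x$. The three sources of randomness in $Z$ — the Brownian motion $W$, the Poisson process $N$, and the jump sizes $(X_i)$ — are independent, so I would control each of them separately against the corresponding feature of $x$ and then multiply probabilities.

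First I would write $x$ explicitly: there are $z \in \mathcal{Z}_{\mathcal{T}}([0,T])$, an integer $m \geq 0$, jump times $0 < s_1 < \dots < s_m < T$ and coefficients $a_1, \dots, a_m \in C$ with $x(t) = x_0\, e^{\sigma z(t)} \prod_{s_i \leq t}(1 + a_i)$. Put $y(t) \equiv z(t) - \sigma^{-1}(\mu - \tfrac12\sigma^2)t$, a continuous function with $y(0) = 0$, chosen so that matching $Z$ to $x$ on a jump-free interval amounts to matching $W$ to $y$ there. For a parameter $\delta > 0$ to be fixed later, let $A_\delta$ be the intersection of the events: (a) $\{\sup_{t \in [0,T]}|W_t - y(t)| < \delta\}$; (b) $\{N$ has exactly $m$ jumps in $[0,T]$, at times $T_1 < \dots < T_m$ with $|T_i - s_i| < \delta$ for all $i\}$ (read as $N_T = 0$ when $m = 0$); (c) $\{|X_i - a_i| < \delta$ for $i = 1, \dots, m\}$. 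By independence, $P(A_\delta)$ factors into the product of the three probabilities: the first is positive by Lemma \ref{trajectories-tube}, the second is positive because the inter-jump times of $N$ are exponential, hence absolutely continuous (the argument already used in the proof of Theorem \ref{no-arbitrage-jumps}), and the third is positive by assumption A1. Hence $P(A_\delta) > 0$ for every $\delta > 0$.

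Next I would show that $\delta$ can be chosen small enough that $A_\delta \subseteq \{d_s(Z, x) < \epsilon\}$. Fix $\omega \in A_\delta$ and let $\lambda = \lambda_\omega : [0,T] \to [0,T]$ be the increasing piecewise-linear homeomorphism with $\lambda(0) = 0$, $\lambda(T) = T$ and $\lambda(s_i) = T_i$. An elementary estimate gives $\sup_t |\lambda(t) - t| \leq C_1 \delta$ and, if needed, $\sup_{s \neq t}\bigl|\log\tfrac{\lambda(t)-\lambda(s)}{t-s}\bigr| \leq C_1 \delta$, with $C_1$ depending only on $\min_i(s_{i+1} - s_i)$. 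Since $\lambda$ sends $s_i$ to $T_i$ and $N$ jumps exactly at the $T_i$, one checks that $N_{\lambda(t)} = \#\{i : s_i \leq t\}$ for all $t$, so $Z_{\lambda(t)}(\omega)$ and $x(t)$ carry the same number of jump factors; using $\inf C > -1$ and $|X_i - a_i| < \delta$, the ratio of these two finite products is within a factor $1 + C_2\delta$ of $1$. For the continuous part, $|W_{\lambda(t)} - y(\lambda(t))| < \delta$ and $y(\lambda(t)) = z(\lambda(t)) - \sigma^{-1}(\mu - \tfrac12\sigma^2)\lambda(t)$, while $|z(\lambda(t)) - z(t)| \leq \omega_z(C_1\delta)$ by uniform continuity of $z$; hence $(\mu - \tfrac12\sigma^2)\lambda(t) + \sigma W_{\lambda(t)}$ differs from $\sigma z(t)$ by at most $\sigma\delta + \sigma\,\omega_z(C_1\delta)$, and since the exponential is uniformly continuous on the relevant compact range of exponents, the corresponding exponential factors are uniformly close. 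Multiplying the jump-product and continuous-factor estimates (both bounded) shows $\sup_t |Z_{\lambda(t)}(\omega) - x(t)| < \epsilon$ for $\delta$ small, and together with the bound on $\lambda$ this yields $d_s(Z(\omega), x) < \epsilon$ by the definition of the Skorohod metric in \cite{billingsley}. Therefore $P(d_s(Z, x) < \epsilon) \geq P(A_\delta) > 0$, which is the asserted small ball property.

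The main obstacle is the third paragraph: although every estimate is elementary, one must build a single time change $\lambda$ that simultaneously lines up the jump times, keeps $\|\lambda - \mathrm{id}\|_\infty$ (and the slope deviation, if the chosen version of the metric uses it) small, and does not spoil the uniform closeness of the continuous parts, and one must handle the bookkeeping of jump counts at the jump instants themselves via right-continuity. The degenerate case $m = 0$ is covered directly by Lemma \ref{trajectories-tube} with $\lambda$ the identity.
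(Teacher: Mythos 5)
Your proposal is correct and follows essentially the same route as the paper: the same three independent positive-probability events (Brownian path close to the drift-adjusted $z$ via Lemma \ref{trajectories-tube}, Poisson jump times close to the $s_i$, jump sizes close to the $a_i$ via A1), the same piecewise-linear time change through the points $(s_i,T_i)$, and the same uniform-continuity arguments to convert the exponent estimates into closeness in the Skorohod metric. The only cosmetic difference is that you compare the jump factors multiplicatively while the paper works additively with $\ln(1+X_i)$ versus $\ln(1+a_i)$ and exponentiates at the end.
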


\begin{proof}
Consider $x(\cdot)=x(0)e^{\sigma z(\cdot)}\prod_{i=1}^{n(\cdot)}(1+a_i) \in \mathcal{J}_{\tau}^{\sigma, C}(x_0)$, where
$n(\cdot) \in \mathcal{N}([0,T])$ has $m$ discontinuity points in $[0,T]$ denoted
 by $0 < s_1< \ldots< s_m <T$. Also denote $\tilde{n}(t)=\sum_{i=1}^{n(t)} \ln(1+a_i)$ and
 $\xi_t=\sum_{i=1}^{N(t)} \ln(1+X_i)$. \\
\\
Fix $\epsilon>0$ and consider $\delta >0$.
Define  $\Omega_1^{\delta}$ as the set of $w \in \Omega $ having jump times
 $0<T_1(w)<T_2(w)<\ldots < T_m(w)<T$ and $T_{m+1}>T$,  satisfying also that
$ |T_i(w)-s_i| \leq \frac{\delta}{3}$, for $i=1,2,\ldots,m$.
Note that the $T_i$'s are finite sum of continuous random variables, namely the times between
jumps, therefore we have that $P(\Omega_1)>0$.\\
\\
Take now  $\Omega_2^{\delta}$  as the set of $w \in \Omega$ such that:\\
$|\ln(1+X_i(w))-\ln(1+a_i)| < \frac{\delta}{3m}$, for $i=1,2, \dots,m$, which implies that
\begin{equation*}
   \left|\sum_{i=1}^m \ln(1+X_i(w)) -\sum_{i=1}^m \ln(1+a_i)\right|< \frac{\delta}{3}.
\end{equation*}
  Let $\lambda(t)$  be function from $[0,T]$ onto itself defined by the polygonal through the points  $(0,0), (s_1,T_1(w)), \ldots, (s_m, T_m(w)), (T,T)$. \\
Note that, by construction, for  $ w \in \Omega_1^{\delta} \bigcap \Omega_2^{\delta} $ we have:
\begin{equation} \nonumber 
  \sup_{t \in [0,T]} |\lambda(t)-t|< \frac{\delta}{3}
 \end{equation}
 \begin{equation}\nonumber 
    \sup_{t \in [0,T]} |\xi_{\lambda(t)}(w)-\tilde{n}(t)| < \frac{\delta}{3}.
\end{equation}
We should note that $P(\Omega_2^{\delta})>0$ for every $\delta>0$ as consequence of condition A1).\\
For $0 \leq t \leq T$ we have
\begin{eqnarray}\label{eq:triang-ineq}
&& \left|\left(\mu-\frac{1}{2}\sigma^2\right) \lambda(t)+\sigma W_{\lambda(t)}+ \xi_{\lambda(t)}-\sigma z(t) - \tilde{n}(t)\right|\\  &\leq& \sigma \left|W_{\lambda(t)}-z(t)  +\frac{1}{\sigma}\left(\mu-\frac{1}{2}\sigma^2\right)\lambda(t) \right| + | \xi_{\lambda(t)}- \tilde{n}(t)|  \nonumber \\
 &\leq& \sigma \left|W_{\lambda(t)}-z(\lambda(t))  +\frac{1}{\sigma}\left(\mu-\frac{1}{2}\sigma^2\right)\lambda(t) \right| + \sigma |z(\lambda(t))-z(t)|+| \xi_{\lambda(t)}- \tilde{n}(t)|. \nonumber
 \end{eqnarray}
Define $z'(t)=z(t)-\frac{1}{\sigma}\left(\mu-\frac{1}{2}\sigma^2\right)t$ and let $\Omega_3^{\delta}=\{ \omega \in \Omega: \sup_{t \in [0,T]}
|W_t(w)-z'(t)| < \frac{\delta}{3\sigma} \}$. By Lemma \ref{trajectories-tube} we have  $P(\Omega_3^{\delta})> 0$. Moreover by independence between $W$, $N$ and the $X_i$'s we have that:
\begin{equation*}
    P(\Omega_1^{\delta} \bigcap \Omega_2^{\delta} \bigcap \Omega_3^{\delta})=
    P( \Omega_1^{\delta})P(\Omega_2^{\delta})P(\Omega_3^{\delta})>0
\end{equation*}
Now, as $z(t)$ is uniformly continuous on $[0,T]$, there exists $\delta'>0$ such that  $|\lambda(t)-t|< \delta'$
implies  $\sup_{t \in [0,T]} |z(\lambda(t))-z(t)|<\frac{\epsilon}{3\sigma}$. Without loss of generality  take
$0< \delta< \min(\epsilon,\delta')$\\

\noindent According to (\ref{eq:triang-ineq}), in $\Omega_1^{\delta} \bigcap \Omega_2^{\delta} \bigcap \Omega_3^{\delta}$ we have, for $0 \leq t \leq T$:
\begin{eqnarray}
&& \left|\left(\mu-\frac{1}{2}\sigma^2\right) \lambda(t)+\sigma W_{\lambda(t)}+ \xi_{\lambda(t)}-\sigma z(t) - \tilde{n}(t)\right| \nonumber \\
 &\leq& \sigma \left|W_{\lambda(t)}-z'(\lambda(t))\right| + \sigma |z(\lambda(t))-z(t)|+| \xi_{\lambda(t)}- \tilde{n}(t)|. \nonumber \\
 &\leq& \frac{\delta}{3}+\frac{\epsilon}{3}+\frac{\delta}{3} < \epsilon \nonumber
\end{eqnarray}

As $Z_t=\exp \left( (\mu-\frac{1}{2}\sigma^2)t+\sigma W_{t}+ \xi_{t}\right)$ and
$x(t)=\exp \left(\sigma z_t + \tilde{n}(t) \right)$, the small ball property is obtained
by an argument of uniform continuity of the exponential function.
\end{proof}

Analogously to Proposition \ref{sContinuousStrategies} we have the
following result on V-continuity relative to the Skorohod's topology on
$\mathcal{J}_{\tau}^{\sigma,C}(x_0)$.

\begin{proposition} \label{sContinuousStrategiesII}
Let $\Phi_t=(\psi_t, \phi_t)$ be a portfolio strategy on $\mathcal{J}_{\tau}^{\sigma,C}(x_0)$ for which
the amount invested in the stock $\phi_t=\phi(t,x(t-))$ is such that $\phi \in C^{1,1}([0,T]\times R^+)$
and $\psi$ is defined through the self financing condition in (\ref{NP-bond-invest}).
Assume also that $\inf_{c \in C}|c|>h$ for some real number $h$.
Then $\Phi$ is V-continuous on $\mathcal{J}_{\tau}^{\sigma,C}(x_0)$ relative to the Skorohod's topology.
\end{proposition}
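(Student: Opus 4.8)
The plan is to follow the pattern of the proof of Proposition \ref{sContinuousStrategies}, but since trajectories in $\mathcal{J}_{\tau}^{\sigma,C}(x_0)$ carry a genuinely non‑differentiable continuous part $x_0 e^{\sigma z}$ (with $[z]^{\mathcal{T}}_t=t$), the F\"ollmer integral $\int_0^T \phi(s,x(s-))\,dx_s$ can no longer be computed as an explicit Riemann--Stieltjes integral. Instead I will use the It\^{o}--F\"ollmer formula (\ref{Ito-Follmer}) to rewrite this integral as a sum of functionals of the trajectory that are individually continuous along Skorohod‑convergent sequences. I carry out the argument for $r=0$; the general case $r\geq 0$ is identical, the extra term $r\int_0^T \psi(s,x)\,ds$ (with $\psi$ given by (\ref{NP-bond-invest})) being a Lebesgue integral handled exactly like the ones below.

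First I fix $x\in\mathcal{J}_{\tau}^{\sigma,C}(x_0)$, written $x(t)=x_0 e^{\sigma z(t)}\prod_{i=1}^{n(t)}(1+a_i)$ with jump times $0<s_1<\cdots<s_m<T$, and a sequence $x^{(k)}\in\mathcal{J}_{\tau}^{\sigma,C}(x_0)$ with $d_s(x^{(k)},x)\to 0$; the goal is $V_{\Phi}(T,x^{(k)})\to V_{\Phi}(T,x)$. The hypothesis $\inf_{c\in C}|c|>h$ (with $h>0$) enters here: every $x'\in\mathcal{J}_{\tau}^{\sigma,C}(x_0)$ is positive, and its jumps are multiplicative by a factor $1+a'$ with $|a'|$ bounded away from $0$, so a small Skorohod perturbation can neither create, destroy, nor split a jump. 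Consequently, for all large $k$, $x^{(k)}$ has exactly $m$ jumps, at times $s_i^{(k)}\to s_i$ and with ratios $1+a_i^{(k)}\to 1+a_i$; moreover $\sup_k\|x^{(k)}\|_{\infty}<\infty$ and (for $k$ large) $\inf_t x^{(k)}(t)$ is bounded below by a positive constant, $x^{(k)}(T)\to x(T)$, $x^{(k)}(s_i^{(k)}\pm)\to x(s_i\pm)$, and $x^{(k)}(s)\to x(s)$ for every $s\notin\{s_1,\dots,s_m\}$ (a Skorohod time change $\lambda_k$ mapping $s_i\mapsto s_i^{(k)}$ shows, in addition, that the continuous parts converge uniformly, which is what one needs to control $z^{(k)}$).

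Next, since $\phi\in C^{1,1}([0,T]\times\mathbb{R}^+)$, choose $F\in C^{1,2}([0,T]\times\mathbb{R}^+)$ with $\partial F/\partial y=\phi$, e.g. $F(t,y)=\int_{x_0}^{y}\phi(t,u)\,du$ (then $\partial^2 F/\partial y^2=\partial\phi/\partial y$ and $\partial F/\partial t=\int_{x_0}^{y}\partial_t\phi(t,u)\,du$ are continuous). Each $x'\in\mathcal{J}_{\tau}^{\sigma,C}(x_0)$ is of quadratic variation along $\mathcal{T}$ with $d\langle x'\rangle^{\mathcal{T}}_s=\sigma^2 x'(s)^2\,ds$ on its continuous part (apply (\ref{quadVarForComposition}) between consecutive jumps) and has finitely many jumps, so the It\^{o}--F\"ollmer formula (\ref{Ito-Follmer}) gives, taking $u\uparrow T$ as in the proof of Theorem \ref{Non-stoch-BS},
\begin{align*}
\int_0^T \phi(s,x(s-))\,dx_s &= F(T,x(T)) - F(0,x_0) - \int_0^T \tfrac{\partial F}{\partial t}(s,x(s))\,ds - \frac{\sigma^2}{2}\int_0^T \tfrac{\partial^2 F}{\partial y^2}(s,x(s))\,x(s)^2\,ds \\
&\quad - \sum_{i=1}^m\Big[\,F(s_i,x(s_i)) - F(s_i,x(s_i-)) - \phi(s_i,x(s_i-))\,\Delta x_{s_i}\,\Big],
\end{align*}
and the same identity with $x$ replaced by $x^{(k)}$ for $k$ large. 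Now pass to the limit term by term: $F(T,x^{(k)}(T))\to F(T,x(T))$ and $F(0,x_0)$ is common; the two Lebesgue integrals converge by dominated convergence, because $\partial F/\partial t$ and $\partial^2 F/\partial y^2$ are continuous, the $x^{(k)}$ eventually take values in a common compact subset of $\mathbb{R}^+$, and $x^{(k)}(s)\to x(s)$ for a.e.\ $s$; and the finite jump sum converges because $s_i^{(k)}\to s_i$, $x^{(k)}(s_i^{(k)}\pm)\to x(s_i\pm)$ and $F,\phi$ are continuous. Hence $\int_0^T\phi(s,x^{(k)}(s-))\,dx^{(k)}_s\to\int_0^T\phi(s,x(s-))\,dx_s$, so $V_{\Phi}(T,x^{(k)})\to V_{\Phi}(T,x)$ and $V_{\Phi}(T,\cdot)$ is continuous on $\mathcal{J}_{\tau}^{\sigma,C}(x_0)$ for the Skorohod topology, i.e.\ $\Phi$ is V-continuous.

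The main obstacle is that the F\"ollmer integral is not by itself a continuous functional of the path in the Skorohod topology; the crux of the argument is the It\^{o}--F\"ollmer reduction, which replaces it by a Lebesgue integral of a continuous, uniformly bounded integrand (stable under Skorohod convergence at continuity points, via dominated convergence) plus a finite sum of evaluations of continuous functions (stable precisely because $\inf_{c\in C}|c|$ is bounded away from $0$ pins down the number, locations and sizes of the jumps under small Skorohod perturbations). The two steps that require genuine care are exactly these stability claims: the jump‑structure stability of the class $\mathcal{J}_{\tau}^{\sigma,C}(x_0)$ under $d_s$, and the uniform equiboundedness (above and below by positive constants) of the $x^{(k)}$ that feeds the dominated convergence step.
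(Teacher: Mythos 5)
Your proposal is correct and follows essentially the same route as the paper: you integrate $\phi$ in the space variable (your $F$ is the paper's $U_{\Phi}$), use the It\^{o}--F\"ollmer formula to rewrite the F\"ollmer integral as boundary terms plus a Lebesgue integral plus a finite jump sum, and then pass to the limit using the stability of the jump structure under Skorohod convergence (which is exactly the content of the paper's Lemma \ref{lemma:neighboors}, made possible by $\inf_{c\in C}|c|>h>0$). The only difference is cosmetic: for the Lebesgue-integral term the paper changes variables by the Skorohod time change and invokes the continuous mapping theorem, whereas you use dominated convergence with pointwise convergence at continuity points and uniform two-sided bounds on the $x^{(k)}$, both of which are valid.
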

\begin{proof}
Let $\Phi_t=(\psi_t, \phi_t)$ be such that
$\phi_t=\phi(t,x(t-))$ with $\phi \in C^{1,1}([0,T]\times R^+)$.
Define the function $U_{\Phi}: \mathbb{R}^2 \rightarrow \mathbb{R}$ as:

\begin{equation}
U_{\Phi}(t,x)=\int_{x_0}^x \phi(t, \xi) d \xi
\end{equation}

and the functional $u_{\Phi}: \mathcal{J}_{\tau}^{\sigma,C}(x_0) \rightarrow \mathbb{R}$ as:

\begin{eqnarray}\label{functional-u}
u_{\Phi}(x)&=&U_{\Phi}(T,x(T))-U_{\Phi}(0,x(0))\\
           &&-\int_0^T \frac{\partial U_{\Phi}}{\partial t}(s,x(s-))ds
             -\frac{1}{2}\int_0^T \frac{\partial^2 U_{\Phi}}{\partial x^2}(s,x(s-))d \langle x \rangle_s^{\mathcal{T}} \nonumber \\
           &&-\sum_{s \leq T} \left[ U_{\Phi}(s,x(s))-U_{\Phi}(s,x(s-))
             -\frac{\partial U_{\Phi}}{\partial x}(s,x(s-))\Delta x(s)\right] \nonumber
\end{eqnarray}

From Ito-F\"ollmer formula

\begin{equation}
u_{\Phi}(x)=\int_0^T \frac{\partial U_{\Phi}}{\partial x}(s,x(s-)) dx(s)=\int_0^T \phi(s,x(s-)) dx(s)
\end{equation}
which implies that portfolio $\Phi$ is V-continuous on $\mathcal{J}_{\tau}^{\sigma,C}(x_0)$
(with respect to the Skorohod's topology) if
and only if the functional $u_{\Phi}$ is continuous on $\mathcal{J}_{\tau}^{\sigma,C}(x_0)$
with respect to the Skorohod's topology.

For all $x \in \mathcal{J}_{\tau}^{\sigma,C}(x_0)$, $d \langle x \rangle_s^{\mathcal{T}}= \sigma^2 x^2(s-)ds$,
therefore (\ref{functional-u}) transforms into:
\begin{equation}
u_{\Phi}(x)=U_{\Phi}(T,x(T))-U_{\Phi}(0,x(0))-I_{\Phi}(x)-S_{\Phi}(x)
\end{equation}
where
\begin{equation}\nonumber 
I_{\Phi}(x)=\int_0^T \frac{\partial U_{\Phi}}{\partial t}(s,x(s-))ds
             +\frac{1}{2}\int_0^T \frac{\partial^2 U_{\Phi}}{\partial x^2}(s,x(s-))\sigma^2 x^2(s-)ds
\end{equation}
and
\begin{equation} \nonumber
S_{\Phi}(x)=\sum_{s \leq T} \left[ U_{\Phi}(s,x(s))-U_{\Phi}(s,x(s-))
            +\frac{\partial U_{\Phi}}{\partial x}(s,x(s-))\Delta x(s)\right].
\end{equation}

Let $x^* \in \mathcal{J}_{\tau}^{\sigma,C}(x_0)$ and let $\{x^{(n)}\}_{n=0,1,...}$ be a sequence of
functions, $x^{(n)} \in \mathcal{J}_{\tau}^{\sigma,C}(x_0)$, such that $\{x^{(n)}\}$ converges
to $x^*$ in the Skorohod's topology.

From Lemma \ref{lemma:neighboors} and the
continuity of both $U_{\Phi}$ and $\frac{\partial U_{\Phi}}{\partial x}$,
it is immediate to see that $S_{\Phi}(x^{(n)}) \to S_{\Phi}(x^*)$.\\

Next we will prove that $I_{\Phi}(x^{(n)}) \to I_{\Phi}(x^*)$. Consider
\[
g(s,x)=\frac{\partial U_{\Phi}}{\partial t}(s,x)+\frac{\sigma^2~x^2}{2}\frac{\partial^2 U_{\Phi}}{\partial x^2}(s,x)
\]

Then

\begin{equation}\label{diff-I}
\left| I_{\Phi}(x^*)- I_{\Phi}(x^{(n)})\right|= \left|  \int_0^T g(s,x^*(s)-)ds - \int_0^T g(s,x^{(n)}(s)-)ds\right|.
\end{equation}

Let $(\lambda_n)$ be a sequence of non-decreasing functions on $[0,T]$ such that $\lambda_n(0)=0$, $\lambda_n(T)=T$ and $ \lambda_n(s) \to s$.
Making the variable change $s\rightarrow \lambda_n(s)$ on the second integral, expression (\ref{diff-I})
transforms into:

\begin{eqnarray}\nonumber 
\left| I_{\Phi}(x^*)- I_{\Phi}(x^{(n)})\right|
   &=& \left|  \int_0^T g(s,x^*(s-))ds - \int_0^T g\left(\lambda_n(s),x^{(n)}(\lambda_n(s)-)\right)d\lambda_n(s)\right|  \nonumber \\
&\leq& \left|  \int_0^T g(s,x^*(s-))ds - \int_0^T g(s,x^*(s-))d\lambda_n(s) \right| \nonumber \\
    &&+\left|  \int_0^T g(s,x^*(s-))d\lambda_n(s) -
               \int_0^T g\left(\lambda_n(s),x^{(n)}(\lambda_n(s)-)\right)d\lambda_n(s)\right| \nonumber\\
&\leq& \left|  \int_0^T g(s,x^*(s-))ds - \int_0^T g(s,x^*(s-))d\lambda_n(s) \right| \nonumber \\
 &&+  \int_0^T \left| g(s,x^*(s-))-g\left(\lambda_n(s),x^{(n)}(\lambda_n(s)-)\right)\right| d\lambda_n(s).\nonumber
\end{eqnarray}

As $\left(\lambda_n(s),x^{(n)}(\lambda_n(s)-)\right)$ converges to $(s,x^*(s-))$ uniformly in s and $g$ is continuous, hence
uniformly continuous on compact sets then:

\begin{equation}\label{eq:sum1}
\int_0^T \left| g(s,x^*(s-))-g\left(\lambda_n(s),x^{(n)}(\lambda_n(s)-)\right)\right| d\lambda_n(s) \to 0
\end{equation}

Notice also that

\begin{equation}\label{eq:sum2}
\int_0^T g(s,x^*(s-))d\lambda_n(s) \to  \int_0^T g(s,x_s-)ds
\end{equation}

Expression (\ref{eq:sum2}) is consequence of the weak convergence of $\lambda_n(s)$ to $s$
and the fact that $y(s)=g(s,x(s-))$ is bounded on $[0,T]$ with only a finite number of
discontinuities. See for example the Continuous Mapping Theorem  (page 87, \cite{durret}).\\

Both (\ref{eq:sum1}) and (\ref{eq:sum2}) imply that $I_{\Phi}(x^{(n)}) \to I_{\Phi}(x^*)$. As
already proved $S_{\Phi}(x^{(n)}) \to S_{\Phi}(x^*)$, therefore
$u_{\Phi}(x^{(n)}) \to u_{\Phi}(x^*)$, and the V-continuity of $\Phi$ is proved

\end{proof}

We have now all necessary ingredient to prove an arbitrage result for the present class of trajectories.

\begin{theorem}\label{noArbitrageHedginStrategies}
Let  $(Z, \mathcal{A}^Z_{JD})$ be the stochastic market defined by
the geometric jump diffusion process introduced in (\ref{eq:expjumpdiff})
and $\mathcal{A}^Z_{JD}$ is the class of admissible strategies for
$Z$.
Consider the class of trajectories $\mathcal{J}_{\mathcal{T}}^{\sigma, C}$
introduced in (\ref{eq:diffpoisson}) endowed with the Skorohod's topology.
Assume the random variables $X_i$ to be integrable and that their common probability distribution  $F_X$ satisfies the following conditions:\\
1) $supp(F_X) \subset C$ where $supp(F)$ stands for the support of the distribution function $F$\\
2) For any $a \in C$ and for all $\epsilon >0$, $F_X(a+\epsilon)-F_X(a-\epsilon)>0$\\
Then it holds:
\begin{itemize}
\item [i)]The NP market  $(\mathcal{J}_{\tau}^{\sigma, C}, [\mathcal{A}_{JD}^Z])$ is NP arbitrage-free\\
\item [ii)]If $C$ satisfies that $\inf(C)>-1$ and $\inf_{c \in C}|c|>h$ for some real
number $h >0$, then $[\mathcal{A}_{JD}^Z]$ contains the portfolios from Proposition \ref{sContinuousStrategiesII} which furthermore satisfy that there exist $A>0$ such that $V_{\Phi}(t,x)>-A$
$\forall t \in [0,T]$, $\forall x \in \mathcal{J}_{\tau}^{\sigma, C}$.
\end{itemize}
\end{theorem}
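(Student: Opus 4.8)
The plan is to follow the template of Theorems \ref{no-arbitrage-BS} and \ref{no-arbitrage-jumps}, invoking Theorem \ref{main-no-arbitrage} with $Z$ the exponential jump diffusion of (\ref{eq:expjumpdiff}) and $\mathcal{J}=\mathcal{J}_{\tau}^{\sigma,C}$ endowed with the Skorohod topology. For part i) I would verify the three ingredients: ($C_0$) $Z(\omega)\in\mathcal{J}_{\tau}^{\sigma,C}$ almost surely; ($C_1$) the small ball property; and arbitrage-freeness of the stochastic market $(Z,\mathcal{A}^Z_{JD})$. Theorem \ref{main-no-arbitrage} then immediately delivers that $(\mathcal{J}_{\tau}^{\sigma,C},[\mathcal{A}_{JD}^Z])$ is NP-arbitrage-free.

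For $C_0$: write $Z_t=x_0 e^{\sigma z(t)}\prod_{i=1}^{N_t}(1+X_i)$ with $\sigma z(t)=(\mu-\tfrac12\sigma^2)t+\sigma W_t$, i.e.\ $z(t)=W_t+\tfrac1\sigma(\mu-\tfrac12\sigma^2)t$; the drift term is of bounded variation, hence of zero quadratic variation, so by Remark \ref{generalizedClasses} we get $z\in\mathcal{Z}_{\mathcal{T}}([0,T])$ a.s. Condition 1 forces $X_i\in\mathrm{supp}(F_X)\subseteq C$ a.s., while $\inf(C)>-1$ gives $1+X_i>0$, and a Poisson process has finitely many jumps on $[0,T]$, so the pathwise product representation (\ref{eq:diffpoisson}) holds a.s.\ and $C_0$ is established. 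Condition $C_1$ is exactly the content of Proposition \ref{smallballlevy}, whose hypothesis A1 coincides with condition 2 of the theorem. Finally, $(Z,\mathcal{A}^Z_{JD})$ is a semimartingale model admitting an equivalent martingale measure (e.g.\ via an Esscher-type, structure-preserving change of measure for the jump diffusion), hence it satisfies NFLVR and in particular is arbitrage-free; I would cite the relevant references here.

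For part ii), let $\Phi=(\psi_t,\phi_t)$ be as in Proposition \ref{sContinuousStrategiesII}, with $\phi_t=\phi(t,x(t-))$, $\phi\in C^{1,1}([0,T]\times\mathbb{R}^+)$, $\psi$ given by the self-financing condition, and satisfying the extra lower bound $V_{\Phi}(t,x)>-A$. I would first check $\Phi$ is NP-admissible: it is NP-predictable since $\phi$ (and hence $\psi$) depends only on $x(t-)$; it is NP-self-financing because the Föllmer integral $\int_0^t\phi(s,x(s-))\,dx(s)$ exists by the It\^{o}--F\"ollmer formula for $C^{1,1}$ integrands (cf.\ Proposition \ref{NP-bond-admiss}); and the lower-bound clause is assumed. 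It is V-continuous with respect to $d_s$ by Proposition \ref{sContinuousStrategiesII}, which is where $\inf_{c\in C}|c|>h$ is used. Next, define $\Phi^z(t,\omega)\equiv\Phi(t,Z(\omega))$ on the full-measure set where $Z(\omega)\in\mathcal{J}_{\tau}^{\sigma,C}$; by Proposition \ref{bond-admiss} this stochastic portfolio is predictable, LCRL and self-financing, and it inherits the bound $V_{\Phi^z}(t,\omega)=V_{\Phi}(t,Z(\omega))>-A$, so $\Phi^z\in\mathcal{A}^Z_{JD}$. Since $\Phi$ and $\Phi^z$ are isomorphic and $\Phi$ is V-continuous, Corollary \ref{isomophic-v-cont} gives that $\Phi$ is connected to $\Phi^z$, hence $\Phi\in[\mathcal{A}_{JD}^Z]$, proving ii).

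The admissibility and isomorphism bookkeeping is routine, paralleling the earlier theorems almost verbatim. The genuine obstacle is twofold: (a) justifying that the stochastic jump-diffusion market $(Z,\mathcal{A}^Z_{JD})$ is arbitrage-free over the \emph{full} class of admissible strategies, which rests on existence of an equivalent martingale measure for (\ref{eq:expjumpdiff}) and is the only place where nontrivial stochastic-analysis input enters; and (b) making the $C_0$ verification airtight — in particular confirming that the continuous factor of $Z$ lands in $\mathcal{Z}_{\mathcal{T}}([0,T])$ and that the product representation (\ref{eq:diffpoisson}) is valid a.s. The small ball property $C_1$, which would otherwise be the delicate step, is already supplied by Proposition \ref{smallballlevy}.
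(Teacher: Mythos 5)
Your proposal is correct and follows essentially the same route as the paper: verify $C_0$ (via condition 1), get $C_1$ from Proposition \ref{smallballlevy}, establish arbitrage-freeness of $(Z,\mathcal{A}^Z_{JD})$ through the existence of an equivalent martingale measure (the paper cites Proposition 9.9 of \cite{cont}, which is the same structure-preserving change of measure you describe), and then apply Theorem \ref{main-no-arbitrage}; for ii) you use Propositions \ref{bond-admiss}, \ref{NP-bond-admiss}, \ref{sContinuousStrategiesII} and Corollary \ref{isomophic-v-cont} exactly as the paper does. Your expanded justification of $C_0$ (drift of zero quadratic variation, finitely many jumps, $\mathrm{supp}(F_X)\subset C$) is a welcome elaboration of what the paper leaves implicit, but it is not a different argument.
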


\begin{proof}
First note that $P(w \in \Omega : Z(w) \in \mathcal{J}_{\tau}^{\sigma, C})=1$
as consequence of 1) hence condition $C_0$ from Theorem
\ref{main-no-arbitrage} is fulfilled. From Proposition \ref{smallballlevy} we also have that
condition $C_1$ from Theorem
\ref{main-no-arbitrage} holds.
Therefore, in order to establish conclusion $i)$ we need to argue
that the stochastic market $(Z, \mathcal{A}^Z_{JD})$ is arbitrage
free. Our hypothesis allow the application of Proposition 9.9 from \cite{cont}, this result establishes the existence of a probability $\mathbb{Q}$ such that $e^{r t}~Z_t$
is a martingale, therefore the probabilistic market
$(Z,\mathcal{A}_{JD}^Z)$ is arbitrage free.

\noindent ii) Consider $\Phi$ satisfying the conditions listed in $ii)$ and define $\Phi^z$ as
$\Phi^z(t,\omega)=\Phi(t,Z(\omega))$
 Proposition \ref{bond-admiss}
shows that the stochastic portfolio $\Phi^z$ is predictable, LCRL
and self-financing. The admissibility of $\Phi^z$ then results from
our hypothesis, hence $\Phi^z \in \mathcal{A}_{JD}^Z$. Proposition
\ref{NP-bond-admiss} shows that $\Phi$ is admissible; $\Phi$ is also
V-continuous as consequence of Proposition
\ref{sContinuousStrategiesII}. The NP portfolio $\Phi$ and the
stochastic portfolio $\Phi^z \in \mathcal{A}_{JD}^Z$ are isomorphic,
so according to Corollary \ref{isomophic-v-cont}, $\Phi$ is connected to
$\Phi^z$ and $\Phi \in [\mathcal{A}_{JD}^Z]$.
\end{proof}

Analogously to Proposition \ref{counterex-simple}, simple portfolio strategies
may not be V-continuous on the trajectory space $\mathcal{J}_{\tau}^{\sigma, C}$.

\section{Implications to Stochastic Frameworks}\label{stoch-examples}

In previous sections we have studied some connections between stochastic and
NP-markets, in particular Theorem \ref{main-no-arbitrage} was used
to establish that some NP models are NP-arbitrage free. In this
section we use Theorem \ref{main-no-arbitrageDual} to prove the no
arbitrage property in stochastic settings using some of the results
that we have obtained previously for NP-models. In this way, results
obtained in a non-probabilistic framework not only constitute a
different approach to the main financial problems of hedging and arbitrage but can also be
used  as a technical tool to obtain new arbitrage results in stochastic settings as well.

One important property of this approach for pricing derivatives in probabilistic models is
that it is applicable even in cases where prices are not semimartingales.
From our point of view, this is one a main advantage of the approach, allowing
to price derivatives for  some models where the risk neutral approach is
impossible to carry out or is not very clear.  Another important feature of
this approach is that it encompasses models with jumps and without jumps as we illustrate
in the examples in this section.

\begin{example}[Black-Scholes related models]\label{generalized-BS}
Let $X$ be a price process on $(\Omega, \mathcal{F},
\mathcal{F}_t, P)_{0 \leq t \leq T}$ defined by $X_t =x_{0}~e^{\sigma Z^{Gen}_t}$
where $Z^{Gen}$ is a (general) process adapted to $\mathcal{F}_t$ satisfying\\
1)$[Z^{Gen}]_t^{\tau}=t$.\\
2)$Z^{Gen}$ satisfies a small ball property on $\mathcal{Z}_{\mathcal{T}}([0, T])$ with respect to the
uniform metric.\\
As consequence of 1) Theorem \ref{Non-stoch-BS} applies in this case indicating that hedging is possible
in a path by path sense.\\

\noindent Consider $\mathcal{A} \equiv [\mathcal{A}^Z_{BS}]$ introduced in
Theorem \ref{no-arbitrage-BS} and $ \mathcal{A}^X \equiv
[\mathcal{A}]^X$, where $[\mathcal{A}]^X$ is given as in Definition
\ref{def:connectionII}. Under these circumstances we will argue that
the stochastic market $(X,\mathcal{A}^X)$ on $(\Omega, \mathcal{F},
\mathcal{F}_t, P)$ is arbitrage-free in the classical probabilistic
sense.

\noindent We will apply Theorem \ref{main-no-arbitrageDual} to the
given NP market $(\mathcal{J}_{\tau}^{\sigma}, \mathcal{A})$ with
the supremum metric and  $\mathcal{A}$ as defined above. Condition $C_0$
is an obvious consequence of 1). Condition $C_1$ is derived from 2)
and the fact that $\exp(\cdot)$ (used to construct $X$) is a continuous
function, hence uniformly continuous on compacts.

\noindent As all conditions of Theorem
\ref{main-no-arbitrageDual} are satisfied and
$(\mathcal{J}_{\tau}^{\sigma}, \mathcal{A})$ is NP-arbitrage free by
means of Theorem \ref{no-arbitrage-BS} then the stochastic market
$(X,\mathcal{A}^X)$ is arbitrage free in the classical probabilistic
sense.

\vspace{.05in} \noindent Using similar arguments to the ones we have
used in Section \ref{hedgingAndNoArbitrageInClasses}, it is possible
to verify that the smooth strategies as
previously defined belong to
$\mathcal{A}^X$, this means that $\mathcal{A}^X$ is a large class.
Moreover, also the delta hedging strategies belong to
$\mathcal{A}^X$, as well as simple portfolio strategies like those
in Remark \ref{simple-piece-wise}.

\end{example}

Next we will show that examples of such processes $Z^{Gen}$ are:
\begin{itemize}
\item $Z^F=W+B^H$ where $W$ is a Brownian Motion and $B^H$ is a fractional Brownian
motion  with Hurst index $1/2<H<1$ independent of $W$.
\item $Z^R=\rho W+\sqrt{1-\rho^2}B^R$ where $W$ is a Brownian Motion, $B^R$ is a
reflected Brownian Motion independent of $W$.
and $\rho$ is a real number, $0< \rho <1$.
\item $Z^w$, a Weak Brownian motion.
\end{itemize}

\vspace{.1in}
\noindent \textbf{Mixed Fractional Brownian Model:} For process $Z^F=W+B^H$ we have that
if $1/2<H<1$ then the trajectories of $B^H$ have zero quadratic variation, which implies
that $[Z^F]_t^{\tau}=[W]_t^{\tau}=t$ almost surely. The small ball property of $Z^F$ on
$\mathcal{Z}_{\mathcal{T}}([0, T])$ is consequence of a small ball property of
$W$ on $\mathcal{Z}_{\mathcal{T}}([0, T])$ which is obtained from Lemma
\ref{trajectories-tube}, the hypothesis on independence between
$W$ and $B^H$, and a small ball property of the fractional Brownian motion $B^H$ around the
identically null function (\cite{stolz}, \cite{zahle}).

The absence of arbitrage for this model implies that pricing and hedging
can be done exactly as in the Black-Scholes model. We have proven this fact
using essentially that the trajectories of prices are dense in
$\mathcal{J}_{\tau}^{\sigma}(x_0)$. We did not use any
semi-martingale property of the price process, in fact, for $H \in
(1/2,3/4]$, $X$ is not a semi-martingale, which is a drawback form
the point of view of a risk-neutral approach for pricing.

Some results for models similar to $X_t=x(0)e^{\sigma Z^F_t}$ are presented in \cite{Kloeden} and
\cite{valkeila-1} proving that pricing and hedging procedures in the Black-Scholes
model are robust against perturbations with zero quadratic variation. More recently,
in \cite{valkeila-2}, path-dependents options are replicated under this model.

The replication result in the previous model, which is consequence of $[Z^F]_t^{\tau}=[W]_t^{\tau}=t$
remains true for models satisfying $Z_t=W_t+Y_t$ where $Y$ is a process with zero quadratic variation,
as for example any continuous process with finite variation. While the replication is always valid
for a general $Y$ ,  the no-arbitrage property is less obvious to verify and
will depend on the particular form of $Y$, in the dependence structure between $W$ and $Y$, etc.\\

\vspace{.1in}
\noindent \textbf{Mixed Reflected Brownian Model:}
Consider the process $Z^R=\rho W+\sqrt{1-\rho^2}B^R$ where $W$ is a Brownian Motion and $B^R$ is a
reflected Brownian Motion independent of $W$. By a reflected Brownian motion we understand a process
whose trajectories are obtained by reflecting trajectories of a standard Brownian motion on one or
two reflecting boundaries. A particular example of reflected Brownian motion is given by
$B^R_t=|\tilde{B}_t|$ where $\tilde{B}$ is a Brownian motion. Therefore, our reflected Brownian $B^R$
motion will be upper bounded and/or lower bounded.

In this case we also have $[Z^R]_t^{\tau}=t$ almost surely.
The small ball property of $Z^R$ on $\mathcal{Z}_{\mathcal{T}}([0, T])$ is deduced as follows: we know from Lemma
\ref{trajectories-tube} that $P(\sup_{s \in [0,T]}|W_s-f(s)/\rho|<\epsilon/2 )>0$ for all $f$ continuous such
that $f(0)=0$. On the other hand $P(\sup_{s \in [0,T]}|B^R_s|<\epsilon/2 )>0$
is also true, also as consequence of Lemma
\ref{trajectories-tube}. Then $P(\sup_{s \in
[0,T]}|Z^R(s)-f(s)|<\epsilon )>0$ for all  $\epsilon>0$ and for all
$f$ continuous, with $f(0)=0$ in particular for all $f \in \mathcal{Z}_{\mathcal{T}}([0, T])$.

A financial interpretation of this example is that the asset price
is influenced by an external source of randomness $B^R$ which is
limited within some bounds. This idea has some precedents, see for
example \cite{krugman}. Nevertheless, to our knowledge, this model
has not been considered previously for pricing and hedging purposes.
The risk neutral approach for pricing does not seem to be obvious
for this model, and in fact, could heavily depend on the reflected
boundaries. Our approach to this model (as stated  in Example
\ref{generalized-BS}) is simple and tells us again that pricing and
hedging for this model can be done as in the Black-Scholes paradigm.\\

\vspace{.1in}
\noindent \textbf{Weak Brownian Motion:}
A weak Brownian motion $Z^w$ of order $k \in \mathbb{N}$ is a stochastic process whose
$k$-marginal distributions are the same as of a Brownian motion,
although it is not a Brownian motion. In particular we will
consider those weak Brownian motions of order at least 4 such that their
law on $C[0,T]$ are equivalent to the Wiener measure on $C[0,T]$.
The existence of such processes, as well as some of their properties
are established in \cite{follmer-yor}. In particular we will use
that if $k \geq 4$  $\left\langle Z^w\right\rangle_t^{\tau}=t$ almost
surely. The required small-ball property of $Z^w$ on $\mathcal{Z}_{\mathcal{T}}([0, T])$
is consequence of the equivalence between the law of $Z^w$ and the
Wiener measure on $C[0,T]$.

\noindent
We should remark that a Weak Brownian motion may not be a semimartingale,
therefore, the risk neutral approach for pricing may be impossible for models that include them.
Nonetheless, models including weak Brownian motions have been recently studied in \cite{russo},
through the weaker concept of $\mathcal{A}$-martingale.

\begin{example}[Renewal Process]\label{ex:renewal}
Consider $N^R$ to be a renewal process instead of the usual classical Poisson process:
\[
N^R_t=\sum_{i} 1_{[0,t]}(S_i)
\]
where $S_i$ are considered random jump times such that random variables $S_{i+1}-S_i$,
representing the times between jumps, are positive, independent and identically distributed
with some probability distribution $G(x)$.
Consider the price process given by
\begin{equation}\label{jumps}
X_t=x_{0}e^{\mu t}(1+a)^{N^R_t}
\end{equation}
The trajectories of the process (\ref{jumps}) will be in
$\mathcal{J}^{a,\mu}(x_0)$, so the replicating portfolio in
Proposition \ref{only-jumps} also applies to this case for any
probability distribution $G$. In order to have the no-arbitrage
property, additional assumptions must be made on $G$. For example,
if the support of $G$ is a finite interval it could lead to obvious
arbitrage opportunities related to the imminent occurrence of a
jump. Nevertheless, if  $G$ is absolutely continuous with respect to
the Lebesgue measure on the whole positive real line, the model is
arbitrage-free for a large class of portfolio strategies as the following analysis shows.

\noindent Let $X$ be a price process on $(\Omega, \mathcal{F},
\mathcal{F}_t, P)$ as defined in (\ref{jumps}).
Consider $\mathcal{A} \equiv [\mathcal{A}^Z_{P}]$ introduced in
Theorem \ref{no-arbitrage-jumps}. Let also $ \mathcal{A}^X \equiv
[\mathcal{A}]^X$, where $[\mathcal{A}]^X$ is given as in Definition
\ref{def:connectionII}. Under these circumstances we will argue that
the stochastic market $(X,\mathcal{A}^X)$ on $(\Omega, \mathcal{F},
\mathcal{F}_t, P)$ is arbitrage-free in the classical probabilistic
sense.

\noindent We will apply Theorem \ref{main-no-arbitrageDual} to the
given NP market $(\mathcal{J}^{a, \mu}, \mathcal{A})$ with the
Skorohod's metric and  $\mathcal{A} \equiv [\mathcal{A}^Z_P]$. As we
observed, condition $C_0$ from Theorem \ref{main-no-arbitrageDual} is satisfied. Condition $C_1$,
also from Theorem \ref{main-no-arbitrageDual}, is valid
because of our hypothesis on the support of $G$: the probability of
those trajectories jumping exactly in a small neighborhood of the
jumps of any $x \in \mathcal{J}^{a,\mu}(x_0)$ (which guarantees that
these trajectories are close to $x$ in the Skohorod topology) is
positive. Here implicitly we also used that the times between jumps
are independent random variables. As all
conditions of Theorem \ref{main-no-arbitrageDual} are satisfied and
$(\mathcal{J}_{\tau}^{\sigma}, \mathcal{A})$ is NP-arbitrage free by
means of Theorem \ref{no-arbitrage-BS} then the stochastic market
$(X,\mathcal{A}^X)$ is arbitrage free in the classical probabilistic
sense.

Let $\Phi$  be any of the NP portfolio strategies considered either in Theorem  \ref{only-jumps} or Proposition \ref{sContinuousStrategies}, it then follows from Corollary \ref{isomophic-v-cont}
that  $\Phi^X(t, w) \equiv \Phi(t, X(w))$ belongs to  $\mathcal{A}^X$.
\end{example}

An specific example is when  $G(x)$ is such that $1-G(x) \sim x^{-(1+\beta)}$ with
$\beta \in (0,1)$. This particular case  is used in \cite{valkeila} for the approximation
of a Geometric Fractional Brownian motion.
There, using path-by-path arguments it is shown that model is complete and arbitrage-free for some
class of portfolio strategies.

We should remark that a similar result can be obtained also if the support of $G$ is dense in $[0, \infty)$,
in other words, if every interval $[a,b]$ with $0<a<b$ has positive probability according to $G$: $G(b)>G(a)$,
as suggests the proof of Theorem \ref{no-arbitrage-jumps}. It means that if $G$ is supported on the set of
positive rational numbers $\mathbb{Q}_+$ for example, the no-arbitrage property is valid, thus
pricing can be done exactly as in the Geometric Poisson model. This result, which is analogous in
some way to the one presented in Example \ref{rational-end-points}, is surprising in the sense that the
measure that $G$ induces on $\mathcal{J}^{a,\mu}(x_0)$ is not absolutely continuous with respect to the measure
induced by the Geometric Poisson model.

\begin{example}[Jump-diffusion related models] \label{ex:mixed-jump-diff}

Consider a stochastic process having the form
\begin{equation}\label{eq:general-jd}
X_t=e^{(\mu-\sigma^2/2)t + \sigma Z^G_t}\prod_{i=1}^{N^R_t}(1+X_i),
\end{equation}
where $Z^G$ is a continuous process satisfying that $\langle Z^G\rangle_t=t$. We also
assume that $Z^G$ satisfies a small ball property on $\mathcal{Z}_{\mathcal{T}}([0, T])$
with respect to the uniform norm. Examples of such $Z^G$ are the processes $Z^F$, $Z^R$ and $Z^w$,
previously defined. Process $N^R$ is a renewal process as considered in Example
\ref{ex:renewal} and random variables $X_i$ are considered to be independent with
common distribution $F_X$. Set
$\mathcal{A} \equiv [\mathcal{A}_{JD}^Z]$  and consider the NP market
$(\mathcal{J}_{\tau}^{\sigma, C}, \mathcal{A})$ defined in Theorem
\ref{noArbitrageHedginStrategies}. As we have defined before, let $ \mathcal{A}^X \equiv
[\mathcal{A}]^X$, where $[\mathcal{A}]^X$ is given by (\ref{def:connectionII}).
Also assume $F_X$ and the set $C$ satisfy the assumptions
1) and 2) of Theorem \ref{noArbitrageHedginStrategies}.
Using the same arguments we used there, it is possible to verify that all conditions of Theorem
\ref{main-no-arbitrageDual} are satisfied for the stochastic process $X$ in (\ref{eq:general-jd})
and the NP market $(\mathcal{J}_{\tau}^{\sigma, C}, \mathcal{A})$, therefore the stochastic market
$(X,[\mathcal{A}]^X)$ is arbitrage free in the classical probabilistic sense.
Consider $\Phi$ to be any of the NP portfolio strategies considered in Proposition \ref{sContinuousStrategiesII}, it then follows from Corollary \ref{isomophic-v-cont}
that  $\Phi^X(t, w) \equiv \Phi(t, X(w))$ belongs to  $\mathcal{A}^X$.
\end{example}

The examples introduced above indicate that replication and arbitrage problems in
a stochastic framework could be studied without requiring the semi-martingale property.
Instead, topological properties of the support
of the price process play a key role.
Under these circumstances, the existence of a risk neutral measure constitutes a useful tool for
pricing but not a necessary condition for stating and solving a pricing problem in a coherent way.

\section{Conclusions and Further Work}

The present work develops a non probabilistic framework for pricing using the
classical arguments of hedging and no arbitrage. We obtain general no arbitrage
results in our framework that no longer rely on a probabilistic structure, but
in the topological structure of the space of possible price trajectories and a convenient
restriction on the admissible portfolios to those satisfying certain continuity properties.
Apparently, the introduced non probabilistic framework is far from being a different and isolated approach,
this is strongly suggested  by the fact
that our results in  a non probabilistic setting have also implications on stochastic
frameworks because of the existing connections between both approaches. Therefore, the
results can be used to price derivatives in non standard stochastic models.

There are several possible extensions of our work, in particular, there are many
spaces of trajectories that could be encoded in a non probabilistic framework.
In relation to this we mention that many of the results presented in the paper
can be extended by introducing the analogue of stopping times in our non
probabilistic framework.  
It seems that a major technical advance  for the proposed formalism would be to supply a proof technique
that allows to establish non arbitrage results without relying on the known results for the probabilistic setting.

\appendix

\section{Quadratic Variation and Ito Formula}  \label{pathwiseIto}

From \cite{follmer} we have the following.

\begin{proposition}(It\^{o}-F\"ollmer Formula) Let $x$ be of quadratic variation along $\tau$, and let $y^1, \ldots, y^m$
be continuous functions of bounded variation. Suppose that $f \in C^{1,2,1}([0,T)\times \mathbb{R} \times \mathbb{R}^m )$, then
for all $0\leq s<t<T$:

\begin{equation}          \label{Ito-Follmer}
f(t,x_t,y^1_t,\ldots,y^m_t) = f(s,x_s,y^1_s,\ldots,y^m_s)+\int_s^t \frac{\partial}{\partial t} f(u,x_u,y^1_u,\ldots,y^m_u) du
 +
 \end{equation}
 \begin{equation} \nonumber
 \int_s^t \frac{\partial}{\partial x} f(u,x_u,y^1_u,\ldots,y^m_u) dx_u
  +\frac{1}{2}\int_s^t \frac{\partial^2}{\partial x^2} f(u,x_u,y^1_u,\ldots,y^m_u) d \langle x \rangle_u^{\tau} +
  \end{equation}
 \begin{equation} \nonumber
  \sum_{i=1}^m \int_s^t \frac{\partial}{\partial y^i} f(u,x_u,y^1_u,\ldots,y^m_u) d y^i_u+
\end{equation}
 \begin{equation} \nonumber
 \sum_{u\leq t}\left[ f(u,x_u,y^1_u,\ldots,y^m_u) -  f(u^-,x_{u^-},y^1_{u^-,}\ldots,y^m_{u^-}) \right] -
 \end{equation}
 \begin{equation} \nonumber
  \frac{\partial}{\partial x}f(u^-,x_{u-},y^1_{u-},\ldots,y^m_{u-}) \Delta x_u - \sum_{i=1}^m \frac{\partial}{\partial y^i} f(u^-,x_{u^-},y^1_{u^-,}\ldots,y^m_{u^-}) \Delta y^i_s.
\end{equation}
\end{proposition}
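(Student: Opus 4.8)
The plan is to reproduce Föllmer's original argument from \cite{follmer}: establish the identity by a Taylor expansion along the partitions $\mathcal{T}$ and pass to the limit term by term. Fix $0 \le s < t < T$ and, for each $n$, let $s = u_0 < u_1 < \cdots < u_{\ell} = t$ be the points of $\tau_n$ lying in $[s,t]$ (adjoining $s$ and $t$ if necessary). Starting from the telescoping identity
\[
f(t,x_t,y_t) - f(s,x_s,y_s) = \sum_{i} \left[ f(u_{i+1},x_{u_{i+1}},y_{u_{i+1}}) - f(u_i,x_{u_i},y_{u_i}) \right],
\]
I would expand each increment by Taylor's theorem --- first order in $t$ and in each $y^j$, second order in $x$, together with the mixed first-order terms and a remainder --- so that, after regrouping, the right-hand side of the claimed formula appears together with a collection of error terms.

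The second step is to identify the limits as $n \to \infty$. The sum of the $\partial_t f$ terms is a Riemann sum converging to $\int_s^t \partial_t f \, du$, and the sum of the $\partial_{y^j} f$ terms converges to $\int_s^t \partial_{y^j} f \, dy^j_u$ because each $y^j$ has bounded variation and the integrand is continuous along the path. The quadratic term $\frac{1}{2}\sum_i \partial_{xx} f(u_i,x_{u_i},y_{u_i})\,(x_{u_{i+1}}-x_{u_i})^2$ converges to $\frac{1}{2}\int_s^t \partial_{xx} f \, d[x]^{\mathcal{T}}_u$ by the weak convergence of the discrete measures $\xi_n$ to $\xi$ (the defining property of quadratic variation along $\mathcal{T}$), once one checks that $u \mapsto \partial_{xx} f(u,x_u,y_u)$ is bounded with only countably many discontinuities so that a Continuous Mapping type argument applies; by the decomposition (\ref{generalDecomposition}) this splits into $\frac{1}{2}\int_s^t \partial_{xx} f \, d\langle x \rangle^{\mathcal{T}}_u$ plus the atomic contribution $\frac{1}{2}\sum_{u \le t} \partial_{xx} f(u^-,x_{u^-},y_{u^-})\,\Delta x_u^2$. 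Since all of these limits exist, the remaining sum $\sum_i \partial_x f(u_i,x_{u_i},y_{u_i})\,(x_{u_{i+1}}-x_{u_i})$ must converge as well, and its limit is, by Definition \ref{forward-integral}, the Föllmer integral $\int_s^t \partial_x f \, dx_u$; this is exactly how the existence of that integral is obtained here.

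The delicate part, which I expect to be the main obstacle, is controlling the jumps of $x$ and showing that the mixed and remainder terms assemble into the stated jump correction. Given $\epsilon > 0$ there are only finitely many times $u \le t$ with $|\Delta x_u| > \epsilon$; on the partition intervals avoiding these the oscillation of $x$ is eventually uniformly small, so by uniform continuity of the derivatives of $f$ on compact sets the Taylor remainders summed over such intervals are $o(1)$ plus a term of order $\epsilon$ times a bound for $[x]^{\mathcal{T}}_t$, while the mixed sums $\sum_i \partial_{xy^j} f\,(x_{u_{i+1}}-x_{u_i})(y^j_{u_{i+1}}-y^j_{u_i})$ and the pure $\partial_{y^j y^k} f$ sums vanish because $\sum_i |y^j_{u_{i+1}}-y^j_{u_i}|$ is bounded by the total variation of $y^j$ while the increments of $x$ away from the large jumps and those of $y$ tend to zero. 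The finitely many exceptional intervals, in the limit, contribute exactly $\sum_{u \le t}\left[ f(u,x_u,y_u) - f(u^-,x_{u^-},y_{u^-}) - \partial_x f(u^-,x_{u^-},y_{u^-})\,\Delta x_u \right]$ minus the second-order jump mass already counted in the $d[x]^{\mathcal{T}}$ term (the $\Delta y^j$ contributions being zero since each $y^j$ is continuous). Letting $n \to \infty$ and then $\epsilon \to 0$ assembles the stated identity; the careful bookkeeping of these jump terms, organized as in \cite{follmer}, is where essentially all of the work lies.
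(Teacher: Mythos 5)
The paper offers no proof of this proposition at all --- it is imported verbatim from F\"ollmer's \emph{Calcul d'It\^o sans probabilit\'e} \cite{follmer} --- and your sketch is a faithful reconstruction of exactly that argument (telescoping Taylor expansion along $\tau_n$, Riemann--Stieltjes limits for the $\partial_t f$ and $\partial_{y^j} f$ sums, weak convergence of the discrete quadratic measures $\xi_n$, $\epsilon$-separation of the large jumps, and existence of the forward integral $\int_s^t \partial_x f\,dx_u$ obtained by elimination), so it is the same approach the paper relies on. The one imprecision is the suggestion that a continuous-mapping argument suffices for $\sum_i \partial_{xx}f(u_i,x_{u_i},y_{u_i})(x_{u_{i+1}}-x_{u_i})^2 \to \int_s^t \partial_{xx}f\,d[x]^{\mathcal{T}}_u$: the discontinuities of $u \mapsto \partial_{xx}f(u,x_u,y_u)$ sit exactly on the atoms of $\xi$, so this convergence is not a direct consequence of weak convergence and must itself be established through the $\epsilon$-splitting of the jump times that you describe in your final paragraph --- which is precisely how F\"ollmer's lemma handles it, so the gap is one of ordering the steps rather than of substance.
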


Here are some results that we use in the main body of the paper.

\begin{itemize}

\item If $z(s),~s \in [0,T]$ is a continuous function with zero quadratic variation along
$\tau$ and $x$ is of quadratic variation along $\tau$, then
$[x+z]_t^{\tau}= [x]_t^{\tau}$.

\item If $x$ is of quadratic variation along $\tau$ and $f \in C^1(\mathbb{R})$ then $y=f \circ x$
is of quadratic variation along $\tau$ , moreover
\begin{equation}  \label{quadVarForComposition}
\langle y \rangle_t^{\tau}=\int_0^t (f'(x(s)))^2  d \langle x \rangle_s^{\tau}.
\end{equation}
\end{itemize}
For more details, see \cite{follmer}.

\section{Technical Results}  \label{technicalResults}
\begin{proposition} \label{bond-admiss}
Let $(X_t)_{0 \leq t \leq T}$ be an adapted process on the filtered
probability space $(\Omega, \mathbb{F},(\mathcal{F}_t)_{t \geq 0},
\mathbb{P})$ representing the price process. Assume that a.s. the
trajectories of $X$ belong to some class of functions $\mathcal{J}$
and let $g_1$, $g_2$, $g_m$ be hindsight factors over $\mathcal{J}$.
Let the stochastic portfolio $\Phi=(\phi,\psi)$ have the form
\begin{equation}\label{stock-invest}
\phi_t=G(t,X_{t-},g_1(t,X),g_2(t,X), \ldots ,g_m(t,X))
\end{equation}
and
\begin{equation}\label{bond-invest}
\psi_t=\tilde{V}_{\Phi}(t-)-\phi_t X_{t-}
\end{equation}
with $G \in C^{1}\left([0,T] \times \mathbb{R}\times
\mathbb{R}^m\right)$ ,$g_1(t,X)$, $g_2(t,X)$,..., $g_m(t,X)$ are
$(\mathcal{F}_{t-})$-measurable random variables, and
$\tilde{V}_{\Phi}(t-)=\lim_{s \to t-} \tilde{V}_{\Phi}(s)$ with
\begin{equation}\label{value-predictable}
\tilde{V}_{\Phi}(s)=V_{\Phi}(0)+\int_0^s
G(r,X_{r-},g_1(r,X),g_2(r,X)...,g_m(r,X))dX_r
\end{equation}

Then the stochastic portfolio $\Phi=(\phi,\psi)$ is predictable,
LCRL, and self-financing.
\end{proposition}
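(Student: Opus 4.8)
The plan is to verify the three asserted properties in turn, isolating at the end the only step that is not bookkeeping, namely the jump behaviour of the F\"ollmer integral.

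First I would treat $\phi$. Since $X$ is adapted and RCLL, the process $t\mapsto X_{t-}$ is LCRL and each $X_{t-}$ is $\mathcal{F}_{t-}$-measurable, being a limit of the random variables $X_s$, $s<t$. By part (ii) of the definition of a hindsight factor, each $g_i(\cdot,X)$ is continuous, hence LCRL, on the full-measure set $\{X\in\mathcal{J}\}$, and by hypothesis $g_i(t,X)$ is $\mathcal{F}_{t-}$-measurable. Composing with the continuous function $G$ then shows $\phi_t=G(t,X_{t-},g_1(t,X),\ldots,g_m(t,X))$ is LCRL and $\mathcal{F}_{t-}$-measurable; so $\phi$ is predictable and LCRL.

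Next I would show that $\tilde V_\Phi$ is well defined and RCLL. Working pathwise on $\{X\in\mathcal{J}\}$, fix $\omega$, put $y^i(s):=g_i(s,X(\omega))$, a continuous function of bounded variation, and set $f(t,x,y^1,\ldots,y^m):=\int_{x_0}^{x}G(t,\xi,y^1,\ldots,y^m)\,d\xi$; since $G\in C^1$ we have $\partial f/\partial x=G$ and $f\in C^{1,2,1}$, so the It\^o--F\"ollmer formula of Appendix~\ref{pathwiseIto} applies to $f$, $X$ and $y^1,\ldots,y^m$. It gives the existence of $\int_0^s\phi_r\,dX_r$ for every $s$ (this also follows from the discussion after Definition~\ref{forward-integral}), shows that $\tilde V_\Phi$ is RCLL, and — because the time variable and the $y^i$ are continuous in $u$ — identifies its jumps as $\Delta\tilde V_\Phi(u)=\phi_u\,\Delta X_u$. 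Consequently $t\mapsto\tilde V_\Phi(t-)$ is LCRL and $\mathcal{F}_{t-}$-measurable, hence so is $\psi_t=\tilde V_\Phi(t-)-\phi_t X_{t-}$; and $\psi$, being LCRL, is bounded on $[0,T]$, so $\int_0^t\psi_s\,ds$ exists. This delivers predictability, the LCRL property, and the existence of both integrals appearing in the self-financing definition.

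Finally, self-financing reduces to an algebraic cancellation based on the jump identity above. Taking $r=0$ (implicit in the statement, where $\tilde V_\Phi$ carries no bond-accrual term), for each $t$ one computes
\[
V_\Phi(t)=\psi_t+\phi_t X_t=\tilde V_\Phi(t-)-\phi_t X_{t-}+\phi_t X_t=\tilde V_\Phi(t-)+\phi_t\,\Delta X_t=\tilde V_\Phi(t)=V_\Phi(0)+\int_0^t\phi_s\,dX_s .
\]
I expect the only genuinely delicate point to be the middle step of the previous paragraph — confirming rigorously that the F\"ollmer integral process $\tilde V_\Phi$ is RCLL with jump exactly $\phi_u\,\Delta X_u$ at each $u$ — which I would get by reading it off the It\^o--F\"ollmer formula for the chosen $f$, the continuity of the hindsight factors ensuring that no further jump terms survive. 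Everything else (the measurability bookkeeping and the cancellation above) is routine.
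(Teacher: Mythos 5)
Your proof is correct and follows essentially the same route as the paper's: predictability and the LCRL property of $\phi$ from its components, the It\^o--F\"ollmer representation of the integral to obtain predictability of $\tilde V_\Phi(t-)$ and hence of $\psi$, and the cancellation $\tilde V_\Phi(t)=\tilde V_\Phi(t-)+\phi_t\,\Delta X_t$ to get the self-financing identity. Your only addition is to justify that jump identity explicitly via the antiderivative $f(t,x,y)=\int_{x_0}^x G(t,\xi,y)\,d\xi$ and the It\^o--F\"ollmer formula, which the paper asserts more briefly ``from the definition'' of $\tilde V_\Phi$; this is a welcome but not essentially different refinement.
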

\begin{proof}
The held number of stock at time t, $\phi_t$, given by
(\ref{stock-invest}), is predictable (also LCRL) because all $g_i$'s
and $X_{t-}$ are predictable (also LCRL) and $G$ is continuous.
The integral\\
\[
\int_0^s G(r,X_{r-},g_1(r,X),g_2(r,X)...,g_m(r,X))dX_r
\] in (\ref{value-predictable}) can be computed using Ito's formula in terms of
$X_{s-}$ as well as predictable functions (measurable with respect
to $\mathcal{F}_{s-}$) of $X$, which easily implies that
\[
\lim_{s \to t} \int_0^s
G(r,X_{r-},g_1(r,X),g_2(r,X)...,g_m(r,X))dX_r
\]
is measurable with respect to $\mathcal{F}_{t-}$, therefore
$\tilde{V}_{\Phi}(t-)$ is predictable. As $\tilde{V}_{\Phi}(t-)$,
$X_{t-}$, and $\phi_t$ are predictable and LCRL, $\psi_t$ is also
predictable and LCRL from expression (\ref{bond-invest}). Then we
conclude that $\Phi$ is predictable
and LCRL.\\

Let us prove now the self-financing condition. From the definition
of $\tilde{V}_{\Phi}(t)$ and $\tilde{V}_{\Phi}(t-)$ it follows that

\[
\tilde{V}_{\Phi}(t)-\tilde{V}_{\Phi}(t-)=\phi_t(X_t-X_{t-})
\]
hence we have
\[
\tilde{V}_{\Phi}(t)=\tilde{V}_{\Phi}(t-)+\phi_t(X_t-X_{t-})=\phi_t
X_{t-}+\psi_t+\phi_t(X_t-X_{t-})=\phi_t X_{t}+\psi_t
\]
Previous expression means that $\tilde{V}_{\Phi}(t)=V_{\Phi}(t)$,
hence $\Phi$ is self-financing.
\end{proof}

We have also an analogous result in the NP framework.

\begin{proposition} \label{NP-bond-admiss}
Consider a class of trajectories $\mathcal{J}$ and let $g_1$, $g_2$,
$g_m$ be hindsight factors over $\mathcal{J}$ such that all $g_i$
satisfy the additional condition:
\begin{equation}\label{NP-pred-g}
g_i(t,x)=g_i(t,\tilde{x})
\end{equation}
whenever $x(s)=\tilde{x}(s)$ for all $0\leq s <t$\\

For all $x \in \mathcal{J}$ let the portfolio $\Phi=(\phi,\psi)$
have the form
\begin{equation}\nonumber 
\phi(t,x)=G(t,x(t-),g_1(t,x),g_2(t,x)...,g_m(t,x))
\end{equation}
and
\begin{equation}   \label{NP-bond-invest}
\psi(t,x)=\tilde{V}_{\Phi}(t-,x)-\phi(t,x) x(t-)
\end{equation}
with $G \in C^{1}\left([0,T] \times \mathbb{R}\times
\mathbb{R}^m\right)$ and $\tilde{V}_{\Phi}(t-,x)=\lim_{s \to t-}
\tilde{V}_{\Phi}(s,x)$ with
\begin{equation} \label{value-predictable}
\tilde{V}_{\Phi}(s,x)=V_{\Phi}(0,x_0)+\int_0^s
G(r,x(r-),g_1(r,x),g_2(r,x)...,g_m(r,x))dx_r
\end{equation}

Then the NP portfolio $\Phi=(\phi,\psi)$ is NP-predictable, LCRL,
and NP-self-financing.
\end{proposition}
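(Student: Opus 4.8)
The plan is to transcribe the proof of Proposition~\ref{bond-admiss} into pathwise language, the one genuinely new input being an application of the It\^o--F\"ollmer formula (\ref{Ito-Follmer}) to a spatial antiderivative of $G$. Fix $x\in\mathcal{J}$ (which we take to be of quadratic variation along $\tau$, as the F\"ollmer integral requires) and abbreviate $g(t)\equiv(g_1(t,x),\dots,g_m(t,x))$, a vector of continuous functions of bounded variation in $t$. I would set
\[
f(t,\xi,y)\;\equiv\;\int_{x_0}^{\xi} G(t,u,y)\,du ,\qquad (t,\xi,y)\in[0,T]\times\mathbb{R}\times\mathbb{R}^m,
\]
and note that, since $G\in C^1$, differentiating under the integral sign gives $f\in C^{1,2,1}$ with $\partial f/\partial\xi=G$. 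Applying (\ref{Ito-Follmer}) to $f$ along the curves $t\mapsto g_i(t,x)$ then shows, first, that $\int_0^t\phi(r,x)\,dx_r=\int_0^t G(r,x(r-),g(r))\,dx_r$ exists for every $t$, and yields the representation
\[
\tilde V_{\Phi}(t,x)=V_0+\bigl[f(t,x(t),g(t))-f(0,x_0,g(0))\bigr]-C_t(x)-\sum_{u\le t}\Delta_u(x),
\]
where $C_t(x)$ collects the $dr$-, $d\langle x\rangle^{\mathcal T}$- and $dg_i$-integrals of (\ref{Ito-Follmer}) and $\Delta_u(x)\equiv f(u,x(u),g(u))-f(u,x(u-),g(u))-G(u,x(u-),g(u))\,\Delta x_u$ is its jump-compensator (the $\Delta g_i$-terms drop out because the $g_i$ are continuous). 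Since $C_\cdot(x)$ is continuous in $t$ (as $\langle x\rangle^{\mathcal T}$ and the $g_i(\cdot,x)$ carry no atoms), reading off the discontinuous part at a jump time $u$ of $x$ leaves $\tilde V_{\Phi}(u,x)-\tilde V_{\Phi}(u-,x)=G(u,x(u-),g(u))\,\Delta x_u=\phi(u,x)\,\Delta x_u$.

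For NP-predictability, observe that $\tilde V_{\Phi}(t-,x)=\lim_{s\uparrow t}\tilde V_{\Phi}(s,x)$ is obtained from the displayed representation by replacing $x(t)$ with $x(t-)$ and $\sum_{u\le t}$ with $\sum_{u<t}$; every surviving ingredient depends on $x$ only through $x|_{[0,t)}$: indeed $g(t)$ does so by hypothesis (\ref{NP-pred-g}), $x(t-)$ trivially, and the terms inside $C_\cdot$ because a single time $t$ is Lebesgue-null and is not an atom of $\langle x\rangle^{\mathcal T}$ or of the $g_i$. As $\phi(t,x)=G(t,x(t-),g(t))$ and $\psi(t,x)=\tilde V_{\Phi}(t-,x)-\phi(t,x)\,x(t-)$ are assembled from exactly these data, $\Phi_t(x)=\Phi_t(x')$ whenever $x=x'$ on $[0,t)$. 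Furthermore $s\mapsto f(s,x(s),g(s))$ is RCLL, $C_\cdot(x)$ is continuous, and $s\mapsto\sum_{u\le s}\Delta_u(x)$ is RCLL, so $\tilde V_{\Phi}(\cdot,x)$ is RCLL and hence $\tilde V_{\Phi}(\cdot-,x)$ is LCRL; likewise $\phi(\cdot,x)=G(\cdot,x(\cdot-),g(\cdot))$ is LCRL because $x(\cdot-)$ is LCRL, $g(\cdot)$ is continuous and $G$ is continuous, so $\psi(\cdot,x)$ is LCRL. Thus $\Phi_x(\cdot)$ is LCRL and $\Phi$ meets both requirements in part (i) of Definition~\ref{def:NP-conditions}.

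For NP-self-financing, the F\"ollmer integral $\int_0^t\phi(s,x)\,dx_s$ exists by the first paragraph, and the Stieltjes integral $\int_0^t\psi(s,x)\,ds$ exists because the LCRL function $\psi(\cdot,x)$ is bounded and Borel on $[0,T]$, hence Lebesgue-integrable. Using the jump relation, for every $t$
\[
V_{\Phi}(t,x)=\psi(t,x)+\phi(t,x)\,x(t)=\tilde V_{\Phi}(t-,x)+\phi(t,x)\,\Delta x_t=\tilde V_{\Phi}(t,x)=V_0+\int_0^t\phi(s,x)\,dx_s ,
\]
which is precisely the NP-self-financing identity in the case $r=0$ treated here, exactly as in Proposition~\ref{bond-admiss} (the general-$r$ case is identical once $r\int_0^s\psi(u,x)\,du$ is added to the definition of $\tilde V_{\Phi}$). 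The one delicate point is the identification of the jump of the F\"ollmer integral process with $\phi(t,x)\,\Delta x_t$; I expect this to be the main obstacle, but once the primitive $f$ is fixed it follows immediately from the structure of the It\^o--F\"ollmer jump term, and everything else is routine bookkeeping with left limits.
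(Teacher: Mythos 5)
Your proof is correct and follows essentially the same route as the paper's: predictability of $\phi$ from (\ref{NP-pred-g}), predictability and the LCRL property of $\psi$ via the It\^o--F\"ollmer representation of the integral, and the self-financing identity through the jump relation $\tilde V_{\Phi}(t,x)-\tilde V_{\Phi}(t-,x)=\phi(t,x)\,\Delta x_t$ exactly as in Proposition~\ref{bond-admiss}. The only difference is that you make explicit (via the antiderivative $f(t,\xi,y)=\int_{x_0}^{\xi}G(t,u,y)\,du$, the same device the paper uses in Proposition~\ref{sContinuousStrategiesII}) the details that the paper's proof leaves implicit.
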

\begin{proof}
The NP-predictability of $\phi$ is an obvious consequence of
(\ref{NP-pred-g}). The NP-predictable representation of $\int_0^s
G(r,x(r-),g_1(r,x),g_2(r,x)...,g_m(r,x))dx_r$, which is given by
Ito-F\"ollmer formula, guarantees that also $\psi$ is NP-predictable,
therefore portfolio  $\Phi=(\phi,\psi)$ is NP predictable.

The self-financing property and the LCRL property can be proved
exactly as in Proposition \ref{bond-admiss}.
\end{proof}

\begin{lemma}\label{lemma:neighboors}
 Let $ x^* \in \mathcal{J}_{\tau}^{\sigma,C}(x_0)$ and let $\left\{x^{(n)} \right\}_{n=0,1...}$  with
$ x^{(n)} \in \mathcal{J}_{\tau}^{\sigma,C}(x_0)$
be a sequence of functions converging to $x^*$ on the Skorohod's  topology. Then there exists $M \in \mathbb{N}$
such that if $n > M$ then $ x^{(n)}$ has the same number of jumps as $x^*$. Moreover, if the jump times of $x^*$
are denoted as $s_1,s_2,..., s_m$ and the jump times of $ x^{(n)}$ (for $n > M$) are denoted as
$s_1^{(n)},s_2^{(n)},..., s_m^{(n)}$, then for all $i=1,2...,m$:
\begin{equation} \nonumber
s_i^{(n)} \to s_i
\end{equation}
and
\begin{equation}\label{eq:jump-siz}
\left[x^{(n)}(s_i^{(n)})-x^{(n)}(s_i^{(n)}-)\right] \to \left[x^*(s_i)-x^*(s_i-)\right]
\end{equation}
\end{lemma}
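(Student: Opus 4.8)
<br>

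The plan is to work directly from the definition of Skorohod convergence and exploit the very rigid structure of the trajectories in $\mathcal{J}_{\tau}^{\sigma,C}(x_0)$: every such $x$ is of the form $x(t) = x_0 e^{\sigma z(t)} \prod_{i=1}^{n(t)}(1+a_i)$ with $z$ continuous and $n(\cdot) \in \mathcal{N}([0,T])$, so the jumps of $x$ are exactly the jumps of the step function $n(\cdot)$, they are finite in number, and at a jump time $s$ one has $x(s) = (1+a)\,x(s-)$ for some $a \in C$ with $1+a$ bounded away from $0$ and $\infty$ on the relevant (finite) set of values. The first step is to recall that $x^{(n)} \to x^*$ in the Skorohod metric means there exist strictly increasing continuous bijections $\lambda_n : [0,T] \to [0,T]$ with $\sup_t |\lambda_n(t) - t| \to 0$ and $\sup_t |x^{(n)}(\lambda_n(t)) - x^*(t)| \to 0$ (using the standard characterization of Skorohod convergence, e.g. Billingsley \cite{billingsley}).

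The second step is to pin down the number of jumps. Since $x^*$ has finitely many jumps $s_1 < \cdots < s_m$ of sizes $\Delta x^*(s_i) = x^*(s_i) - x^*(s_i-) \neq 0$, choose $\eta>0$ smaller than half the minimal gap between consecutive $s_i$ (and between $s_1, s_m$ and the endpoints) and smaller than half the minimal jump size. For $n$ large, $\sup_t |x^{(n)}(\lambda_n(t)) - x^*(t)| < \eta/2$ and $\sup_t |\lambda_n(t)-t| < \eta$; since $x^*$ is continuous off $\{s_i\}$, on each closed subinterval at distance $\geq \eta$ from all $s_i$ the function $x^{(n)}\circ\lambda_n$ is uniformly close to a continuous function, so (being itself, up to the time change, a function of the same rigid form) $x^{(n)}$ cannot have a jump whose $\lambda_n$-preimage lies in that region — more precisely, any jump of $x^{(n)}$ forces a jump of comparable size in $x^{(n)}\circ\lambda_n$ at the corresponding point, which would violate closeness to the continuous $x^*$ there. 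Hence all jumps of $x^{(n)}$ have $\lambda_n$-preimages within $\eta$ of some $s_i$; conversely, near each $s_i$ the $\eta/2$-closeness of $x^{(n)}(\lambda_n(\cdot))$ to $x^*$ together with $\Delta x^*(s_i)$ exceeding $2\eta$ forces $x^{(n)}$ to have at least one jump there. A counting/monotonicity argument (the cumulative jump count of $x^{(n)}\circ\lambda_n$ must track that of $x^*$ to within the error) then shows that for $n$ large $x^{(n)}$ has exactly $m$ jumps, one in each $\eta$-neighborhood $\lambda_n((s_i-\eta, s_i+\eta))$.

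The third step is the convergence of the jump locations and sizes. Writing $s_i^{(n)}$ for the unique jump of $x^{(n)}$ in the $i$-th window, we have $|\lambda_n^{-1}(s_i^{(n)}) - s_i| < \eta$; letting $\eta \to 0$ (i.e. repeating the argument for a sequence $\eta_k \downarrow 0$ and the corresponding tail of $n$) gives $\lambda_n^{-1}(s_i^{(n)}) \to s_i$, and since $\sup_t|\lambda_n(t)-t|\to 0$ this yields $s_i^{(n)} \to s_i$. For the jump sizes, evaluate the uniform estimate at $t$ just before and at $\lambda_n^{-1}(s_i^{(n)})$: $x^{(n)}(s_i^{(n)}) = x^{(n)}(\lambda_n(\lambda_n^{-1}(s_i^{(n)})))$ is within $\eta/2$ of $x^*(\lambda_n^{-1}(s_i^{(n)})) \to x^*(s_i)$ (using $s_i^{(n)}\to s_i$ and right-continuity of $x^*$ at $s_i$, noting $x^*$ has no jump in a punctured neighborhood of $s_i$), and similarly $x^{(n)}(s_i^{(n)}-)$ is within $\eta/2$ of $x^*(\lambda_n^{-1}(s_i^{(n)})-) \to x^*(s_i-)$ (using left-continuity of $x^*$ at $s_i$ from the left, where again there are no other jumps); subtracting gives $\Delta x^{(n)}(s_i^{(n)}) \to \Delta x^*(s_i)$, which is \eqref{eq:jump-siz}.

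The main obstacle is the rigorous version of the heuristic in step two — that Skorohod-closeness to a function with exactly $m$ well-separated jumps forces, for large $n$, exactly $m$ jumps in the right places. The cleanest way to handle it is probably to use the $\lambda_n$-characterization together with the observation that both $x^*$ and $x^{(n)}$ have all discontinuities coming from the integer-valued counting functions $n^*(\cdot), n^{(n)}(\cdot)$, so that the number of jumps of $x^{(n)}$ in an interval equals the total variation of $n^{(n)}$ there, and then argue that $n^{(n)}\circ\lambda_n \to n^*$ pointwise at continuity points of $n^*$ with a uniform bound on the jump count near each $s_i$ from the separation of jump sizes. Alternatively one can invoke a standard fact about Skorohod convergence of pure-jump càdlàg functions with a bounded number of jumps; I would cite \cite{billingsley} for whichever form is most convenient and keep the quantitative $\eta$-argument above as the self-contained backbone.
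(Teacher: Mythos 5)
There is a genuine gap at the centre of your step two. Your exclusion of spurious jumps (``any jump of $x^{(n)}$ forces a jump of comparable size in $x^{(n)}\circ\lambda_n$, which would violate closeness to the continuous $x^*$ there'') is valid only if the absolute jump sizes of the $x^{(n)}$ admit a lower bound that is uniform in $n$. Nothing in your write-up supplies this: the remark that $1+a$ is bounded away from $0$ and $\infty$ ``on the relevant (finite) set of values'' is a per-trajectory statement, and your $\eta$ is calibrated only to the finitely many jumps of $x^*$. If $C$ accumulates at $0$, the class $\mathcal{J}_{\tau}^{\sigma,C}(x_0)$ contains functions $x^{(n)}$ converging to $x^*$ uniformly while carrying an extra jump of relative size tending to $0$ (or two nearby jumps nearly cancelling inside one of your windows), so the conclusion about equal jump counts fails; the lemma is only true under the hypothesis $\inf_{c\in C}|c|>h>0$ of Proposition \ref{sContinuousStrategiesII}, where it is applied. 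The paper's proof uses exactly this quantitative ingredient: since every trajectory is positive and RCLL, $x^*_{\min}:=\min_{t}x^*(t)>0$ and every jump of $x^*$ has absolute size greater than $h\,x^*_{\min}$; moreover --- and this is the step you are missing --- Skorohod closeness gives $\inf_t x^{(n)}(t)>x^*_{\min}-\epsilon$ uniformly in $n$, hence every jump of $x^{(n)}$ has absolute size greater than $(x^*_{\min}-\epsilon)h$, and the choice $\epsilon<x^*_{\min}h/(2(h+2))$ then forces the jump times of $x^{(n)}$ to be precisely the images $\lambda_n(s_i)$, with no extras. Neither of your proposed fixes closes this: the counting-function argument still needs the same uniform size bound (without it there is no ``separation of jump sizes'' to exploit), and there is no standard fact in \cite{billingsley} asserting that a Skorohod-convergent sequence eventually has the same number of jumps as its limit --- that statement is simply false without the bound.

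A secondary point: in your step three you let $x^*(\lambda_n^{-1}(s_i^{(n)}))\to x^*(s_i)$ by right-continuity, but if $\lambda_n^{-1}(s_i^{(n)})$ approaches $s_i$ from the left this limit is $x^*(s_i-)$, not $x^*(s_i)$. Once the uniform size bound is in place you can in fact show, as the paper does, that the jump of $x^{(n)}$ in the $i$-th window occurs exactly at $\lambda_n(s_i)$ (a jump strictly before or strictly after $s_i$ would contradict the two uniform estimates on a nondegenerate interval), and then both the convergence of the jump times and of the jump sizes follow at once from $\sup_t|\lambda_n(t)-t|<\epsilon$ and $\sup_t|x^{(n)}(\lambda_n(t))-x^*(t)|<\epsilon$. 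Apart from these repairs, your overall route --- the time-change characterization, windows around the $s_i$, and passage to the limit in the pre- and post-jump values --- is the same as the paper's.
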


\begin{proof}As $x^*$ is c\`ad-l\`ag on $[0,T]$,
$x^*$ attains its maximum and minimum on this interval. On the other hand $x^*(t)>0$ for all $0 \leq t \leq T$,
so $\inf_{t \in [0,T]} x^*(t)=\min_{t \in [0,T]} x^*(t)=x^*_{min}>0$.

If $s$ is a jump time for $x^*$, $|x^*(s)-x^*(s-)| > x^*(s-) h \geq x^*_{min}h$, which means that the absolute size
of the jumps of $x^*$ are strictly bounded below by $x^*_{min}h$.

Let $\epsilon< \frac{x^*_{min}h}{2(h+2)}$, then there exists $M(\epsilon) \in \mathbb{N}$ such that
$d(x^*, x^{(n)})< \epsilon$ for all $n \geq M(\epsilon)$. As the Skorohod's distance between $x^*$ and $x^{(n)}$,
$d(x^*, x^{(n)})< \epsilon$, there exist an increasing function $\lambda_n: [0,T] \rightarrow [0,T]$
with $\lambda_n(0)=0$, $\lambda_n(1)=1$, $|\lambda_n(t)-t|<\epsilon$ for $0 \leq t \leq T$ and

\begin{equation}\label{uniform-epsilon}
\left|x^*(t)-x^{(n)}(\lambda_n(t)) \right| < \epsilon, \; \text{for} \; 0 \leq t \leq T.
\end{equation}

Let $s$ be a jump time of $x^*$. As expression (\ref{uniform-epsilon}) is valid for $0 \leq t \leq T$,
for any increasing sequence of positive real numbers $\{s_i\}_{i=1,2,...}$ converging to $s$ we have that:

\begin{equation}\label{uniform-sequence}
\left|x^*(s_i)-x^{(n)}(\lambda_n(s_i)) \right| < \epsilon, \; \text{for all} \; i.
\end{equation}

As $x^*$ and $x^{(n)}$ are c\`ad-l\`ag function on $[0,T]$ it is possible to take limits in
(\ref{uniform-sequence}) as $i \to \infty$ obtaining:

\begin{equation}\label{uniform-limit}
\left|x^*(s-)-x^{(n)}(\lambda_n(s)-) \right| < \epsilon.
\end{equation}

From expressions (\ref{uniform-epsilon}) and (\ref{uniform-limit}) we have

\begin{equation}\label{jumps-limit}
\left| \left(x^*(s)-x^*(s-)\right)-\left(x^{(n)}(\lambda_n(s))-x^{(n)}(\lambda_n(s)-)\right) \right| < 2\epsilon,
\end{equation}

therefore

\begin{eqnarray*}
\left| x^{(n)}(\lambda_n(s))-x^{(n)}(\lambda_n(s)-) \right| &>& \left| x^*(s)-x^*(s-)\right| -2\epsilon\\
                                                            &>& x^*_{min}h-\frac{2x^*_{min}h}{2(h+2)}\\
                                                            &=& x^*_{min}h  \frac{h+1}{h+2}>0.
\end{eqnarray*}
This means that if $s$ is a jump time for $x^*$, then $\lambda_n(s)$ is also a jump time
for $x^{(n)}$. \\

Consider now that $s'$ is a jump time of $x^{(n)}$ and define $s''=\lambda_n^{-1}(s')$.
We have from (\ref{jumps-limit}) that

\begin{equation} \nonumber 
\left| \left(x^*(s'')-x^*(s''-)\right)-\left(x^{(n)}(s')-x^{(n)}(s'-)\right) \right| < 2\epsilon.
\end{equation}

Thus,

\begin{equation}\label{inverted-inequality}
\left| x^*(s'')-x^*(s''-) \right| > \left| x^{(n)}(s')-x^{(n)}(s'-)\right| - 2\epsilon.
\end{equation}
As $d(x^{(n)},x^*)<\epsilon$ then $\inf_{s \in [0,T]}(x^{(n)}(s))>x^*_{min}-\epsilon>0$ uniformly on n.
Therefore, (\ref{inverted-inequality}) becomes:

\begin{equation*} \nonumber
\left| x^*(s'')-x^*(s''-) \right| > (x^*_{min}-\epsilon)h - 2\epsilon=\frac{x^*_{min}h}{2}>0,
\end{equation*}
which implies that $s''$ is a jump time for $x$ if $s'=\lambda_n(s'')$ is a jump time for $x^{(n)}$.

Previous analysis tells that for $n$ large enough, $x^{(n)}$ has exactly the same number of
jumps as $x^*$ has.

In order to derive (\ref{eq:jump-siz}), consider $\Delta_i x^*$ as the size of the i-th jump of
$x^*$ and $\Delta_i x^{(n)}$ as the size of the i-th jump of $x^{(n)}$. Then:
\begin{eqnarray*} \nonumber
\Delta_i x^{(n)}&=&\left(x^{(n)}(s_i^{(n)})-x^{(n)}(s_i^{(n)}-)\right)\\
                &=&\left(x^{(n)}(\lambda_n(s_i))-x^{(n)}(\lambda_n(s_i)-) \right).
\end{eqnarray*}
Expression (\ref{jumps-limit}) implies that:
\begin{equation} \nonumber
\left(x^{(n)}(\lambda_n(s_i))-x^{(n)}(\lambda_n(s_i)-) \right) \to \left( x^*(s_i)-x^*(s_i-) \right),
\end{equation}
as $n \to \infty$, so  $\Delta_i x^{(n)} \to \Delta_i x$ as $n \to \infty$.
\end{proof}


\begin{thebibliography}{99}

\bibitem{alvarez} A. Alvarez, S. Ferrando and P. Olivares, Analytical
 arbitrage and hedging results.
 {\it Preprint} January 2011.


\bibitem{valkeila-1} C. Bender, T. Sottinen and E. Valkeila, Arbitrage with fractional
Brownian Motion? {\it Theory of Stochastic Processes} {\bf 12} 28, no. 3-4, 2006.

\bibitem{valkeila-2} C. Bender, T. Sottinen and E. Valkeila, Pricing by hedging
and no-arbitrage beyond semimartingales {\it Finance and Stochastics} {\bf 12}, 441--468, 2008.

\bibitem{bick} A. Bick and W. Willinger, Dynamic spanning without probabilities,
{\it Stochastic processes and their Applications} {\bf 50} 349--374,
1994.

\bibitem{billingsley} P. Billingsley, Convergence of Probability Measures, Wiley New York, 1968.

\bibitem{carr} P. Carr, Hedging Poisson Jumps, {\it Working paper}. 2005.

\bibitem{cheridito} P. Cheridito, Arbitrage in fractional Brownian motion models {\it Finance
and Stochastics} {\bf 7}, 533–-553, 2003.


\bibitem{cont}
R. Cont and P. Tankov, Financial Modelling with Jump Processes,
Chapman \& Hall, CRC, 2004.




\bibitem{russo}
R. Coviello and F. Russo, Modeling financial assets without
semimartingale {\it Working Paper}, 2006. 


\bibitem{delbaen}
F. Delbaen  and W. Schachermayer, A general version of the fundamental
theorem of asset pricing, {\it Math. Ann.}, {\bf 300}, 463--520, 1994.

\bibitem{durret} R. Durrett, Probability: Theory and Examples, 2nd ed. Belmont, Calif.: {\it Duxbury Press},
1996.

\bibitem{follmer} H. F\"{o}llmer, Calcul d'It\^{o} sans probabilit\'{e}. Seminaire de Probabilit\'{e} XV.
Lecture Notes in Math. No. 850. Springer Berlin, 143--150,  1981.

\bibitem{follmer-yor} H. F\"{o}llmer, C. Wu and M. Yor., On weak Brownian motions of
arbitrary order. {\it Ann. Inst. H. Poincar´e Probab. Statist.}, {\bf 36(4)}: 447–-487,
2000.

\bibitem{freedman} D. Freedman, Brownian Motion and Diffusions. {\it Springer}, 1983.

\bibitem{He} H. He, W. P. Keirstead and J. Rebholz, Double Lookbacks, {\it Mathematical Finance}
{\bf 8:3} 201--228, 1998.


\bibitem{jarrow} R.A. Jarrow, P. Protter and Hasanyan Sayit, No arbitrage without semimartingales,
{\it Annals of Applied Probability} Vol. 19, No. 2, 596–-616, 2009


\bibitem{klein} R. Klein and E. Gin\'{e}, On quadratic variation of processes
with Gaussian increments, {\it The Annals of Probability} {\bf 3},
N. 4, 716-721, 1975.

\bibitem{krugman} P.R. Krugman, Target zones and exchange rate dynamics, {\it Quarterly Journal of Economics}
{\bf 106} 669--682, 1991.


\bibitem{Kloeden} J. Schoenmakers and P. Kloeden, Robust option replication for a Black-Scholes
model extended with nondeterministic trends, {\it JAMSA} {\bf 12}, 113--120, 1999.


\bibitem{sonderman}
D. Sondermann,Introduction to Stochastic Calculus for Finance,
{\it Springer Verlag}, 2008.

\bibitem{sottinen} T. Sottinen, Non-Semimartingales in Finance. University of Vaasa,
1st Northern Triangular Seminar 9-11 mar5ch 2009, Helsinski,
University of Technology.

\bibitem{stolz} W. Stolz, Some small ball probabilities for Gaussian processes
under nonuniform Norms. {\it Journal of Theoretical Probability}, Vol. 9, No. 3, 613--630, 1996.


\bibitem{valkeila}
E. Valkeila, On the Approximation of Geometric Fractional
Brownian Motion, in Optimality and Risk - Modern Trends in Mathematical Finance
{\it Springer Verlag} Berlin Heidelberg, 2009.

\bibitem{zahle} M. Z\"ahle, Long range dependence, no arbitrage and the Black-Scholes
formula. {\it Stoch. Dyn.}, 2(2): 265–-280, 2002.

\end{thebibliography}
\end{document}